\newcommand{\lmss}[1]{\textrm{\normalfont{{\fontfamily{lmss}\selectfont #1}}}}
\newcommand{\sbt}{\,\begin{tikzpicture}[baseline=(X.base)]%
		\node[draw, fill,black,circle, inner sep=1pt] at (0,0.1) {};
		\node[circle,inner sep=0pt,outer sep=0pt] (X){$\ $};
	\end{tikzpicture}%
	\,
}
\newcommand{\freeunitary}{{U^{\langle n \rangle}}}
\newcommand{\mfreeunitary}{{\lmss{U}^{\langle n \rangle}}}
\newcommand{\munitaryfd}{\lmss{U}_{N}^{\mathbb{K}}}
\newcommand{\udualgroup}{\mathcal{O}\langle n \rangle}
\newcommand{\runitaryalg}{\mathcal{R}\mathcal{O}\langle n \rangle}
\newcommand{\runitaryfdN}{U^{\mathbb{K}}_{d_{N}}}
\newcommand{\mdnoise}{\lmss{W}}
\def\Alg{{\rm Alg}}
\def\biMod{{\rm biMod}}
\def\rCoMod{{\rm rCoMod}}
\def\lCoMod{{\rm lCoMod}}
\def\coModAlg{{\rm coModAlg}}
\def\freem{\mathrel{\dot\sqcup}}
\def\free{\mathrel{\sqcup}}
\def\Hom{{\rm Hom}}
\def\id{{\rm id}}
\def\Prob{{\rm Prob}}
\def\path{{\sf P}}
\def\seqlasso{{\overset{\circ}{\mathcal{P}}(\lmss{lassos})}}
\def\Obj{{\mathrm{Obj}}}
\def\Vect{{\mathrm{Vect}}}
\def\C{{\mathbb C}}
\def\R{{\mathbb R}}
\def\H{{\sf H}}
\def\F{{\mathcal{F}}}
\def\U{{\mathbb{U}}}
\def\K{{\mathbb{K}}}
\newtheorem{theorem}{Theorem}
\newtheorem{remarque}{Remark}
\newtheorem*{notation}{Notation}
\newtheorem{example}{Example}
\newtheorem{assumptions}{Assumptions}
\newtheorem{construction}{Construction}
\theoremstyle{plain}
\newtheorem{proposition}[theorem]{Proposition}
\newtheorem{corollaire}{Corollary}
\newtheorem*{theorem*}{Theorem}
\newtheorem{lemma}[theorem]{Lemma}
\theoremstyle{definition}
\newtheorem{definition}[theorem]{Definition}
\author{Nicolas Gilliers\fnref{email}\fnref{onleave}}
\address{Sorbonne Universit\'e, Sorbonne Paris Cit\'e, CNRS,\\ Laboratoire de Probabilit\'es Statistique et Mod\'elisation, LPSM, F-75005 Paris, France}
\title{Quantum Holonomy Fields}
\tikzset{commutative diagrams/.cd,
arrow style=tikz,
diagrams={>={Computer Modern Rightarrow[length=5pt,width=3pt]}},
}
\begin{document}
\begin{abstract}
	We investigate lattice and continuous quantum gauge theories on the Euclidean plane with a structure group that is replaced by a $H$-algebra. $H$-algebras are non-commutative analogues of groups and contain the class of Voiculescu's dual groups. We are interested in non-commutative analogues of random gauge fields, which we describe through the random Holonomy that they induce. We propose a general definition of a Holonomy Field with symmetries displaying the structure of a $H$-algebra and construct such a field starting from a quantum L\'evy process on a $H$-algebra in the category of probability spaces. We call them Quantum Holonomy Fields. We also consider the more abstract case of an $H$-algebra in a given algebraic category. This yields the notion of a Categorical Holonomy Field. As an application, we define higher dimensional generalizations of the so-called Master Field on the plane.
\end{abstract}
\maketitle

\tableofcontents

\section{Introduction}

\subsection{Background}
The present work intends to define gauge theories on certain non-commutative analogues of the space of functions on a (topological, Lie) group. The key notions are the ones of \emph{Random Planar Holonomy Fields} introduced by T.Lévy in \cite{levy2008two} and \emph{$H$-algebras} introduced by J. Zhang in \cite{zhang}. To the risk of being negative, for the knowledgeable reader, this does not include Drinfeld or Woronowicz (compact) quantum groups\footnote{the coproduct of an $H$-algebra takes values in the free product of $H$ with itself instead of the tensor product.} \cite{woronowicz1998compact}, see also \cite{meusburger2021hopf, buffenoir2002hamiltonian} for gauge theories over (Ribbon) Hopf algebras.

In the sequel, we drop the adjective \emph{planar}. In a nutschell, a \emph{Random Holonomy Field} is a group-valued random process indexed by \emph{rectifiable loops} (loops with finite length, well-approximated by piecewise linear loops, in a very vague sense) on the plane, instead of time points as is usual for stochastic processes. One may think of them as Lévy processes indexed by loops since their \emph{increments} have \emph{stationary distribution} and are \emph{classically independent}. Of course, each of these terms -- increments, stationarity and distributions should be properly defined. For example, the stationarity of their distributions relates to invariance by area-preserving homeomorphisms acting on loops. This will be discussed at length in the forthcoming work.
Let us mention that a Brownian Planar Random Holonomy Field is a \emph{quantized Yang-Mills} field on the Euclidian plane, we will come back to this point further below in this introduction.

The theory of \emph{Quantum Lévy processes} builds on a non-commutative equivalent to the notion of classical Lévy processes. It has been largely developed in the work of Schürmann \cite{schurmann} and Franz \cite{franz2004theory}. By analogy, classical probabilities are generative of developments for the non-commutative theory. In particular, basic results on classical Lévy processes extend partially to quantum Lévy processes, such as the Lévy-Khinchine classification of their generators. The present work originates from analogy: if Random Holonomy Fields are loop-indexed Lévy processes, what is a loop-indexed Quantum Lévy process, which will be called a \emph{Quantum Holonomy Field}? This work partially answers this question with a proposal for a definition of a Quantum Holonomy Field indexed by piecewise geodesic loops on the plane. To the risk of being repetitive, in this work \emph{loops are drawn on the plane and are piecewise \emph{affine}: they are made up of a finite number of segments concatenated together}. This is \emph{not} the appropriate index set; it should be completed to complete metric space, the space of all rectifiable loops, for a distance introduced in \cite{levy2008two}. The extension of our general constructions to rectifiable loops will be discussed elsewhere.

For this work, we very much strongly rely on \cite{cebron2017generalized} wherein the authors introduce \emph{generalized Master Fields}, which will become a subclass of the Quantum Holonomy Fields defined in this work. In our terminology, these correspond to \emph{$\mathcal{O}\langle 1 \rangle$ \emph{free} Quantum Holonomy Field} or \emph{$\mathcal{F}(\mathbb{U}(N,\mathbb{C}))$ \emph{classical} Quantum Holonomy Field}. The words $\emph{free}$ and $\emph{classical}$ refer to types of (non-commutative) independences; \emph{Voiculescu's freeness} \cite{voiculescu1985symmetries} and the usual independence from classical probability theory. We have used the notation $\mathcal{O}\langle 1 \rangle$ for the first dual Voiculescu group. We propose a far-reaching generalization of the notion of the generalized Master field. In fact, as a first take, we will replace $\mathcal{O}\langle 1 \rangle$ with any $H$-algebras. We will be even more general, and work with $H$-algebras in any algebraic categories. We will also go beyond classical and free independence and allow most of the notions of non-commutative independence. Our main source of examples (see Section \ref{examples}) will be the class of $\mathcal{O}\langle n \rangle$ $H$-algebras, or \emph{dual Voiculescu groups} and free Lévy processes on these dual groups (as the initial data to define a $\mathcal{O}\langle n \rangle$ (free) Quantum Holonomy Fields).

Let us first start with some general remarks about Yang-Mills theory and introduce fundamentals about \emph{Random Holonomy Fields} \cite{levy2008two}. In 1954 Yang and Mills initiated in \cite{yang1954conservation} a theory at the roots of modern particle physics, \emph{known nowadays as the Yang-Mills theory}. This theory is the culminating point of a process started decades beforehand: physicists do not longer consider symmetries as a means to reduce computations, but rather as being responsible for many of the physical interactions. Contributions to the development of the Yang-Mills theory come from both the community of physicists and mathematicians, see \cite{gopakumar1995mastering,gross1988maxwell, driver1991two,singer1995master,sengupta1997gauge}.

An output of the Yang-Mills theory is a Lagrangian, which is a dynamic -- a particle -- the input being, essentially, a group of symmetries $G$ (called at times the \emph{gauge group}) and a \emph{space-time} $\Sigma$ on which matter particles evolve. The $\mathbb{U}(N)$--Yang-Mills theory, the theory with $G=\mathbb{U}(N,\mathbb{C})$ has now a long history, see the survey \cite{hooft200550}. The \emph{quantum Euclidian Yang-Mills theory} \cite{driver1991two} describes informally a \emph{connection} chosen at random over a \emph{principal bundle} with a compact structure group $G$. In two dimensions, with $\Sigma=\mathbb{R}^2$ or a compact surface with a Riemannian metric, the theory has been extensively studied, see \cite{levy} for a historical account. 
The Gordian knot of quantizing is a rigorous construction of a probability measure on an infinite dimensional space. Ideally, we would, on the space of all connections, construct a Gaussian measure. This is challenging since, in infinite dimensions, there are no translations invariant Lebesgues measure (here translation means action by the gauge group). Quantizing Yang-Mills theory on a four-dimensional Lorentzian space, even on a Euclidian three-dimensional space-time, is still an open question, see however \cite{chandra2022stochastic} for recent progress about the three-dimensional theory. The point-of-view adopted in \cite{driver1991two,levy2008two} and of the current article focuses on making sense and studying the \emph{expectation of Random holonomies}, also called \emph{Wilson loops} and defining Random Holonomy Fields.
By using this approach, in two dimensions, the quantization of Yang-Mills theory given a (compact) surface $\Sigma$ and a compact Lie group $G$ of symmetries, has been rigorously addressed by T. L\'evy in his thesis\cite{levy2008two}. By using gauge fixing, it is even possible to use ``Gaussian calculus'', this opens the possibilities of rigorous calculations in the discrete and continuous limit. This is the point-of-view developed by L. Gross, C. King and A. Sengupta in \cite{GROSS198965} and B. Driver \cite{driver1989ym2}. More in detail, a connection is most accurately ``described'' through its Holonomy or \emph{parallel transport} along any closed loop drawn on the plane. This is an element of the structure group $G$ attached to each loop on the plane. Under the (so far) non-existing Yang-Mills measure $\mu_{\mathrm{YM}}$, a random connection yields a random Holonomy (a random matrix if $G$ is a matrix group) around any closed loop. 
Hence, informally, the quantized Yang-Mills theory yields a Random Holonomy Field. The distribution of these random holonomies, (the joint distributions of the random matrices associated with any finite sequence of loops on the plane when $G=\mathbb{U}(N,\mathbb{C})$) is \emph{invariant by conjugation by any element of the structure group $G$}. It is also \emph{invariant by the action of any area-preserving morphism}. For example, the Holonomy of a \emph{simple loop} (it has no self-intersections) only depends on the area enclosed by the loop. Finally, holonomies around loops enclosed in disjoint discs are \emph{classically independent}.


In \cite{t1993planar} G. ’t Hooft observed that quantities of interest in the $\mathbb{U}(N,\mathbb{C})$ become simpler in the limit $N$ tends
to infinity.
The high-dimensional limit of the $\mathbb{U}(N,\mathbb{C})$ was
extensively studied by physicists, see for example \cite{kazakov1981wilson}, and the idea emerged that
there should be a universal deterministic large $N$ limit to a broad class of matrix models. This limit was named the \emph{Master Field}.
In our terminology, this Quantum Holonomy Field is associated, in a sense that will be made precise in the present work, to the $H$-algebra $\mathcal{O}\langle 1 \rangle$ and to the \emph{free unitary Brownian motion} \cite{biane1998free}, a free Lévy process on $\mathcal{O}\langle 1 \rangle$. 

Finally, there exists a correspondence between Random Holonomy Fields classical, Lie group valued, Lévy processes, whose distribution satisfies some invariance. This correspondence will be expounded in Section \ref{sec:defiZhangHF}. It extends to the non-commutative setting: the distribution of a Quantum Holonomy Field is determined by a Quantum Lévy process associated with it. It is also possible to follow the reverse path: any Quantum Lévy process with non-commutative distribution satisfying some invariance yields a Quantum Holonomy Field. 
 Our contributions are discussed in Section \ref{sec:contributions} where the reader will find a detailed outline of each Section of the present work. 
\subsection{Contributions}
\label{sec:contributions}
\begin{enumerate}
\item The present work introduces the notion of Quantum Holonomy Field, more general than the notion of generalized Master Field introduced in \cite{cebron2017generalized}. A Quantum Holonomy Field in particular depends on the choice of a $H$-algebra (which plays the role of the gauge group, the group of symmetries) and of a Quantum Lévy process over the same $H$-algebra whose distribution displays braid-invariance. The generalized Master Fields defined in \cite{cebron2017generalized} correspond to the special case of a $H$-algebra equal to $\mathcal{O}\langle 1 \rangle$ or $\mathcal{F}(\mathbb{U}(N,\mathbb{K}))$. We give a rather abstract definition of a \emph{categorical Holonomy field} to allow even more non-commutative independences to enter in the notion of Quantum Holonomy Field than the Muraki's five \cite{muraki2002classification}, in particular multi-states independences, e.g conditional freeness \cite{bozejko1996convolution}.

\item We construct in particular \emph{higher dimensional free Master Fields}, based on the series $\mathcal{O}\langle n \rangle$ of dual Voiculescu group $\mathcal{O}\langle n \rangle$. Concretely, the distribution of this field is the limit in high-dimensions of observables of the Quantum Holonomy Field over $\mathcal{F}(\mathbb{U}(N, \mathbb{C}))$ associated with a Brownian diffusion on $\mathbb{U}(N,\mathbb{C})$ (it is called the Yang-Mills field further in the text), the commutative Hopf algebra of function of the group $\mathbb{U}(N,\mathbb{C})$ invariant with respect to a sub-group of $\mathbb{U}(N,\mathbb{C})$ isomorphic to a lower dimensional unitary group. We propose Makeenko-Migdal equations for this field in Section \ref{sec:perspectives} but we do not prove them.

\item The abstract setting we develop for constructing Quantum Holonomy Fields allows us to define and provide an example of a Quantum Holonomy Field with symmetry group an \emph{amalgamated $H$-algebra}: a $H$-algebra in a category of bimodules algebras over a fixed algebra. An example of such a field is given by considering the expectation of observables of the Yang-Mills field over $\mathbb{U}(N)$ invariant with respect to \emph{a product of lower dimensional unitary groups of possibly different dimensions}.

\end{enumerate}

Once again, we emphasize that our contribution is purely \emph{algebraic}; the fields defined and constructed are indexed by piecewise linear loops on the plane. In \cite{cebron2017generalized}, the constructed fields are indexed by rectifiable loops on the plane.

\subsection{Organization of the paper}
To help the reader find their way through the paper, we have chosen to summarize the content of each section and to add pointers to the main definitions and results of the present work. The first two sections, besides the introduction, should be familiar to anyone comfortable with the categorical formulation of non-commutative independence, following the work of Schürmann, Ben-Gorbhal and U. Franz \cite{franz2004theory} among others. We added Annexes wherein facts about algebraic categories are recalled to make the paper self-contained and missing proofs of propositions, be they original to the present work or not, are provided.
\subsubsection{Outline Section \ref{sec:zhangalgebras}}
In Section \ref{sec:zhangalgebras} we define what $H$-algebras are. They should be distinguished from Woronowicz quantum groups and \emph{easy quantum groups}\footnote{https://www.math.uni-sb.de/ag/speicher/weber/QuickIntroductionEasyQGHomepage.pdf} \emph{à la} Banica and Speicher \cite{banica2009liberation}, though they possess the same set of structural morphisms to the exception that they are objects in an \emph{algebraic category}(see the Annexes). They are an answer to a problem of representation of a functor from the category of (possibly non-commutative) algebras to the category of groups. This was first pointed out by J. Zhang \cite{zhang}. This representation property is central to the present work.
We provide many examples of $H$-algebras.
In this section, we develop a graphical calculus for $H$-algebras akin to the graphical calculus for Hopf algebras (even in commutative Hopf algebras); this helps to improve the readability of the proofs of the propositions in this section.
Secondly, we study the category of comodules over a $H$-algebra in an abstract algebraic category. The main results are stated in Propositions \ref{lemma_alge_alg_alg_star} and \ref{lem:zhangzhang}: we prove that comodules over a $H$-algebra form an algebraic category and that a $H$-algebra always comes equipped with the structure of a comodule-algebra over itself, the {\em conjugation co-action} which is an abstract generalization to $H$-algebras of the dual conjugation action of a group on its space of functions.


\subsubsection{Outline Section \ref{noncommasterfields} }
This section is the core of the present work. Based on the work \cite{cebron2017generalized} wherein, we recall, classical and free Master Fields are defined, we give definitions of a Quantum Holonomy Field and of Categorical Holonomy Field based on a $H$-algebra and given a certain notion of non-commutative independence. We distinguish two cases; the chosen $H$-algebra is in a category of (operator-valued) probability spaces, this is the case addressed in Definition \ref{definition_master_field}(we use the terminology Quantum Holonomy Field in this case) or the chosen $H$-algebra is an object in an abstract algebraic category, see Definition \ref{definition_master_field_categorical} (we use the terminology Categorical Holonomy Field in this second case). In this respect, Definition \ref{definition_master_field} strictly comprises Definition \ref{definition_master_field_categorical}.
In the remaining part of this section, we address the problem of constructing a Categorical Holonomy Field (given a notion of independence) based on a $H$-algebra starting from a direct system of objects (akin to a probability space but located in an abstract category) indexed by \emph{lassos} drawn on the plane and pin-point the required invariances of this family that make possible the construction of a Categorical Holonomy Field from this system.
To the knowledgeable reader, the appearance of \emph{lassos} should not be surprising, since certain subsets of such loops yield free (in the sense that no algebraic relations are fulfilled) families in the group of reduced loops on the plane.
Theorem \ref{maintheoremzhanghol} is the central result of this work: a Categorical Holonomy Field can be built from any Quantum Lévy process with a braid-invariant distribution (this notion is recalled in the concerned section)

\subsubsection{Outline Section \ref{examples}}
In this section, we apply our Theorem \ref{maintheoremzhanghol}. First, we define for each integer $n\geq 1$, the \emph{free $n$-dimensional Brownian motion}. It is a free Quantum Lévy process over the $n$ free unitary dual group, the solution of a free stochastic differential equation. Alternatively, the free $n$-dimensional Brownian motion is the limit in high-dimensions and in non-commutative distribution of \emph{square blocks extracted from a Brownian motion on the unitary group}. We show that this process has the required invariance to apply Theorem \ref{maintheoremzhanghol}; the Quantum Holonomy Field we obtain is called the free Master Field of dimension $n$.
In a second example, we build a Quantum Holonomy Field associated with the limiting non-commutative distribution of \emph{rectangular blocks} extracted again from a Brownian motion on the unitary group.

\

{\bf Acknowledgements: }
I would like to thank Prof. Thierry Lévy for his help and support in developing this work, Dr. Franck Gabriel for his remarks on the first version of the article and the anonymous referee whose acute comments and advice have greatly pushed the article to a publishable state. The author acknowledges support from NTNU, (Trondheim, Norway), the University of Greifswald (Greifswald, Germany) and the ANR STARS\footnote{Project-ANR-20-CE40-0008} (IMT, Toulouse, France).

\section{$H$-algebras and categorical independence}
\label{sec:zhangalgebras}

{

In the first part of this Section, we review $H$-algebras, starting with a definition. Recall that all our algebras are over the field of complex numbers, unital and associative. In the second part, we review the categorical formulation (see \cite{franz2004theory}) of non-commutative independences.

}

The definition of $H$-algebras is very similar to that of Hopf algebras, the main difference being that the coproduct takes its values not in the monoidal product, but in the free product of the algebra with itself. There is therefore a variety of notions of $H$-algebras, corresponding to various notions of free products, or more properly speaking of \emph{categorical coproducts}. Since we will use several categorical coproducts, we prefer not to specify a particular one at this point, and we choose instead to adopt the more abstract point of view of {\em algebraic categories}, see Section \ref{sec:algcategories} and Section VI of the monograph\footnote{\url{http://www.tac.mta.ca/tac/reprints/articles/17/tr17.pdf}}\cite{adamek2004abstract}.
The main point of this Section is to swiftly introduce algebraic categories, monoidal categories and $H$-algebras to the readers.
We use the language of category theory and therefore, for the reader not acquainted with it, we refer him to the monograph \cite{adamek2004abstract} for a detailed exposition. In particular,
equalities between morphisms are expressed as a commutativity property of some \emph{diagrams}. A \emph{diagram} is a directed graph with objects labelled vertices and morphisms labelled edges. We say that a diagram is commutative if the composition of morphisms along any two directed paths with the same source and the target yield the same result. For clarity, we will frequently drop labels of edges if it is clear from the context how morphisms are assigned to edges of a given diagram.

\subsection{Algebraic categories}
\label{sec:algcategories}
Recall that in a category $\mathcal C$, an object $k$ is called an \emph{initial object} if, for every object $A$, there is exactly one morphism in $\Hom_{\mathcal{C}}(k, A)$. Recall also that a coproduct between two objects $A$ and $B$ is the data of an object $C$ and two morphisms $\iota_{A}:A\to C$ and $\iota_{B}:B\to C$ such that for any object $D$ and any two morphisms $f:A\to D$ and $g:B\to D$, there exists a morphism $h:C\to D$ such that $f=h\circ \iota_{A}$ and $g=h\circ \iota_{B}$. At times, it will be necessary to refer to $C$ in the notations and we will write $\iota_A^C$ and $\iota_B^C$ for $\iota_A$ and $\iota_B$ respectively. 

If a coproduct of two objects $A$ and $B$ exists, it is unique up to isomorphism and it is denoted by $A\free B$. Moreover, with the current notation, the morphism $h$ is denoted by $f\freem g: A\free B\to D$. The dot in the symbol $\freem$ indicates that the elements of $D$ obtained by applying $f$ and $g$ to the elements of $A$ and $B$ are multiplied in $D$. There is also a natural way of combining two morphisms $f:A\to D_{1}$ and $g:B\to D_{2}$ into a morphism $f\free g :A\free B \to D_{1} \free D_{2}$, by first forming $\iota_{D_{1}}\circ f:A\to D_{1}\free D_{2}$ and $\iota_{D_{2}}\circ g : B \to D_{1}\free D_{2}$ and then setting $f\free g=(\iota_{D_{1}}\circ f)\freem(\iota_{D_{2}}\circ g)$.

\begin{definition}[Algebraic category]
	\label{def:algcat}
	An algebraic category is a category with an initial object in which any two objects admit a coproduct.
\end{definition}
The product of a sequence of objects $(A_1,\ldots, A_n)$ is well defined. In particular, if $n=3$, $A_1 \sqcup A_2 \sqcup A_3 = (A_1\sqcup A_2)\sqcup A_3 = A_1 \sqcup (A_2\sqcup A_3)$ and the injections $\iota^{A_1\sqcup A_2 \sqcup A_3}_{A_1},\iota^{A_1\sqcup A_2 \sqcup A_3}_{A_2},\iota^{A_1\sqcup A_2 \sqcup A_3}_{A_3}$ satisfy
$$
\iota_{A_1,A_2}^{(A_1\sqcup A_2),A_3}\iota^{A_1,A_2}_{A_1} = \iota^{A_1\sqcup A_2 \sqcup A_3}_{A_1} = \iota^{A_1,(A_2\sqcup A_3)}_{A_1}
$$
See for example the first pages of \cite{zhang}
The first example of an algebraic category is provided by the category of complex unital commutative algebras endowed with the usual tensor product of algebras. The initial object is $\mathbb{C}$.
As the second example of an algebraic category, let $\Alg$ be the category of complex unital associative algebras (with algebra morphisms as morphisms of the category $\Alg$). The algebra $\C$ is an initial object of this category (here we use the fact that the algebras are unital). Moreover, given two algebras $A$ and $B$, we can form the algebra $A\sqcup B$ freely generated by $A$ and $B$, the units of $A$ and $B$ being identified with the unit of $A\free B$. This algebra can be described as a quotient of the tensor algebra:
\[A\free B=T(A\oplus B)/(a\otimes a'-aa', b\otimes b'-bb', 1_{A}-1,1_{B}-1 : a,a'\in A, b,b'\in B)\]
Concretely, $A\free B$ is the vector space of all formal linear combinations of alternating words in elements of $A$ and $B$. Any occurrence of the units of $A$ or $B$ in one of these words can be ignored, and the multiplication of words is given by concatenation followed, in the case where they belong to the same algebra, by the multiplication  of the last letter of the first word with the first letter of the second.
Then $\free$ is a coproduct in the category $\Alg$ so that $\Alg$ is an example of an algebraic category.

We will consider the following other examples of algebraic categories.
\begin{example}
\label{ex:algcat}
\begin{enumerate}[\indent 1.]
	\item The category $\Alg^{\star}$ of involutive (complex unital associative) algebras. The initial object of this category is still $\C$. The coproduct of two objects $A$ and $B$ of $\Alg^{\star}$ is, as an algebra, their coproduct in $\Alg$. Moreover, $A\free B$ is endowed with the unique antimultiplicative involution which extends those of $A$ and $B$. Concretely, the involution of $A\sqcup B$ reverses the order of the letters in a word and transforms each letter according to the involutions of $A$ and $B$.

	\item \label{enum:alg*} The category $\Alg(R)$ of (complex unital associative) algebras endowed with a structure of bimodule over a fixed unital algebra $R$. The initial object in this category is $R$. The coproduct of two objects $A$ and $B$ is the coproduct of the category $\Alg$ with amalgamation over $R$. It can be described as
	      \begin{align*}
		      \hspace{0.8cm}A\free_{R} B & =(R\oplus \bigoplus_{n\geq 1}T^{n}(A\oplus B))/(ar\otimes r'a'-arr'a', br\otimes r'b'-brr'b', ar\otimes b-a\otimes rb, \\
		                                 & \hspace{1.7cm} br\otimes a- b\otimes ra,r1_{A}r'-rr',r1_{B}r'-rr' : a,a'\in A, b,b'\in B,r,r'\in R).
	      \end{align*}
	      In plain words, it is the free product of $A$ and $B$ in which multiplication by elements of $R$ can circulate between neighbouring factors. Similarly, we denote by $\Alg^{\star}(R)$ the category of bimodules-algebras over an unital involutive associative algebra $R$, endowed with an involution compatible with the two actions of $R$ and the involution on $R$.
	\item The category $\mathbb{Z}_{2}$-Alg of complex unital associative $\mathbb{Z}_{2}$-graded algebras is algebraic. A $\mathbb{Z}_{2}$-graded algebra is the data  of a complex unital algebra and an unipotent morphism on that algebra. Morphisms are unital morphisms of algebras that preserve the grading. The free product of two graded algebras $(A,D_{A})$ and $(B,D_{B})$, is as an algebra $A \sqcup B$ and $D_{A\free B} = D_{A}\free D_{B}$. The initial object is $\mathbb{C}$ endowed with the trivial grading.

	\item \label{freeprodgrp} The category $\mathcal{G}rp$ of groups. The coproduct is the free product of groups and the initial object is the group having only one element. If $G$ and $H$ are groups, a word in $G$ and $H$ is a product of the form ${\displaystyle s_{1}s_{2}\cdots s_{n},}$ where each $s_{i},~i\leq n$ is either an element of the group $G$ or an element of the group $H$. Such a word may be reduced using the following operations:
	      \begin{enumerate}[\indent 1.]
		      \item Remove an instance of the identity element (of either $G$ or $H$).
		      \item Replace a pair of the form $g_{1}g_{2}$ by its product in $G$, or a pair $h_{1}h_{2}$ by its product in $H$, with obvious notation.
	      \end{enumerate}
	      Every reduced word is an alternating product of elements of $G$ and elements of $H$. The free product $G \sqcup H$ is the group whose elements are the reduced words in $G$ an $H$, under the operation of concatenation followed by reduction.
	\item The category $\biMod(R)$ of bimodules over a fixed unital algebra $R$ can be endowed with a coproduct with injections. Let $A$ and $B$ two $R$-bimodules. The product $A\free B$ in $\biMod(R)$ is, as a vector space, isomorphic to the sum of vector spaces $A \oplus B$. The $R$ bimodule structure on $A\oplus B$ is the sum of the two structures:
	      \begin{equation*}
		      r(a \oplus b)r^{\prime} = rar^{\prime} \oplus rbr^{\prime}, a\in A,~ b\in B,~ r,r^{\prime} \in R.
	      \end{equation*}
	      The initial object is again $R$.
	\item The category $\coModAlg(H)$ of comodule-algebras over a $H$-algebra $H$, which we will describe later (see Section \ref{sec:ZhangcoMod}).
\end{enumerate}
\end{example}
\subsection{$H$-algebras} 
We can now give the definition of a $H$-algebra in an algebraic category.
\label{zhangalgebras}
\begin{definition}[$H$-algebra \cite{zhang}]\label{def:Zhang}
	Let $\mathcal{C}$ be an algebraic category with initial object $k$ and coproduct $\free$.
	A $H$-algebra of $\mathcal{C}$ is a quadruplet $(H,\Delta,\varepsilon,\mathcal{S})$ where
	\begin{enumerate}[\indent 1.]
		\item $H$ is an object of $\mathcal C$,
		\item $\Delta : H \to H\free H$ is a morphism of $\mathcal C$ such that $(\Delta\free \id_{H})\circ \Delta=(\id_{H}\free \Delta)\circ \Delta$,
		\item $\varepsilon:H\to k$ is a morphism of $\mathcal C$ such that $(\varepsilon \freem \id_{H})\circ \Delta=\id_{H}=(\id_{H}\freem \varepsilon)\circ \Delta$,
		\item $\mathcal{S}:H\to H$ is a morphism of $\mathcal C$ such that $(\mathcal{S}\freem \id_{H})\circ \Delta=\eta\circ \varepsilon=(\id_{H}\freem \mathcal{S})\circ \Delta$, where $\eta$ is the unique morphism from $k$ to $H$.
	\end{enumerate}
\end{definition}

\begin{remarque}
Definition \ref{def:Zhang} is, formally, very similar to the definition of Hopf algebras. More succinctly, replacing $\mathcal C$ by the category of (complex unital) algebras, $k$ by $\C$ and $\free$ by the monoidal product in Definition \ref{def:Zhang} makes the definition \ref{zhangalgebras} closer the one of a Hopf algebra. However, the monoidal product is not a coproduct in the category of algebras, and Hopf algebras are not a special case of $H$-algebras, and the antipode of a Hopf algebra is \emph{not} algebra morphism; it is an antimorphism. Instead, if one chooses for the category $\mathcal{C}$ the category of (complex unital) \emph{commutative} algebras, then the tensor product of two algebras yield a coproduct and a $H$-algebra is a commutative Hopf algebra. 
\end{remarque}
\begin{remarque}
In an algebraic category, any object $A$ is an ``algebra'' : it comes equipped with an associative product $m_A:A\sqcup A\to A$, we refer to \cite{zhang} for the meaning of associative in this case. The product $m_A$ is uniquely caracterized by $m_A \circ \iota_1 = \id_A$, $m_A \circ \iota_2 = \id_A$ where $\iota_1$ and $\iota_2$ are the two injection of $A$ into $A\sqcup A$.

From Proposition 2.1 point 4 in \cite{zhang}, the structural morphisms $\Delta ,\varepsilon, S$ of a $H$-algebra are all ``algebra'' morphisms, in particular $m_{A\sqcup A} \circ \Delta = m_{A}$.
\end{remarque}
\begin{remarque}
\label{rk:tensorproductnotcoproduct}
It is instructive to understand the reason why the tensor product between two (non-commutative algebras) does not yield a coproduct on the category $Alg$ (see before Example \ref{ex:algcat}). Consider indeed two \emph{unital} algebras $A$ and $B$ with unit $\eta_A$ and $\eta_B$ respectively and two morphisms $f:A\to D$ and $g:B\to D$. Let $\iota_{A}:A\to A\otimes B$ and $\iota_{B}:B\to A\otimes B$ be the injections,
$$
\iota_A(a)=a\otimes \eta_B,\quad \iota_{B}(b)=\eta_{A} \otimes b,\quad a\in A,b\in B.
$$
Should there exist a morphism $h:A\otimes B\to D$ such that $f=h\circ \iota_{A}$ and $g=h\circ \iota_{B}$, the relation
\[\iota_{A}(a)\iota_{B}(b)=(a\otimes 1_{B})(1_{A}\otimes b)=a\otimes b=(1_{A}\otimes b)(a\otimes 1_{B})=\iota_{B}(b)\iota_{A}(a)\]
would impose the equalities $h(a\otimes b)=f(a)g(b)=g(b)f(a)$, the second of which has no reason of being satisfied unless, of course, $D$ is commutative.
\end{remarque}


We continue with examples of $H$-algebras.

\begin{example}
\label{ex:halgebra}
\begin{enumerate}[\indent 1.]
	\item Let $V$ be a complex vector space. We claim that $V$ is a $H$-algebra in the algebraic category $(\Vect_{\mathbb{C}},\oplus,\{0\})$. In fact, define:
	      \begin{equation*}
		      \Delta(x)=x \oplus x = \iota_1(x) + \iota_2(x) \in V\oplus V,~\mathcal{S}(x)=-x,~\varepsilon(x)=0.
	      \end{equation*}
	      It is easy to check that, with these definitions, $(V,\Delta,\mathcal{S},\varepsilon)$ is a $H$-algebra.
	\item Let $n \geq 1$ an integer. The Dual Voiculescu group $\mathcal{O}\langle n \rangle$ is the involutive unital associative algebra generated by $2n^{2}$ variables; $u_{ij},~u_{ij}^{\star}~i,1 \leq j\leq n$ subject to the relations:
	      \begin{equation*}
		      \sum_{k=1}^{n} u_{ik}u_{jk}^{\star} = \delta_{ij},~\sum_{k=1}^{n} u^{\star}_{ki}u_{kj}= \delta_{ij}, ~ 1 \leq i,j \leq n.
	      \end{equation*}
	      The dual Voiculescu group is turned into a $H$-algebra $(\mathcal{O}\langle n \rangle, \Delta,\varepsilon,\mathcal{S})$ if we define the structural morphisms by:
	      \begin{equation*}
		      \mathcal{S}(u_{ij})= u_{ji}^{\star},~\Delta({u_{ij}}) = \sum_{k=1}^{n} u_{ik}|_{1}u_{kj}|_{2},~ \varepsilon(u_{ij})=\delta_{ij},~1 \leq i,j \leq n.
	      \end{equation*}
       In the above formula for the coproduct $\Delta$, we have used the following standard notations : $\iota_j(a) = a_{|j} \in \mathcal{O}\langle n \rangle$, $a\in \mathcal{O}\langle n \rangle$.
	\item The rectangular unitary algebra $\runitaryalg$ is the involutive unital associative algebra generated by one unitary element $u$ and a complete set of mutually self-adjoint orthogonal projectors $\mathcal{R}=\{p_{i},~1 \leq i \leq n\}$,
    
    $$
    p_ip_j=\delta_{i=j}p_i,\quad \sum_{i=1}^n p_i = 1
    $$
    
    The rectangular unitary algebra $\mathcal{R}\mathcal{O}\langle n \rangle$ is a bimodule algebra over $\mathcal{R}$. The algebra $\runitaryalg$ is a $H$-algebra in the algebraic category $\Alg^{\star}(\mathcal{R})$ with structural morphisms:
	      \begin{equation*}
		      \mathcal{S}(u)=u^{\star}=u^{-1},~ \Delta(u)=u|_{1}u|_{2},~\varepsilon(u)=1\in R.
	      \end{equation*}
	\item Any commutative Hopf algebra is a $H$-algebra, thus if $G$ is a group then its space of polynomial functions $\mathcal{F}(G)$ is a $H$-algebra with structure morphisms given by:
	      \begin{equation*}
		      \Delta(f)(g,h) = f(gh),~\mathcal{S}(f)(g)=f(g^{-1}),~\varepsilon(f)=f(e).
	      \end{equation*}
\end{enumerate}
\end{example}
Consider a $H$-algebra $H$ on an algebraic category $\mathcal C$. For every object $A$ of $\mathcal C$, the set $\Hom_{\mathcal{C}}(H,A)$ is endowed with a \emph{group structure} by the formula for the product $\star$ between two elements $f,g \in \Hom_{\mathcal{C}}(H,A)$:
\begin{equation*}
f\ast g=(f\freem g)\circ \Delta.
\end{equation*}
The unit element of this group is $\eta_{A}\circ \epsilon_{H}$, where $\eta_{A}$ is the unique morphism from $k$ to $A$, and the inverse of an element $f$ of $\Hom(H,A)$ is $f\circ \mathcal{S}$.

In contrast with this situation, if $H$ is a Hopf algebra and $A$ is an algebra, the convolution product of two morphisms of algebras needs not to be a morphism of algebras unless $A$ is commutative.
\begin{example}
Consider for example the $H$-algebra $H$ of Example 1. above. Given two linear complex valued map $f,g: H \to \C$, one has for $x \in V$
$$f\ast g = (f \freem g) \circ \Delta (x) = f(x) + g(x)$$
If now $H$ is the $H$-algebra of Example 3. above, and $f=\delta_X,g=\delta_Y : \mathcal{F}(G)\to B$, where $B$ is any commutative algebra two algebra morphisms given by evaluation on $X,Y \in G$. One has, with $h \in \mathcal{F}(G)$
$$
f \ast g (h) = h(XY) = \delta_{XY}
$$
\end{example}
It is crucial that $\Hom_{\mathcal{C}}(H, A)$ is a group; it plays the role of a set of group-valued random variables and it is extremely natural for us to be able to take the inverse of a random variable and to multiply two of them.

Let us conclude this discussion with the following Theorem, which shows that $H$-algebras are in a sense exactly the right class of objects for our purposes.

\begin{theorem}[\cite{zhang}]
	Let $\mathcal{C}$ be an algebraic category. Let $H$ be an object of $\mathcal{C}$. Then ${\rm Hom}_{\mathcal{C}}(H,\cdot)$ is a functor from $\mathcal{C}$ to the category of groups if and only if there exists $\Delta, \varepsilon, \mathcal{S},$ such that $\left(H,\Delta,\varepsilon,\mathcal{S} \right)$ is a $H$-algebra of $\mathcal{C}$.
\end{theorem}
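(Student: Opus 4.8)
The statement is a biconditional relating a structural property of the object $H$ to a property of the set-valued functor $\Hom(H,\cdot)$ it represents, and the natural approach is a ``functor of points'' argument in the spirit of the Yoneda lemma; the two implications will be the same chain of computations read in opposite directions. Before anything I would record, as a short lemma, two formal consequences of the universal property of $\free$: for composable morphisms one has $(r\freem s)\circ(p\free q)=(r\circ p)\freem(s\circ q)$ and $(r\circ p)\freem(r\circ q)=r\circ(p\freem q)$. I would also use repeatedly that any two morphisms out of the initial object $k$ with a common target agree (so $f\circ\eta_H=\eta_A$ whenever $f:H\to A$), together with the canonical identification $k\free H\cong H$ that gives meaning to the expression $\epsilon\freem\id_H$ in Definition~\ref{def:Zhang}.

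For the implication ``$(H,\Delta,\epsilon,S)$ a Zhang algebra $\Rightarrow$ $\Hom(H,\cdot)$ a functor to groups'', I would equip each $\Hom(H,A)$ with the convolution $f\ast g=(f\freem g)\circ\Delta$, with candidate unit $\eta_A\circ\epsilon$ and candidate inversion $f\mapsto f\circ S$. Using the two identities above one gets $(f\ast g)\ast h=(f\freem g\freem h)\circ(\Delta\free\id_H)\circ\Delta$ and $f\ast(g\ast h)=(f\freem g\freem h)\circ(\id_H\free\Delta)\circ\Delta$, so coassociativity of $\Delta$ yields associativity of $\ast$; likewise $(\eta_A\circ\epsilon)\ast f=f\circ\bigl((\eta_H\circ\epsilon)\freem\id_H\bigr)\circ\Delta$ and $f\ast(f\circ S)=f\circ\bigl((\id_H\freem S)\circ\Delta\bigr)$, together with their symmetric versions, so the counit and antipode axioms yield the unit and inverse laws. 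Finally, for $\phi:A\to B$ the map $\phi_\ast:\Hom(H,A)\to\Hom(H,B)$, $f\mapsto\phi\circ f$, is a group homomorphism, because $\phi\circ(f\freem g)=(\phi\circ f)\freem(\phi\circ g)$ and $\phi\circ\eta_A=\eta_B$; hence $\Hom(H,\cdot)$ lifts to a functor $\mathcal C\to\mathcal{G}rp$.

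For the converse, assume the representable functor $\Hom(H,\cdot)$ lifts to $\mathcal C\to\mathcal{G}rp$; write $m_A$ and $e_A$ for the multiplication and unit of the group $\Hom(H,A)$, and note that by hypothesis $f\mapsto\phi\circ f$ is a homomorphism for every morphism $\phi$. The universal property of $\free$ provides a bijection $\Hom(H\free H,A)\cong\Hom(H,A)\times\Hom(H,A)$, natural in $A$, and $\Hom(k,A)$ is a one-point set naturally in $A$; consequently $m$, $e$ and group inversion may be viewed as natural transformations $\Hom(H\free H,\cdot)\Rightarrow\Hom(H,\cdot)$, $\Hom(k,\cdot)\Rightarrow\Hom(H,\cdot)$ and $\Hom(H,\cdot)\Rightarrow\Hom(H,\cdot)$. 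Applying the Yoneda lemma to each, I obtain morphisms $\Delta:H\to H\free H$ (the value at $\id_{H\free H}$, namely $m_{H\free H}(\iota_1,\iota_2)$ where $\iota_1,\iota_2:H\to H\free H$ are the injections), $\epsilon:H\to k$, and $S:H\to H$ (the group inverse of $\id_H$ in $\Hom(H,H)$); the same Yoneda/naturality argument also gives $m_A(f,g)=(f\freem g)\circ\Delta$, $e_A=\eta_A\circ\epsilon$ and $\mathrm{inv}_A(f)=f\circ S$ for all $A$ and all $f,g\in\Hom(H,A)$. It then remains to check the three Zhang axioms, which is the previous computation run backwards: associativity in $\Hom(H,A)$ reads $(f\freem g\freem h)\circ(\Delta\free\id_H)\circ\Delta=(f\freem g\freem h)\circ(\id_H\free\Delta)\circ\Delta$ for all $A,f,g,h$, and evaluating at $A=H\free H\free H$ with $(f,g,h)$ the three injections (so $f\freem g\freem h=\id$) gives coassociativity; the unit law becomes $f\circ\bigl((\eta_H\circ\epsilon)\freem\id_H\bigr)\circ\Delta=f$ and its symmetric version, which at $A=H,\ f=\id_H$ is the counit axiom; and the inverse law becomes $f\circ\bigl((\id_H\freem S)\circ\Delta\bigr)=f\circ\eta_H\circ\epsilon$ and its symmetric version, which at $A=H,\ f=\id_H$ is the antipode axiom. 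Thus $(H,\Delta,\epsilon,S)$ is a Zhang algebra.

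The one place that needs genuine care, rather than new ideas, is the bookkeeping: transporting the natural isomorphism $\Hom(H\free H,\cdot)\cong\Hom(H,\cdot)\times\Hom(H,\cdot)$ through Yoneda without confusing the two slots, handling the coherence isomorphism $(H\free H)\free H\cong H\free(H\free H)$ cleanly while working with the triple coproduct, and keeping track of the identification $k\free H\cong H$ hidden in the notation $\epsilon\freem\id_H$. Once the two $\free/\freem$ identities and the uniqueness of morphisms out of $k$ are in hand, the rest of the argument is essentially mechanical and purely diagrammatic.
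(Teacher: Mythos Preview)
Your argument is correct and is the standard Yoneda/functor-of-points proof one expects for this kind of representability statement. Note, however, that the paper does not actually prove this theorem: it is quoted from \cite{zhang} without proof, and the paper only sketches the easy direction in the paragraph preceding the statement (defining $f\ast g=(f\freem g)\circ\Delta$, identifying the unit $\eta_A\circ\epsilon$ and the inverse $f\circ S$). So there is nothing to compare against beyond that sketch, with which your forward implication agrees verbatim; your converse via Yoneda and evaluation at the canonical injections is exactly the argument one finds in Zhang's original paper.
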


\subsection{Comodule-algebras over an $H$-algebra}\label{sec:ZhangcoMod}
In this section, we define the \emph{category of co-module algebras} over a $H$-algebra of an algebraic category $\mathcal C$ with initial object $k$ and coproduct $\free$. In this definition, and for every object $B$ of $\mathcal C$, we will identify without further mention the objects $B$, $B\free k$ and $k\free B$. They are indeed isomorphic by the maps $\id_{B}\freem \eta:B\free k\to B$ and $\eta \freem \id_{B}: k \free B\to B$, where $\eta:k\to B$ is the unique element of $\Hom(k,B)$.

For this section, let us fix an algebraic category $\mathcal C$ with initial object $k$ and coproduct $\free$, and a $H$-algebra $(H,\Delta,\varepsilon,\mathcal{S})$ of this category.

\begin{definition}[Right comodule-algebras] \label{def:comodZhang} A \emph{right $H$ comodule-algebra} of $\mathcal C$ is a pair $(M,\Omega)$, where $M$ is an object of $\mathcal C$ and $\Omega:M\to M\free H$ is a morphism such that
	\begin{enumerate}[\indent 1.]
		\item $M$ is an object of $\mathcal C$,
		\item $\Omega:M\to M\free H$ is a morphism of $\mathcal C$ satisfying the following two conditions:
		      \begin{equation}\tag{$1_{R}$}
			      \label{relationcomod}
			      \left(\Omega \free \id_{H}\right) \circ \Omega = \left(\id_{M} \free \Delta \right) \circ \Omega \ \text{ and } \  \left(\id_{M}\free \varepsilon \right) \circ \Omega = \textrm{\id}_{M}.
		      \end{equation}
	\end{enumerate}
\end{definition}
\begin{remarque}
The definition of a left comodule-algebra is deduced from the definition of a right comodule-algebra by replacing \eqref{relationcomod} with
		      \begin{equation}\tag{$1_{L}$}
			      \label{relationcomodL}
			      \left(\id_{H} \free \Omega\right) \circ \Omega = \left(\Delta \free \id_{M}  \right) \circ \Omega \ \text{ and } \  \left(\varepsilon \free \id_{M} \right) \circ \Omega = \textrm{\id}_{M}.
		      \end{equation}
\end{remarque}
\begin{remarque}
    Let us stop a moment for a terminological remark: why comodule-algebra over a $H$-algebra and not the shorter denomination \emph{comodule over an $H$-algebra}? Let us focus on the basic example of an algebra category, the category comAlg of commutative algebras (see before Example \ref{ex:algcat}). An $H$-algebra in comAlg if a commutative Hopf algebra $H$.  According to Definition \ref{def:comodZhang}, a comodule over the Zhang algebra $H$ is a comodule over $H$, in the usual sense\footnote{\url{https://ncatlab.org/nlab/show/comodule}} but the structural morphism $\Omega$ is required additionally to be an \emph{algebra morphism}, and therefore $M$ is a comodule-algebra in the usual sense\footnote{\url{https://ncatlab.org/nlab/show/comodule+algebra}}.
\end{remarque}
\setcounter{equation}{1}

A morphism between two right $H$-comodule-algebras $(M,\Omega_{M})$ and $(N,\Omega_{N})$ is, by definition, an element $f$ of $\Hom_{\mathcal C}(M,N)$ which respects the structure of $H$-comodule in the sense that
\begin{equation}\label{eq:morcomod}
	\Omega_{N}\circ f=(f\free \id_{H})\circ \Omega_{M}.
\end{equation}

We denote respectively by $\rCoMod\mathcal C(H)$ and $\lCoMod\mathcal C(H)$ the categories with objects the right and left $H$-comodule-algebras of $\mathcal C$ and for morphisms, the morphisms of $\mathcal{C}$ which are compatible with the comodule maps $\Omega$ on the source and the target. We will now state and prove these two categories are algebraic categories.
The three following Lemmas seem new.
We defer the proof of the following Lemma to the Annex, see Section \ref{annex:prooflemmacinq}.

\begin{lemma}\label{lemma_alge_alg_alg_star}
	The category $\rCoMod\mathcal C(H)$ is an algebraic category.
\end{lemma}

Of course, an analogous statement holds for the category $\lCoMod\mathcal C(H)$.

It is convenient to introduce a graphical calculus to perform computations involving coproducts. All morphisms we handle act on iterated folded coproducts of $H$ and are valued in the same type of objects. We do not forget that coproducts of objects in $\mathcal{C}$ come with injections. For $n \geq 1$ an integer, the injections from $H$ to $H^{\sqcup n}$ are denoted $\iota_{1},\ldots,\iota_{n}$. 
Let $f=(f_{1},\ldots,f_{n})$ a finite sequence of morphisms from $H$ to itself. To each permutation $\sigma$ of $\{1,\ldots,n\}$ is attached a morphism from $H^{\sqcup n}$ to $H^{\sqcup n}$, denoted $f_{\sigma}$ and defined as the unique morphism satisfying the property: $f_{\sigma} \circ \iota_{i} = i_{\sigma(i)} \circ f_{i}$. Such morphisms are depicted as follows: we draw $n$ vertical lines, labelled with the symbols $f_{1},\ldots,f_{n}$ from left to right. We add at the beginning of the $i^{th}$ vertical line the integer $i$ and the integer $\sigma(i)$ at the end of the line. We draw examples in Figure \ref{fig:diag} in case $n=3$.
\begin{figure}[!htb]\centering
	\begin{tikzpicture}[scale=1.2]
		\begin{scope}[]
			\draw(0,0)--node[left](){\tiny $f_{1}$} node[left,pos=0.9](){\tiny 1} node[left,pos=0.1](){\tiny $1$}(0,1);
			\draw(0.75,0)--node[left](){\tiny $f_{2}$} node[left,pos=0.9](){\tiny 2} node[left,pos=0.1](){\tiny 2}(0.75,1);
			\draw(1.5,0)--node[left](){\tiny $f_{3}$} node[left,pos=0.9](){\tiny 3} node[left,pos=0.1](){\tiny 3}(1.5,1);
		\end{scope}
		\node[]()at(1.75,0.5){,};
		\begin{scope}[shift={(3,0)}]
			\draw(0,0)--node[left](){\tiny $f_{1}$} node[left,pos=0.9](){\tiny 1} node[left,pos=0.1](){\tiny $1$}(0,1);
			\draw(0.75,0)--node[left](){\tiny $f_{2}$} node[left,pos=0.9](){\tiny 3} node[left,pos=0.1](){\tiny 2}(0.75,1);
			\draw(1.5,0)--node[left](){\tiny $f_{3}$} node[left,pos=0.9](){\tiny 2} node[left,pos=0.1](){\tiny 3}(1.5,1);
		\end{scope}
		\node[]()at(4.75,0.5){,};
		\begin{scope}[shift={(6,0)}]
			\draw(0,0)--node[left](){\tiny $f_{1}$} node[left,pos=0.9](){\tiny 2} node[left,pos=0.1](){\tiny $1$}(0,1);
			\draw(0.75,0)--node[left](){\tiny $f_{2}$} node[left,pos=0.9](){\tiny 1} node[left,pos=0.1](){\tiny 2}(0.75,1);
			\draw(1.5,0)--node[left](){\tiny $f_{3}$} node[left,pos=0.9](){\tiny 3} node[left,pos=0.1](){\tiny 3}(1.5,1);
		\end{scope}
	\end{tikzpicture}
	\caption{\small\label{fig:diag}Morphisms $f_{\id},~f_{(2,3)},~f_{(1,2)}$.}
\end{figure}

Of primary importance is the case where all the $f_{i}$ $^{'}s$ are equal to the identity of $H$. In that case, we use the notation $\tau_{\sigma}$ for the morphism $f_{\sigma}$. In words, $\tau_{\sigma}$ relabel the letters by substituting the label $\sigma_{i}$ to $i$. The Figure \ref{tauundeux} shows the graphical representation of the morphism $\tau_{12}$. To draw the graphical representations of the morphisms at stake, we make the convention that a sequence of edges starting and ending on same levels have ends labelled with increasing integers from left to right, the ends being labelled with the same integer.
\begin{figure}[!htb]\centering
	\begin{tikzpicture}[ever node/.style={fontscale=0.5},scale=1.2]
		\begin{scope}[shift={(-2,0)}]
			\draw(0,0)--(1,1);
			\draw(1,0)--(0,1);
		\end{scope}
		\begin{scope}[]
			\node () at (-0.5,0.5) {\tiny =};
			\draw(0,0)--node[right,pos=0.9](){\tiny 2}node[left,pos=0.1](){\tiny 1}(1,1);
			\draw(1,0)--node[left,pos=0.9](){\tiny 1}node[right,pos=0.1](){\tiny 2}(0,1);
		\end{scope}
		\begin{scope}[shift={(2,0)}]
			\node () at (-0.5,0.5) {\tiny =};
			\draw(0,0)--node[right,pos=0.9](){\tiny 2}node[right,pos=0.1](){\tiny 1}(0,1);
			\draw(1,0)--node[left,pos=0.9](){\tiny 1}node[left,pos=0.1](){\tiny 2}(1,1);
		\end{scope}
	\end{tikzpicture}
	\caption{\label{tauundeux} \small The permutation $\tau_{12}$ of labels.}
\end{figure}
Using graphical calculus, the structural morphisms $\mathcal{S}$, $\Delta$, $\varepsilon$ and $\mu = \id_{H} \freem \id_{H}$ are pictured as in Fig. \ref{diagrammstructuralmorphism} and the relations they are subject to are drawn in Fig. \ref{diagramerelationzang}.
\begin{figure}[!htb]\centering
	\begin{tikzpicture}[scale=1.2]
		\begin{scope}[]
			\node[]()at(-0.8,0.0){\tiny $\Delta=$};
			\begin{scope}[shift={(0,0)}, scale=0.5]
				\draw(-1,1)--(0,0)--(1,1);
				\draw(0,0)--(0,-1.40);
			\end{scope}
		\end{scope}

		\begin{scope}[shift={(0.8,0)}]
			\node () at (0,0.0) {\tiny,~$\mu=$};
			\begin{scope}[shift={(0.8,-0.7)},scale=0.5]
				\draw(-1,0)--(0,1)--(1,0);
				\draw(0,1)--(0,2.4);
			\end{scope}
		\end{scope}

		\begin{scope}[shift={(2.9,0)}]
			\node () at (-0.5,0.0) {\tiny ,~$\mathcal{S}=$};
			\draw(0,-0.7)--node[](){\tiny /}(0,0.5);
		\end{scope}

		\begin{scope}[shift={(4,0)}]
			\node () at (-.5,0) {\tiny ,~$\varepsilon=$};
			\begin{scope}[shift={(0,-.7)}]
				\draw(0,0)--(0,1.2);
				\node[draw,circle,inner sep=1pt,fill,draw] () at (0,1.2) {};
			\end{scope}
		\end{scope}

		\begin{scope}[shift={(5,0)}]
			\node () at (-.5,0) { \tiny ,~$\eta=$ };
			\begin{scope}[shift={(0,-.6)}]
				\draw(0,0)--(0,1.2);
				\node[draw,circle,inner sep=1pt,fill,draw] () at (0,0) {};
			\end{scope}
		\end{scope}
	\end{tikzpicture}
	\caption{\label{diagrammstructuralmorphism} \small Drawings of the structural morphisms of a $H$-algebra $(H,\Delta,\varepsilon,\mathcal{S})$ and $\mu = \id_{H}\freem \id_{H}$.}
\end{figure}

\begin{figure}[!htb]\centering
	\begin{tikzpicture}[scale=0.5]
		\draw(-1,1)--(0,0)--(1,1);
		\draw( 0,0)--(0,-1.4);
		\draw(-1,1.)--(-1,3.4);
		\begin{scope}[shift={(1,2.4)}]
			\draw(-1,1)--(0,0)--(1,1);
			\draw( 0,0)--(0,-1.4);
		\end{scope}
		\node () at (2.5,0.5) {=};

		\begin{scope}[shift={(5,0)}]
			\draw(-1,1)--(0,0)--(1,1);
			\draw( 0,0)--(0,-1.4);
			\draw(1,1.)--(1,3.4);
			\begin{scope}[shift={(-1,2.4)}]
				\draw(-1,1)--(0,0)--(1,1);
				\draw( 0,0)--(0,-1.4);
			\end{scope}
		\end{scope}
		\node () at (6.1,0.5) {,};
		\begin{scope}[shift={(8,0)}]
			\draw(-1,1)--(0,0)--(1,1);
			\draw( 0,0)--(0,-1.4);
			\draw(-1,1.)--node () { /}(-1,2.4)--(0,3.4)--(1,2.4);
			\draw(1,1)--(1,2.4);
		\end{scope}
		\node () at (10,.5) {=};
		\begin{scope}[shift={(12,0)}]
			\draw(-1,1)--(0,0)--(1,1);
			\draw( 0,0)--(0,-1.4);
			\draw(-1,1.)--(-1,2.4)--(0,3.4)--(1,2.4);
			\draw(1,1)--node(){/}(1,2.4);
		\end{scope}
		\node () at (14,0.5) {=};
		\begin{scope}[shift={(15,0)}]
			\draw(0,-1.4)--node[draw,fill,circle,pos=0.95,inner sep=1.pt](){}(0,1);
			\draw(0,3.4)--node[draw,fill,circle,pos=0.95,inner sep=1.pt](){}(0,1);
		\end{scope}
		\node () at (15.7,0.5) {,};
		\begin{scope}[shift={(18,0)}]
			\draw(1,1)--(0,0)--(-1,1);
			\draw(0,0)--(0,-1.4);
			\draw(-1,1)--node[draw,fill,circle,pos=0.98,inner sep=1.pt](){}(-1,3.4);
			\draw(1,1)--(1,3.4);
		\end{scope}
		\node () at (20,0.5) {=};
		\begin{scope}[shift={(22,0)}]
			\draw(1,1)--(0,0)--(-1,1);
			\draw(0,0)--(0,-1.4);
			\draw(-1,1)--(-1,3.4);
			\draw(1,1)--node[draw,fill,circle,pos=0.98,inner sep=1.pt](){}(1,3.4);
		\end{scope}
		\node () at (24,0.5) {=};
		\begin{scope}[shift={(25,0)}]
			\draw(0,-1.4)--(0,3.4);
		\end{scope}
	\end{tikzpicture}
	\caption{\label{diagramerelationzang} \small Relations amongst the structural morphisms of a $H$-algebra, from left to right: $\Delta \free \id_{H} \circ \Delta = \id_{H} \free \Delta \circ \Delta$, $\mathcal{S}\free \id_{H} \circ \Delta = \id_{H} \free \mathcal{S} \circ \Delta = \varepsilon \circ \eta, \varepsilon \free \id_{H} = \id_{H} \free \mathcal{S} = \id_{H}$.}
\end{figure}
Coassociativity of $\Delta$ pictured in the left corner of Fig \ref{diagramerelationzang} permits an unambiguous interpretation of a \emph{corolla} (see the bottom left corner of Fig. \ref{proofrelationantipodecoproduct}) as a literal expression in $\Delta$; it corresponds to an iterative composition of $\Delta$, either on its left or right leg (it does not matter, the two alternatives yield the same morphism). For example, $\mathrm{id}_H \sqcup \Delta \sqcup \mathrm{id}_H \circ {\rm id} \sqcup \Delta \circ \Delta$ is pictured as a corolla with three strands.
On a $H$-algebra, the antipode $\mathcal{S}$ is not a comorphism, however, a simple relation between $\mathcal{S} \sqcup \mathcal{S} \circ \Delta$ and $\Delta \circ \mathcal{S}$ can be deduced from the three structural relations drawn in Fig. \ref{diagramerelationzang} which are reminiscent from the fact that the antipode $\mathcal{S}$ of a Hopf algebra is an anti-co-morphism:
\begin{equation*}
	\tau_{(12)} \circ (\mathcal{S} \sqcup \mathcal{S}) \circ \Delta = \Delta \circ \mathcal{S}.
\end{equation*}
This last relation is pictured in Fig \ref{relationantipodecoproduct} and its proof can be found in the seminal article of Zhang \cite{zhang}. Let us, however, illustrate how the graphical calculus we introduce works in this first example. The morphism $\Delta \circ \mathcal{S}$ is the inverse of $\Delta$ in the group ${\rm Hom}_{\mathcal{C}}(H,H\sqcup H)$. We have to show that $((\tau_{12}\circ (\mathcal{S} \sqcup \mathcal{S})\circ \Delta) \freem \Delta) \circ \Delta =  \eta_{H\sqcup H}\circ \varepsilon$, where $\eta_{H\sqcup H}$ is the unique morphism from the initial object to $H\sqcup H$, graphical computations are performed in Fig. \ref{proofrelationantipodecoproduct}. By using the same method, it can be proved that $\mathcal{S}^{2} = \id_{H}$.
\begin{figure}[!htb]\centering
	\begin{tikzpicture}[scale=0.8]
		\begin{scope}[shift={(0,0.45)}]
			\draw(-1,1)--node[pos=0.1,right](){\tiny 1}(0,0)--node[pos=0.9,left](){\tiny 2}(1,1);
			\draw(0,0)--node[pos=0.9,right](){\tiny 1}node[](){/}(0,-1.4);
		\end{scope}
		\node[]()at(1.5,0.5){=};
		\begin{scope}[shift={(3.1,0.4)}]
			\draw(-1,1)--(0,0)--(1,1);
			\draw(0,0)--node[pos=0.9,right](){\tiny 1}(0,-1.4);
			\draw(-1,1)--node[pos=0.9,right](){\tiny 2}node[](){/}(-1,2);
			\draw(1,1)--node[pos=0.9,right](){\tiny 1}node[](){/}(1,2);
		\end{scope}
	\end{tikzpicture}
	\caption{ \small \label{relationantipodecoproduct} Diagram corresponding to the relation $\tau_{(12)} \circ (\mathcal{S} \sqcup \mathcal{S}) \circ \Delta = \Delta \circ \mathcal{S}$.}
\end{figure}
\begin{figure}[!htb]\centering
	\begin{tikzpicture}[scale=0.75]
		\draw(-1,1)--(0,0)--(1,1);
		\draw(-1,1)--node[](){/}(-1,2.4);
		\draw(-1,2.4)--(0.3,4.4);
		\draw(0.3,4.4)--(1,3.4);
		\draw(1,1)--(1,3.4);
		\draw(0.3,1)--(0.3,3.4);
		\draw(0.3,3.4)--(-.3,4.4)--(-1,3.4);
		\draw(-0.33,1)--node[](){/}(-0.33,2.4);
		\draw(-0.33,1)--(0,0);
		\draw(-.33,2.4)--(-1,3.4);
		\draw(0.33,1)--(0,0);
		\draw( 0,0)--(0,-1.4);
		\node[]()at(1.5,1){=};
		\begin{scope}[shift={(3,0)}]
			\draw(-1,1)--node[pos=0.1,right](){\tiny 1}(0,0)--node[pos=0.9,right](){\tiny 4}(1,1);
			\draw(-1,1)--node[](){/}node[pos=0.9,right](){\tiny 2}(-1,2.4);
			\draw(1,1)--node[pos=0.9,right](){\tiny 4}(1,2.4);
			\draw(0.3,1)--node[pos=0.9,right](){\tiny 3}(0.3,2.4);
			\draw(-0.33,1)--node[](){/}node[pos=0.9,right](){\tiny 1}(-0.33,2.4);
			\draw(-0.33,1)--node[pos=0.1,right](){\tiny 2}(0,0);
			\draw(0.33,1)--node[pos=0.1,right](){\tiny 3}(0,0);
			\draw( 0,0)--(0,-1.4);
			\draw(-0.33,2.4)--node[pos=1.16,](){\tiny 1}(0,3.4)--(0.3,2.4);
			\draw(-1,2.4)--node[pos=1.1,](){\tiny 2}(0,4.4)--(1,2.4);
		\end{scope}
		\node[]()at(4.8,1){=};
		\begin{scope}[shift={(5.7,0)}]
			\draw(0,0)--node[pos=0.95,circle,fill,inner sep=1pt](){}(0,1);
			\draw(-.33,1)--node[pos=0.1,circle,fill,inner sep=1pt](){}(-.33,2);
			\draw(.33,1)--node[pos=0.1,circle,fill,inner sep=1pt](){}(.33,2);
		\end{scope}
	\end{tikzpicture}
	\caption{\label{proofrelationantipodecoproduct}\small The graphical proof of the relation $\tau_{12}\circ(\mathcal{S}\sqcup \mathcal{S})\circ \Delta = \Delta \circ \mathcal{S}$.}
\end{figure}
The main goal of the following two lemmas (Lemma \ref{lem:zhangzhang} and Lemma \ref{lemma:zhangconjugation}) is to prove that a $H$-algebra in algebraic category $\mathcal{C}$ is also a $H$-algebra in the category ${\rm rCoMod}\mathcal{C}(H)$ of right comodules over $H$ in $\mathcal{C}$.
Let us denote by $\iota_{1}:{\color{cyan}H}\to {\color{cyan}H}\free H$ and $\iota_{2}:{\color{magenta}H}\to H\free {\color{magenta}H}$ the canonical maps. We define the morphism $\Omega_{c}:H\to H \free H$ by the following formula:

\begin{equation}
\label{eqn:gaugecoaction}
\Omega_{c}=(\iota_{1} \freem \iota_{2} \freem \iota_{1}) \circ (\id_{H\free H} \free \mathcal{S}) \circ (\Delta \free \id_{H}) \circ \Delta.
\end{equation}
At times, we will refer to $\Omega_c$ as the \emph{conjugacy coaction} of $H$.
For an integer $n\geq 1$, we denote by $\Omega_{c}^{n}$ the induced co-action on $H^{\sqcup n}$.
\begin{lemma}
	\label{lemma:zhangconjugation}
	The pair $(H,\Omega_{c})$ is a right comodule-algebra of $\mathcal C$ over $H$.
\end{lemma}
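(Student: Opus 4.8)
The plan is to check directly that $(H,\Omega_{c})$ is a right $H$-comodule algebra in the sense of Definition \ref{def:comodZhang}, i.e. that it satisfies the two relations \eqref{relationcomod}. The first point, that $\Omega_{c}$ is a morphism of $\mathcal C$, requires nothing: by construction $\Omega_{c}$ is a composite of morphisms of $\mathcal C$ (the coproduct $\Delta$, the antipode $S$, the canonical injections $\iota_{1},\iota_{2}$, and the folding maps $\freem$), hence is itself a morphism. So the whole content of the lemma lies in the pair of identities
\[\left(\Omega_{c}\free \id_{H}\right)\circ \Omega_{c}=\left(\id_{H}\free \Delta\right)\circ \Omega_{c}\qquad\text{and}\qquad\left(\id_{H}\free \epsilon\right)\circ \Omega_{c}=\id_{H},\]
which I would establish with the graphical calculus developed above, in the spirit of the computation carried out in Figure \ref{proofrelationantipodecoproduct}.

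I would dispatch the counit relation first, as it is the shorter of the two. Composing $\Omega_{c}$ with $\id_{H}\free \epsilon$ caps the strand carried by the $\iota_{2}$-copy of $H$ with the counit $\epsilon$; pushing this cap back through the iterated coproduct by repeated use of the counit axioms of Figure \ref{diagramerelationzang} collapses the whole diagram to a single strand, which is exactly $\id_{H}$. The only care needed here is to keep track of which of the two free factors of $H\free H$ carries each surviving letter, and this is precisely what the pictures make automatic.

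The coassociativity-type relation is the main part. I would expand $\left(\Omega_{c}\free \id_{H}\right)\circ \Omega_{c}$ and $\left(\id_{H}\free \Delta\right)\circ \Omega_{c}$ into two diagrams, each built from two nested copies of the defining diagram of $\Omega_{c}$, and then normalise both using, in turn: coassociativity of $\Delta$ (Figure \ref{diagramerelationzang}) to resolve the nested coproducts; the relation $\tau_{(12)}\circ (S\sqcup S)\circ \Delta=\Delta\circ S$ of Figure \ref{relationantipodecoproduct} to commute the antipode past a coproduct; the antipode axioms together with $S^{2}=\id_{H}$ to erase the antipode-coproduct pairs that appear; and the counit axioms to delete any strand terminated by $\epsilon$. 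Both sides reduce to the same normal form, which is the claimed identity. The main obstacle, I expect, is purely combinatorial: the two diagrams one starts from are tangled words in several copies of $H$, and the work is to follow how the strands produced by the iterated coproducts get distributed among the free factors by the folding maps $\iota_{1}\freem\iota_{2}\freem\iota_{1}$, and then to match the two resulting pictures against each other. No algebraic input beyond the Zhang-algebra relations already displayed in Figures \ref{diagramerelationzang}--\ref{proofrelationantipodecoproduct} is needed, so I would present the argument diagrammatically and leave the routine checks to the reader, as is done elsewhere in this section.
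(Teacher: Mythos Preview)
Your overall plan---a direct diagrammatic verification of the two comodule relations---is exactly what the paper does. But your counit computation is wrong, and the reason is instructive.

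You claim that $(\id_{H}\free\epsilon)\circ\Omega_{c}=\id_{H}$ follows by capping the $\iota_{2}$-strand and repeatedly applying the counit axiom. Trace it through: the $\iota_{2}$-slot carries only the \emph{middle} leg of $\Delta^{2}$, so after one application of the counit axiom you are left with the first leg and the $S$-decorated third leg, both folded into the single remaining copy of $H$. That is $(\id_{H}\freem S)\circ\Delta=\eta\circ\epsilon$, not $\id_{H}$. The relation you wrote down is false for this $\Omega_{c}$. What holds is the \emph{left} counit relation $(\epsilon\free\id_{H})\circ\Omega_{c}=\id_{H}$: capping the $\iota_{1}$-slot hits both outer legs with $\epsilon$, and (using $\epsilon\circ S=\epsilon$) two counit axioms collapse the picture to the identity. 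If you look at the paper's proof you will see that it actually verifies the \emph{left} relations $(\id_{H}\free\Omega_{c})\circ\Omega_{c}=(\Delta\free\id_{H})\circ\Omega_{c}$ and $(\epsilon\free\id_{H})\circ\Omega_{c}=\id_{H}$; the word ``right'' in the statement is a slip.

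For the coassociativity-type relation you propose a large toolbox: the anti-comultiplicativity relation $\tau_{(12)}\circ(S\sqcup S)\circ\Delta=\Delta\circ S$, the antipode axioms, $S^{2}=\id_{H}$, and the counit axioms. None of this is needed here. The paper simply observes that both sides are equal to the common expression
\[
(\iota_{1}\freem\iota_{2}\freem\iota_{3}\freem\iota_{2}\freem\iota_{1})\circ(\id_{H^{\free 4}}\free S)\circ\Delta^{4},
\]
and that this follows from the coassociativity of $\Delta$ alone (together with the fact that $\Delta$ is a morphism). The antipode sits on a single outermost strand and is carried along passively throughout; at no point do you need to commute it past a coproduct or cancel it against anything. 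So your plan is over-equipped: the graphical normalisation is much shorter than you anticipate.
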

\begin{proof} The equality $(\Omega_{c}\free \id_H)\circ \Omega_{c}=(\id_{H}\free \Delta)\circ \Omega_{c}$ follows from the fact that both sides are equal to
	\[(\iota_{1}\freem\iota_{2}\freem\iota_{3}\freem\iota_{2}\freem\iota_{1}) \circ(\id_{H^{\free 4}} \free \mathcal{S})\circ \Delta^{4},\]
	as one checks using the coassociativity of $\Delta$ and the fact that it is a morphism. Let us write more details to convince the reader with the efficiency of the graphical calculus we introduced. The co-action $\Omega_{c}$ has the graphical presentation showed in Fig. \ref{co-action}. We begin with the first relation of \eqref{relationcomodL}.
	\begin{figure}\centering
		\begin{tikzpicture}[scale=0.6]
			\begin{scope}
				\draw(0,0)--node[](){}(-1,1);
				\draw(0,0)--(0,1);
				\draw(0,0)--(1,1);
				\draw(0,0)--(0,-1.4);
			\end{scope}
			\begin{scope}[shift={(0,1)}]
				\draw(0,0)--(0,1);
				\draw(-1,0)--(-1,1);
				\draw(1,0)--node[ ]{\tiny \bf /}(1,1);
			\end{scope}
			\begin{scope}[shift={(0,2)}]
				\draw(-0,1)--(1,0);
				\draw(0,0)--(1,1)--(1,2.7);
				\draw(-1,0)--(-1,1)--(-0.5,2)--(0,1);
				\draw(-.5,2)--(-.5,2.7);
			\end{scope}
		\end{tikzpicture}
		\caption{\label{co-action} Diagram representing the co-action $\Omega_{c}$.}
	\end{figure}
	In figure \ref{co-actionrelation}, we drew the sequence of diagram that proves the equality $(\id_{H} \free \Delta)\circ\Omega_{c} =  (\iota_{1}\freem\iota_{2}\freem\iota_{3}\freem\iota_{2}\freem\iota_{1}) \circ(\id_{H^{\free 4}} \free \mathcal{S})\circ \Delta^{4}$.
	\begin{figure} \centering
		\begin{tikzpicture}[scale=0.5]
			\begin{scope}
				\draw(0,0)--(-1,1);
				\draw(0,0)--(0,1);
				\draw(0,0)--(1,1);
				\draw(0,0)--(0,-1.4);
			\end{scope}
			\begin{scope}[shift={(0,1)}]
				\draw(0,0)--(0,1);
				\draw(-1,0)--(-1,1);
				\draw(1,0)--node[ ]{ \bf /}(1,1);
			\end{scope}
			\begin{scope}[shift={(0,2)}]
				\draw(0,1)--(1,0);
				\draw(0,0)--(1,1);
				\draw(-1,0)--(-1,1);
			\end{scope}
			\begin{scope}[shift={(0,3)}]
				\draw(-1,0)--( -0.5,1);
				\draw(0,0)--(-0.5,1);
				\draw(1,0)--( 1,1);
			\end{scope}
			\begin{scope}[shift={(-0.5,5.4)}]
				\draw(0,-1.4)--(0,0);
				\draw(-0.5,1)--(0,0)--(0.5,1);
				\draw(1.5,-1.4)--(1.5,1);
			\end{scope}
			\node() at (2,1) {=};
			\begin{scope}[shift={(4,0)}]
				\begin{scope}
					\draw(0,0)--(-1,1);
					\draw(0,0)--(0,1);
					\draw(0,0)--(1,1);
					\draw(0,0)--(0,-1.4);
				\end{scope}
				\begin{scope}[shift={(0,1)}]
					\draw(0,0)--(0,1);
					\draw(-1,0)--(-1,1);
					\draw(1,0)--node[ ]{ \bf /}(1,1);
				\end{scope}
				\begin{scope}[shift={(0,2)}]
					\draw(0,1)--(1,0);
					\draw(0,0)--(1,1);
					\draw(-1,0)--(-1,1);
					\draw(1,1)--node[pos=1.03](){}(1,4.2);
				\end{scope}
				\begin{scope}[shift={(-0.6,3)},xscale=0.4]
					\draw(-1,0)--(-1,0.7);
					\draw(-2,1.7)--(-1,0.7)--(0,1.7);
				\end{scope}
				\begin{scope}[shift={(0.4,3)},xscale=0.4]
					\draw(-1,0)--(-1,0.7);
					\draw(-2,1.7)--(-1,0.7)--(0,1.7);
				\end{scope}
				\begin{scope}[shift={(0,3)},xscale=0.4]
					\draw(-1.5,1.7)--(-1,2.2);
					\draw(-1,1.7)--(-1.5,2.2);
					\draw(-3.5,1.7)--(-3.5,2.2);
					\draw(1,1.7)--(1,2.2);
				\end{scope}
				\begin{scope}[shift={(0,3)},xscale=0.4]
					\draw(-3.5,2.2 )--node[pos=1.15](){}(-2.5,3.2)--(-1.5,2.2);
					\draw(-1,2.2)--node[pos=1.15](){}(0.1,3.2)--(1,2.2);
				\end{scope}
			\end{scope}
			\node () at (6,1) {=};
			\begin{scope}[shift={(9,0)}]
				\node [below](1) at (0,0) {\tiny 1};
				\draw(-2,1)--(0,0);
				\draw(-1,1)--(0,0);
				\draw(0,1)--(0,0);
				\draw(1,1)--(0,0);
				\draw(2,1)--(0,0);
				\draw(0,0)--(0,-1.4);
				\begin{scope}[shift={(0,1)}]
					\node[above](23)at (0,1){\tiny 3};
					\draw(2,0)--node[ ]{ \tiny /}(2,1);
					\draw(1,0)--node[ ]{ \tiny /}(1,1);
					\draw(0,0)--node[ ]{ }(0,1);
					\draw(-1,0)--node[ ]{ }(-1,1);
					\draw(-2,0)--node[ ]{ }(-2,1);
				\end{scope}
				\begin{scope}[shift={(0,2)}]
					\node[above]() at(0,1) {\tiny 2};
					\node[above]() at(0,2) {\tiny 1};
					\draw(-1,0)--(0,1)--(1,0);
					\draw(-2,0)--(0,2)--(2,0);
				\end{scope}
			\end{scope}
			\node[]()at(12,1){=};

			\begin{scope}[shift={(13.5,0)},scale=0.68]
				\begin{scope}
					\draw(0,0)--(-1,1);
					\draw(0,0)--(0,1);
					\draw(0,0)--(1,1);
					\draw(0,0)--(0,-1.4);
				\end{scope}
				\begin{scope}[shift={(0,1)}]
					\draw(0,0)--(0,1);
					\draw(-1,0)--(-1,1);
					\draw(1,0)--node[ ]{\tiny \bf /}(1,1);
				\end{scope}
				\begin{scope}[shift={(0,2)}]
					\draw(-0,1)--(1,0);
					\draw(0,0)--(1,1)--(1,2.7);
					\draw(-1,0)--(-1,1)--(-0.5,2)--(0,1);
					\draw(-.5,2)--(-.5,7.2);
				\end{scope}
				\begin{scope}[shift={(1,4.5)}]
					\begin{scope}
						\draw(0,0)--(-1,1);
						\draw(0,0)--(0,1);
						\draw(0,0)--(1,1);
						\draw(0,0)--(0,-1.4);
					\end{scope}
					\begin{scope}[shift={(0,1)}]
						\draw(0,0)--(0,1);
						\draw(-1,0)--(-1,1);
						\draw(1,0)--node[ ]{\tiny \bf /}(1,1);
					\end{scope}
					\begin{scope}[shift={(0,2)}]
						\draw(-0,1)--(1,0);
						\draw(0,0)--(1,1)--(1,2.7);
						\draw(-1,0)--(-1,1)--(-0.5,2)--(0,1);
						\draw(-.5,2)--(-.5,2.7);
					\end{scope}
				\end{scope}

			\end{scope}

		\end{tikzpicture}
		\caption{\label{co-actionrelation}Graphical proof of the relation $(\Delta \freem 1)\circ \Omega_{c}= (1\freem \Omega_{c})\circ \Omega_{c}$.}
	\end{figure}

	The verification of the equality $(\id_{H}\free \varepsilon)\circ \Omega_{c}=\id_{H}$ is simpler.

\end{proof}

Furthermore, we claim that $\Delta$, $\varepsilon$, and $\mathcal{S}$, which are defined as morphisms in the category $\mathcal C$, are in fact morphisms in the category $\rCoMod\mathcal C(H)$. This means that they satisfy the equation \eqref{eq:morcomod}.

\begin{lemma} \label{lem:zhangzhang}
	$((H,\Omega_{c}),\Delta,\varepsilon,\mathcal{S})$ is a $H$-algebra of the category $\rCoMod\mathcal C(H)$.
\end{lemma}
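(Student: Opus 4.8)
The plan is to separate the statement into a purely formal part and a single substantive point. Formally, by Lemma~\ref{lemma_alge_alg_alg_star} the category $\rCoMod\mathcal C(H)$ is algebraic and its initial object, its coproduct, and the coproduct of two morphisms have, as underlying $\mathcal C$-data, exactly those of $\mathcal C$ (the coaction being canonically determined extra structure). Consequently, once we know that $\Delta$, $\epsilon$, $S$ are morphisms of $\rCoMod\mathcal C(H)$ — that is, that they satisfy \eqref{eq:morcomod} for the relevant coactions — the four identities of Definition~\ref{def:Zhang} to be verified for $((H,\Omega_{c}),\Delta,\epsilon,S)$ in $\rCoMod\mathcal C(H)$ are, as equations between $\mathcal C$-morphisms, literally the ones already known to hold for the Zhang algebra $(H,\Delta,\epsilon,S)$ of $\mathcal C$ (and the morphism $\eta$ from the initial object is unchanged). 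Since Lemma~\ref{lemma:zhangconjugation} already gives that $(H,\Omega_{c})$ is an object of $\rCoMod\mathcal C(H)$, the whole content of the lemma is the equivariance of $\Delta$, $\epsilon$ and $S$.

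The organising observation for that is that $\Omega_{c}$ is a conjugation. Unwinding its definition and using $\alpha\ast\beta\ast\gamma=(\alpha\freem\beta\freem\gamma)\circ(\Delta\free\id_{H})\circ\Delta$, one obtains $\Omega_{c}=\iota_{1}\ast\iota_{2}\ast(\iota_{1}\circ S)$ in the group $(\Hom_{\mathcal C}(H,H\free H),\ast)$, where $\iota_{1},\iota_{2}:H\to H\free H$ are the canonical injections; and since the $\ast$-inverse of a $\mathcal C$-morphism $f:H\to A$ is $f\circ S$, this reads $\Omega_{c}=\iota_{1}\,\iota_{2}\,\iota_{1}^{-1}$. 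I would then use the elementary fact that for any $\mathcal C$-morphism $\psi:B\to A$ the post-composition map $\psi_{\ast}:\Hom_{\mathcal C}(H,B)\to\Hom_{\mathcal C}(H,A)$ is a group homomorphism — because $\psi\circ(\phi_{1}\freem\phi_{2})=(\psi\circ\phi_{1})\freem(\psi\circ\phi_{2})$ — and in particular sends units to units.

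With these, each equivariance is a one-line computation. For $\epsilon$: take $\psi=\id_{H}\free\epsilon$ (identifying $H\free k\cong H$); it sends $\iota_{1}$ to $\id_{H}$ and $\iota_{2}$ to a morphism factoring through the initial object $k$, hence to the unit, so $\psi_{\ast}(\iota_{1}\,\iota_{2}\,\iota_{1}^{-1})$ is the unit — which is precisely the coaction of $k$ composed with $\epsilon$. For $S$: equivariance is $\Omega_{c}\circ S=(\id_{H}\free S)\circ\Omega_{c}$; the right side equals $(\id_{H}\free S)_{\ast}(\iota_{1}\,\iota_{2}\,\iota_{1}^{-1})=\iota_{1}\,\iota_{2}^{-1}\,\iota_{1}^{-1}$, while the left side equals $(\iota_{1}\,\iota_{2}\,\iota_{1}^{-1})\circ S=(\iota_{1}\,\iota_{2}\,\iota_{1}^{-1})^{-1}=\iota_{1}\,\iota_{2}^{-1}\,\iota_{1}^{-1}$, again by $f\circ S=f^{-1}$. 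For $\Delta$: I would first pin down the coaction $\Omega_{H\free H}$ on $(H,\Omega_{c})\free(H,\Omega_{c})$ — since the two coproduct injections are morphisms of comodules, equivariance forces $\Omega_{H\free H}=(j_{1}\,j_{2}\,j_{1}^{-1})\freem(j_{1}\,j_{3}\,j_{1}^{-1})$, where $j_{1},j_{2},j_{3}:H\to H\free H\free H$ are the injections — and then $\Omega_{H\free H}\circ\Delta=(j_{1}\,j_{2}\,j_{1}^{-1})\ast(j_{1}\,j_{3}\,j_{1}^{-1})=j_{1}\,j_{2}\,j_{3}\,j_{1}^{-1}$, whereas $(\id_{H}\free\Delta)\circ\Omega_{c}=(\id_{H}\free\Delta)_{\ast}(\iota_{1}\,\iota_{2}\,\iota_{1}^{-1})=j_{1}\ast(j_{2}\ast j_{3})\ast j_{1}^{-1}=j_{1}\,j_{2}\,j_{3}\,j_{1}^{-1}$, so the two agree.

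I expect the only real friction to be bookkeeping: keeping the left/right comodule conventions of the preceding lemmas straight so that ``$\Omega_{c}$ is conjugation by $\iota_{1}$'' is matched against the correct instance of \eqref{eq:morcomod}, and reading off the exact shape of $\Omega_{H\free H}$ from the coproduct construction in the proof of Lemma~\ref{lemma_alge_alg_alg_star}. In the spirit of the rest of the paper, each of these three equivariances can instead be checked directly with the graphical calculus, using the antipode relations of Fig.~\ref{diagramerelationzang}, the relation $\tau_{(12)}\circ(S\free S)\circ\Delta=\Delta\circ S$ and $S^{2}=\id_{H}$; the conjugation picture merely condenses those diagram chases into the three short identities above.
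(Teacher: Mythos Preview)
Your proof is correct and takes a genuinely different route from the paper's. The paper verifies each of the three equivariance identities --- $\Omega_{c}\circ S=(\id_{H}\free S)\circ\Omega_{c}$, $(\id_{H}\free\epsilon)\circ\Omega_{c}=\eta\circ\epsilon$, and $\Omega_{c}^{2}\circ\Delta=(\id_{H}\free\Delta)\circ\Omega_{c}$ --- by a direct graphical-calculus computation: it expands both sides using the Zhang-algebra relations of Fig.~\ref{diagramerelationzang} together with $\tau_{(12)}\circ(S\free S)\circ\Delta=\Delta\circ S$, and matches the resulting diagrams. You instead package $\Omega_{c}$ once and for all as the conjugation $\iota_{1}\,\iota_{2}\,\iota_{1}^{-1}$ in the convolution group $\Hom_{\mathcal C}(H,H\free H)$ and exploit that post-composition by any $\mathcal C$-morphism is a group homomorphism; the three equivariances then collapse to the short group identities you wrote. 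This is more conceptual and makes transparent \emph{why} the identities hold --- they are the obvious compatibilities of conjugation with the unit, inversion, and the embedding $\Hom(H,H\free H)\to\Hom(H,H^{\free 3})$ --- whereas the paper's diagram chases establish the same equalities but leave their origin hidden in the pictures. The paper's approach has the virtue of being self-contained within the graphical language it sets up (and reuses elsewhere); yours leans on the group structure of $\Hom_{\mathcal C}(H,-)$, which, since that structure is the raison d'\^etre of Zhang algebras, is arguably the natural tool here. Your caveat about left/right bookkeeping is well placed: the paper's stated ``right'' convention and the identities it actually checks are not perfectly aligned, but your computations land on the same equalities the paper proves.
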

\begin{proof} We have to check that the three morphisms $\Delta,~\varepsilon,~\mathcal{S}$ are co-module morphisms. To that end, we use the graphical calculus introduced previously. We begin with the antipode, compatibility between $\mathcal{S}$ and the right co-action $\Omega_{c}$ means $\Omega_{c} \circ \mathcal{S} = \left(\id_{H} \sqcup \mathcal{S} \right)\circ \Omega_{c}$. The computations are pictured in Fig. \ref{antipodeconjugation}, to perform them we use the relation drawn in Fig. \ref{relationantipodecoproduct}.
	\begin{figure}[!htb]\centering
		\begin{tikzpicture}[scale=0.6]
			\begin{scope}
				\draw(0,0)--node[](){}(-1,1);
				\draw(0,0)--(0,1);
				\draw(0,0)--(1,1);
				\draw(0,0)--(0,-1.4);
			\end{scope}
			\begin{scope}[shift={(0,1)}]
				\draw(0,0)--(0,1);
				\draw(-1,0)--(-1,1);
				\draw(1,0)--node[ ]{/}(1,1);
			\end{scope}
			\begin{scope}[shift={(0,2)}]
				\draw(-0,1)--(1,0);
				\draw(0,0)--(1,1)--node[](){/}(1,2.7);
				\draw(-1,0)--(-1,1)--(-0.5,2)--(0,1);
				\draw(-.5,2)--(-.5,2.7);
			\end{scope}

			\draw(0,0)--node[](){/}(0,1);
			\node[]()at(1.6,1 ){=};
			\begin{scope}[shift={(3,-0.4)},scale=0.7]
				\begin{scope}
					\draw(0,0)--(-1,1);
					\draw(0,0)--(1,1);
					\draw(1,1)--(-1,2);
					\draw(-1,1)--(1,2);
					\draw(0,0)--(0,-1.4);
				\end{scope}
				\begin{scope}[shift={(1,3.4)}]
					\draw(0,0)--(-1,1);
					\draw(0,0)--(1,1);
					\draw(1,1)--(-1,2);
					\draw(-1,1)--(1,2);
					\draw(0,0)--(0,-1.4);
				\end{scope}
				\begin{scope}[,shift={(-1,3.4)}]
					\draw(0,-1.4)--(0,3);
					\draw(1,2)--(3,3);
					\draw(3,2)--(1,3);
				\end{scope}
				\begin{scope}[shift={(-1,6.4)}]
					\draw(0,0)--(0.5,0.5);
					\draw(1,0)--(0.5,0.5);
					\draw(0.5,0.5)--(0.5,1);
					\draw(3,0)--node[](){/}(3,1);
				\end{scope}
			\end{scope}
			\node[]()at(4.6,1 ){=};

			\begin{scope}[shift={(6.5,0)}]
				\begin{scope}
					\draw(0,0)--(-1,1);
					\draw(0,0)--(0,1);
					\draw(0,0)--(1,1);
					\draw(0,0)--(0,-1.4);
				\end{scope}
				\begin{scope}[shift={(0,1)}]
					\draw(0,0)--(0,1);
					\draw(-1,0)--(-1,1);
					\draw(1,0)--node[ ]{/}(1,1);
				\end{scope}
				\begin{scope}[shift={(0,2)}]
					\draw(0,1)--(1,0);
					\draw(0,0)--(1,1);
					\draw(-1,0)--(-1,1);
				\end{scope}
				\begin{scope}[shift={(0,3)}]
					\draw(-1,1)--(0,0);
					\draw(-1,0)--(0,1);
					\draw(1,0)--node[](){/}(1,1);
				\end{scope}
				\begin{scope}[shift={(0,4)}]
					\draw(-1,0)--(-0.5,0.5)--(0,0);
					\draw(-0.5,.5)--(-0.5,0.75);
					\draw(1,0)--(1,0.75);
				\end{scope}
			\end{scope}
			\node[]()at(8.25,1 ){=};

			\begin{scope}[shift={(10,0)}]
				\begin{scope}
					\draw(0,0)--(-1,1);
					\draw(0,0)--(0,1);
					\draw(0,0)--(1,1);
					\draw(0,0)--(0,-1.4);
				\end{scope}
				\begin{scope}[shift={(0,1)}]
					\draw(0,0)--(0,1);
					\draw(-1,0)--(-1,1);
					\draw(1,0)--node[ ]{/}(1,1);
				\end{scope}
				\begin{scope}[shift={(0,2)}]
					\draw(-1,1)--(1,0);
					\draw(0,0)--(1,1);
					\draw(-1,0)--(-1,1);
				\end{scope}
				\begin{scope}[shift={(0,3)}]
					\draw(-1,0)--(-1,1.75);
					\draw(1,0)--node[](){/}(1,1.75);
				\end{scope}
			\end{scope}
		\end{tikzpicture}
		\caption{\label{antipodeconjugation}\small The relation $\mathcal{S}\circ\Omega_{c} = (\id_{H} \sqcup \mathcal{S}) \circ \Omega_{c}$.}
	\end{figure}

	We now turn our attention to the relation that needs to be satisfied by the counit $\varepsilon$. The computations are drawn in Fig. \ref{compatcounitomega}. We have to check that $\eta_{H\free k} \circ \varepsilon = \left(\id_{H} \free \varepsilon\right) \circ \Omega_{c}$, because we identify $H\sqcup k$ with $H$, the last relation is written as $\eta_{H} \circ \varepsilon = \left(\id_{H} \free \varepsilon\right) \circ \Omega_{c}$.
	\begin{figure}[!htb]\centering
		\begin{tikzpicture}[scale=0.6]
			\begin{scope}[shift={(0,1.4)}]
				\begin{scope}
					\draw(0,0)--node[](){}(-1,1);
					\draw(0,0)--(0,1);
					\draw(0,0)--(1,1);
					\draw(0,0)--(0,-1.4);
				\end{scope}
				\begin{scope}[shift={(0,1)}]
					\draw(0,0)--(0,1);
					\draw(-1,0)--(-1,1);
					\draw(1,0)--node[ ]{\tiny \bf /}(1,1);
				\end{scope}
				\begin{scope}[shift={(0,2)}]
					\draw(-0,1)--(1,0);
					\draw(0,0)--(1,1)--node[pos=0.95,circle,fill,inner sep=1pt](){}(1,2.7);
					\draw(-1,0)--(-1,1)--(-0.5,2)--(0,1);
					\draw(-.5,2)--(-.5,2.7);
				\end{scope}
			\end{scope}

			\node () at (2,1) {=};

			\begin{scope}[shift={(4,0)}]
				\begin{scope}[]
					\draw(0,0)--(0,1);
					\draw(0,1) -- node[](){\tiny /}(-1,2);
					\draw(0,1)--(1,2);
				\end{scope}
				\begin{scope}[]
					\draw(-1,2)--(-1,3)--(0,4)--(1,3);
					\draw(1,2)--(1,3);
					\draw(0,4)--(0,5.4);
				\end{scope}
			\end{scope}

			\node () at (6,1) {=};

			\begin{scope}[shift={(7.5,0)}]
				\draw(0,0)--node[pos=0.96,draw,fill,circle,inner sep=1pt](){}(0,2.65);\draw(0,2.75)--node[pos=0.04,draw,fill,circle,inner sep=1pt](){}(0,5.4);
			\end{scope}
		\end{tikzpicture}
		\caption{\label{compatcounitomega}Diagrammatic proof of the relation $\eta_{H} \circ \varepsilon = \left(\id_{H} \free \varepsilon\right) \circ \Omega_{c}$.}
	\end{figure}
	The final relation that needs to be checked implies that $\Delta$ is a comodule morphism with respect to the co-actions $\Omega_{c}$ on $H$ and $\Omega_{c}^{2}$ on $H\sqcup H$:$\Omega_{c}^{2}\circ \Delta = (\id_{H} \sqcup \Delta) \circ \Omega_{c}$. Once again, we perform diagrammatic computations that are pictured in Fig. \ref{deltaconjugation}.

	\begin{figure}[!htb]\centering
		\begin{tikzpicture}[scale=0.5]
			\begin{scope}
				\begin{scope}[shift={(0,0)}]
					\draw(0,0)--(1,1);
					\draw(0,0)--(-1,1);
					\draw(0,0)--(0,-1.4);
				\end{scope}
				\begin{scope}[shift={(-1,2.1)},scale=0.8]
					\begin{scope}
						\draw(0,0)--node[](){}(-1,1);
						\draw(0,0)--(0,1);
						\draw(0,0)--(1,1);
						\draw(0,0)--(0,-1.4);
					\end{scope}
					\begin{scope}[shift={(0,1)}]
						\draw(0,0)--(0,1);
						\draw(-1,0)--(-1,1);
						\draw(1,0)--node[ ]{  \bf /}(1,1);
					\end{scope}
					\begin{scope}[shift={(0,2)}]
						\draw(-0,1)--(1,0);
						\draw(0,0)--(1,1)--(1,2.7);
						\draw(-1,0)--(-1,1)--(-0.5,2)--(0,1);
						\draw(-.5,2)--(-.5,2.7);
					\end{scope}
				\end{scope}
				\node ( ) at (2.5,1) {=};
				\begin{scope}[shift={(1,2.1)},scale=0.8]
					\begin{scope}
						\draw(0,0)--node[](){}(-1,1);
						\draw(0,0)--(0,1);
						\draw(0,0)--(1,1);
						\draw(0,0)--(0,-1.4);
					\end{scope}
					\begin{scope}[shift={(0,1)}]
						\draw(0,0)--(0,1);
						\draw(-1,0)--(-1,1);
						\draw(1,0)--node[ ]{  \bf /}(1,1);
					\end{scope}
					\begin{scope}[shift={(0,2)}]
						\draw(-0,1)--(1,0);
						\draw(0,0)--(1,1)--(1,2.7);
						\draw(-1,0)--(-1,1)--(-0.5,2)--(0,1);
						\draw(-.5,2)--(-.5,2.7);
					\end{scope}
				\end{scope}
				\begin{scope}[shift={(-0.4,5.85)}]
					\draw(-1,0)--(0,1)--(1,0);
					\draw(0.2,0)--(1,1);
				\end{scope}
				\begin{scope}[shift={(-0.4,5.85)}]
					\draw(0,1)--(0,2);
					\draw(1,1)--(1,2);
					\draw(2.2,0)--(2.2,2);
				\end{scope}
			\end{scope}
			\begin{scope}[shift={(6.5,0)}]
				\draw(0,0)--(-2.5,1);
				\draw(0,0)--(-1.5,1);
				\draw(0,0)--(-0.5,1);
				\draw(0,0)--(2.5,1);
				\draw(0,0)--(1.5,1);
				\draw(0,0)--(0.5,1);
				\draw(0,0)--(0,-1.4);
				\begin{scope}[shift={(0,1)}]
					\draw(2.5,0)--(2.5,1);
					\draw(1.5,0)--(1.5,1);
					\draw(0.5,0)--(0.5,1);
					\draw(-0.5,0)--(-0.5,1);
					\draw(-1.5,0)--(-1.5,1);
					\draw(-2.5,0)--(-2.5,1);
				\end{scope}
				\begin{scope}[shift={(0,2)}]
					\draw(2.5,0)--node[](){  /}(2.5,1);
					\draw(0.5,0)--(0.5,1);
					\draw(-0.5,0)--node[](){  /}(-0.5,1);
					\draw(-2.5,0)--(-2.5,1);
				\end{scope}
				\begin{scope}[shift={(0,3)}]
					\draw(-2.5,0)--(0,1);
					\draw(-0.5,0)--(0,1);
					\draw(0.5,0)--(0,1);
					\draw(2.5,0)--(0,1);
					\draw(0,1)--(0,2);
				\end{scope}
			\end{scope}
			\node ( ) at (10,1) {=};
			\begin{scope}[shift={(13.5,0)}]
				\draw(0,0)--(-2.5,1);
				\draw(0,0)--(-1.5,1);
				\draw(0,0)--(2.5,1);
				\draw(0,0)--(1.5,1);
				\draw(0,0)--(0.5,1);
				\draw(0,0)--(0,-1.4);
				\begin{scope}[shift={(0,1)}]
					\draw(2.5,0)--(2.5,1);
					\draw(1.5,0)--(1.5,1);
					\draw(0.5,0)--(0.5,1);
					\draw(-1.5,0)--(-1.5,1);
					\draw(-2.5,0)--(-2.5,1);
				\end{scope}
				\begin{scope}[shift={(0,2)}]
					\draw(2.5,0)--node[](){  /}(2.5,1);
					\draw(-2.5,0)--(-2.5,1);
				\end{scope}
				\begin{scope}[shift={(0,3)}]
					\draw(-2.5,0)--(0,1);
					\draw(2.5,0)--(0,1);
					\draw(0,1)--(0,2);
				\end{scope}
			\end{scope}
			\node ( ) at (17,1) {=};
			\begin{scope}[shift={(19,0)} ]
				\begin{scope}
					\draw(0,0)--(-1,1);
					\draw(0,0)--(0,1);
					\draw(0,0)--(1,1);
					\draw(0,0)--(0,-1.4);
				\end{scope}
				\begin{scope}[shift={(0,1)}]
					\draw(0,0)--(0,1);
					\draw(-1,0)--(-1,1);
					\draw(1,0)--node[ ]{  \bf /}(1,1);
				\end{scope}
				\begin{scope}[shift={(0,2)}]
					\draw(0,1)--(1,0);
					\draw(0,0)--(1,1);
					\draw(-1,0)--(-1,1)--(-0.5,2)--(0,1);
					\draw(1,1)--(1,2);
				\end{scope}
				\begin{scope}[shift={(1,3)}]
					\draw(0,1)--(-1,2);
					\draw(0,1)--(1,2);
					\draw(-1.5,1)--(-1.5,2);
				\end{scope}
			\end{scope}
		\end{tikzpicture}
		\caption{\label{deltaconjugation}\small Proof of the relation $\Omega_{c}^{2}\circ \Delta = (\id_{H} \sqcup \Delta) \circ \Omega_{c}$. }
	\end{figure}
\end{proof}
In the following Section, we recall basic facts on monoidal structures and inductive (direct) limits.
\subsection{Categorical independence}
\label{sec:monoidal}
In order to motivate our definitions, we start by reviewing the classical, commutative case. 
Let
\begin{equation}
	X : \left(\Omega,\mathcal{F},\mathbb{P}\right) \rightarrow \left(S,\mathcal{S}\right) \text{ and } Y : \left(\Omega,\mathcal{F},\mathbb{P} \right) \rightarrow \left(T, \mathcal{T} \right)
\end{equation}
be two essentially bounded random variables defined on the same classical probability space. Let $\tau_{X}$ and $\tau_{Y}$ denote the linear forms defined respectively on $L^{\infty}(S,\mathcal{S})$ and $L^{\infty}(T,\mathcal{T})$ by the distributions of $X$ and $Y$.

The random variables $X$ and $Y$ also induce  homomorphisms of the algebras of essentially bounded random variables
\[j_{X} : L^{\infty}(S,\mathcal{S}) \to L^{0}(\Omega,\mathcal{F}) \text{ and } j_{Y} : L^{0}(T,\mathcal{T}) \to L^{\infty}(\Omega,\mathcal{F}),\]
defined by $j_{X}(f)=f\circ X$ and $j_{Y}(g)=g\circ Y$.
We denote by ComProb the category of commutative probability spaces, with objects the algebras $L^{\infty}(\Omega,\mathcal{F},\mathbb{P})$ and morphisms the algebras morphisms $j_X$.

The independence of the random variables $X$ and $Y$ in the category ComProb is equivalent to the existence of a morphism $j$ such that the diagram in Fig. \ref{fig:comindep} commutes.

\begin{figure}[!htb]\centering
	\begin{tikzcd}[ampersand replacement=\&]
		\& \left(L^{\infty}(\Omega,\mathcal{F}), \mathbb{E}\right) \& \\
		\left(L^{\infty}(S,\mathcal{S}), \tau_{X}\right) \arrow{ru}{j_{X}} \arrow{r}{\iota} \& \arrow{u}{j} \left(L^{\infty}(S,\mathcal{S}),\tau_{X}\right)\otimes\left(L^{\infty}(T,\mathcal{T}),\tau_{Y}\right) \& \arrow[swap]{l}{\iota} \arrow{lu}[swap]{j_{Y}} \left(L^{\infty}(T,\mathcal{T}),\tau_{Y}\right)
	\end{tikzcd}
  \caption{\label{fig:comindep} Classical independence of two random variables $X$ and $Y$.}
\end{figure}

In order to generalize the notion of independence from the category of commutative probability spaces to an arbitrary category of non-commutative probability spaces, it appears that a notion of monoidal product is needed. A category in which the monoidal product of two objects is defined is called a monoidal category or a tensor category.
A monoidal category $\mathcal{C}$ is a category $\mathcal{C}$ together with a bifunctor $\otimes:\mathcal{C}\times\mathcal{C}\to\mathcal{C}$ which
\begin{enumerate}
	\item is associative under a natural isomorphism with components
	      \begin{equation*}
		      \alpha_{A,B,C}: A\otimes (B\otimes C) \overset{\cong}{\to}(A \otimes B) \otimes C
	      \end{equation*}
	      called associativity constraints,
	\item has a unit object $E \in \Obj(\mathcal{C})$ acting as left and right identity under natural isomorphisms with components:
	      \begin{equation*}
		      \ell_{A}:E\otimes A \overset{\cong}{\to} A,~ r_{A}:A\otimes E \overset{\cong}{\to}A
	      \end{equation*}
	      called left unit constraint and right unit constraint such that the pentagon and triangle identities hold, see Fig. \ref{pentagon} and Fig. \ref{triangle}
\end{enumerate}
\begin{figure}[!htb]\centering
	\adjustbox{scale=0.9}{
		\begin{tikzcd}[column sep=0pt]
			&[-1.6cm] &[-1.5cm](A \otimes (B\otimes C))\otimes D \arrow[start anchor={[shift={(2pt,0)}]east}, end anchor={[shift={(0pt,0pt)}]north}]{rrd}&[-1cm] &[-2cm] \\
			[10pt]((A\otimes B)\otimes C)\otimes D \arrow{rd}\arrow[start anchor={[shift={(0,0)}]north}, end anchor={[shift={(2pt,0pt)}]west}]{rru} & & & & A \otimes ((B\otimes C)\otimes D) \arrow{dl} \\[15pt]
			&(A\otimes B)\otimes (C\otimes D) \arrow{rr}& &A \otimes (B\otimes (C\otimes D))& \\
		\end{tikzcd}
	}
	\caption{\label{pentagon}\small Pentagonal coherence axiom for monoidal categories}
\end{figure}
\begin{figure}[!htb]\centering
	\begin{tikzcd}[column sep=0pt]
		(A\otimes E) \otimes B\arrow{rd} \arrow{rr} & &\arrow{ld} A\otimes(E\otimes B)\\
		&A\otimes B&
	\end{tikzcd}
	\caption{\label{triangle}\small Triangle coherence axiom for monoidal categories}
\end{figure}

The pentagon identities Fig.\ref{pentagon} and triangle identities Fig.\ref{triangle} imply commutativity of all diagrams which contain the associativity constraint, the natural isomorphisms $\ell$ and $r$. The following definition is motivated by the fact that in a tensor category, there is in general no canonical morphism from an object to its monoidal product with another object.

\begin{definition}[Monoidal category with inclusions]
	\label{monoidal}
	A monoidal category with inclusions $(\mathcal{C}, \otimes, \iota)$ is a tensor category $\left(\mathcal{C}, \otimes \right)$ in which, for any two objects $B_{1}$ and $B_{2}$, there exist two morphisms $\iota_{B_{1}} : B_{1} \rightarrow B_{1} \otimes B_{2}$ and $\iota_{B_{2}} : B_{2} \rightarrow B_{1} \otimes B_{2}$ such that for any two objects $\mathcal{A}_1$ and $\mathcal{A}_2$ and any two morphisms $f_{1} : \mathcal{A}_1 \rightarrow B_{1}, f_{2} : \mathcal{A}_2 \rightarrow B_{2}$, the following diagram (Fig \ref{fig:injectionnatural}) commutes:
	\begin{figure}[!htb]\centering
		\begin{tikzcd}[ampersand replacement=\&]
			\mathcal{A}_1 \arrow{r}{\iota_{\mathcal{A}_1}} \arrow{d}{f_{1}} \& \mathcal{A}_1 \otimes \mathcal{A}_2 \arrow{d}{f_{1} \otimes f_{2}}\& \arrow[swap]{l}{\iota_{\mathcal{A}_2}} \arrow{d}{f_{2}} \mathcal{A}_2 \\
			B_{1} \arrow{r}{\iota_{B_{1}}} \& B_{1} \otimes B_{2} \& \arrow[swap]{l}{\iota_{B_{2}}} B_{2}
		\end{tikzcd}
    \caption{\label{fig:injectionnatural} The natural transformations $\iota_{1}$ and $\iota_{2}$.}
	\end{figure}
\end{definition}
Let $\mathcal{P}_{i}:\mathcal{C}\times\mathcal{C}\to \mathcal{C}$, $i\in\{1,2\}$ be the projections functors on the first and second component. We can reformulate Definition \ref{monoidal}, a monoidal category with inclusions is a monoidal category with \emph{two natural transformations}\footnote{\url{https://ncatlab.org/nlab/show/natural+transformation}} $\iota_{1}: P_{1} \Rightarrow \otimes$ and $\iota_{2}:P_{2}\Rightarrow \otimes $. The natural transformation $\iota_{1}$ is called a \emph{left inclusion} and $\iota_{2}$ is called a \emph{right inclusion}.
We can now give a general definition of independence of two morphisms.
\begin{definition}
	\label{independence2}
	Let $\left(\mathcal{C},\otimes, \iota \right)$ be a monoidal category with injections. Two morphisms $f_{1} : C_{1} \rightarrow A$ and $f_{2} : C_{2} \rightarrow A$ are said to be \emph{independent} if there exists a third morphism $f: C_{1} \otimes C_{2} \rightarrow A$ such that the following diagram (Fig. \ref{fig:independencemorphisms})commutes:
	\begin{figure}[!htb]\centering
		\begin{tikzcd}[ampersand replacement=\&]
			\&  A   \& \\
			\arrow{ru}{f_{1}}\arrow[swap]{r}{\iota_{C_{1}}}C_{1} \& \arrow{u}[swap]{f}  C_{1} \otimes C_{2} \& \arrow{l}{\iota_{C_{2}}} C_{2} \arrow{lu}[swap]{f_{2}}
		\end{tikzcd}
    \caption{\label{fig:independencemorphisms} Independence of two morphisms $f_{1}$ and $f_{2}$.}
	\end{figure}
\end{definition}
If we want to be explicit about the monoidal structure that is involved, we will say that $f_{1}$ and $f_{2}$ are $\otimes$-independent. Definition \ref{independence} defines what it means for two morphisms to be independent but not what mutual independence of a finite set of morphisms is. To define mutual independence we need further assumptions on the monoidal structure, see just after Example \ref{ex:categomonoid}.
The morphism $f$ of the last definition is called an independence morphism. In most examples this morphism, if it exists, will be unique.
We warn the reader: a coproduct with injections is a monoidal product with injections for which \emph{any} two morphisms with the same target space are independent. A monoidal product with injections is not a coproduct.

Let $R$ be an unital associative algebra and denote by $\Prob(R)$ the category of algebras endowed with an expectation (we do not require the expectation to be a positive map). In Example \ref{ex:algcat}, we saw that the category $\Alg^{\star}$ is algebraic, and the coproduct is the free product of algebras.  Pick a monoidal product $\otimes$ on $\Prob$. Let $A$ and $B$ be two probability spaces and write $A \otimes B = \left(C, \tau_{A \otimes B}\right)$ with $C\in\Alg^{\star}$. From the universal property satisfied by the co-product, there exists a morphism of involutive algebras $\pi : A \sqcup B \rightarrow C$. Define the bi-functor $\otimes^{\prime}$ on $\Prob$ by, for two objects $A$ and $B$ $\in \Prob$:
\begin{equation}
	A \otimes^{\prime} B = \left(A \sqcup B, \tau_{A\otimes B}\circ \pi \right)
\end{equation}
and for two morphisms $f:\mathcal{A}_1\to \mathcal{A}_2$, $g:B_{1}\to B_{2}$, $\otimes^{\prime}(f,g) = f \free g$. We prove that $\otimes^{\prime}$ is a monoidal product with injections on $\Prob$. One fact needs to be proved. Consider two objects $(A_{i},\tau_{A_{i}}), (B_{i},\tau_{B_{i}}) \in \{1,2\}$ and two morphisms $f:(\mathcal{A}_1,\tau_{\mathcal{A}_1}) \to (\mathcal{A}_2,\tau_{\mathcal{A}_2})$ and $g:(B_{1},\tau_{B_{1}})\to (B_{2},\tau_{B_{2}})$ of $\Prob$. First, we prove that $\tau_{\mathcal{A}_2\otimes B_{2}}\circ (f \free g) = \tau_{\mathcal{A}_1\otimes B_{1}}$. To that end, we draw the commutative diagram in Fig. \ref{otimesprime}, blue arrows are morphisms of the category $\Prob$, while black arrows are morphisms in the category $\Alg^{\star}$.
\begin{figure}[!htb]\centering
	\begin{tikzcd}[every matrix/.append style={name=m},
			execute at end picture={
					\draw [<-,>=latex] ([yshift=6mm,xshift = 10mm]m-3-1.east) arc[start angle=-180,delta angle=270,radius=0.15cm];
					\draw [<-,>=latex] ([yshift=6mm,xshift = 0mm]m-3-2.east) arc[start angle=-180,delta angle=270,radius=0.15cm];
					\draw [<-,>=latex] ([yshift=5mm,xshift = 15mm]m-2-1.east) arc[start angle=-180,delta angle=270,radius=0.15cm];
					\draw [<-,>=latex] ([yshift=5mm,xshift = -18mm]m-2-3.west) arc[start angle=-180,delta angle=270,radius=0.15cm];
				}]
		&(\mathcal{A}_1 \otimes B_{1},\tau_{\mathcal{A}_1\otimes B_{1}})\arrow[bend right=90,cyan]{ddd}& \\
	(\mathcal{A}_1,\tau_{\mathcal{A}_1})\arrow[cyan]{r}\arrow[cyan]{ru}\arrow[cyan,"f"]{d}&(\mathcal{A}_1\free B_{1},\tau_{\mathcal{A}_1\otimes B_{1}} \circ \pi)\arrow[cyan]{u}\arrow["f\free g"]{d}& (B_{1},\tau_{B_{1}})\arrow[cyan]{lu}\arrow[cyan,"g"]{d}\arrow[cyan]{l}\\
	(\mathcal{A}_2,\tau_{\mathcal{A}_2})\arrow[cyan]{dr}\arrow[cyan]{r}& (\mathcal{A}_2 \free B_{2},\tau_{\mathcal{A}_2\otimes B_{2}} \circ \pi)\arrow[cyan]{d}& \arrow[cyan]{l}(B_{2},\tau_{B_{2}})\arrow[cyan]{dl} \\
		&(\mathcal{A}_2\otimes B_{2},\tau_{\mathcal{A}_2\otimes B_{2}})&
	\end{tikzcd}
	\caption{\label{otimesprime} The morphism $f\sqcup g$ is trace-preserving, $\tau_{\mathcal{A}_2\otimes B_{2}}\circ f \sqcup g  = \tau_{\mathcal{A}_1\otimes B_{1}}$.}
\end{figure}
In Fig. \ref{otimesprime}, the morphisms $\iota_{A_{i}}:A_{i}\to A_{i} \free B_{i}$ and $\iota_{B_{i}}:B_{i}\to A_{i} \free B_{i}$ are drawn in blue. In fact, it is easily seen that preserving the trace for these morphisms is equivalent to the commutativity of the four triangles in Fig. \ref{otimesprime}.
To show that the free product $f \free g$ preserves traces, it is enough to show the commutativity of the two outer faces, the ones bordered with the blue peripheral arrows sharing the curved downward arrow as an edge. The commutativity of these two faces is implied by the universal property satisfied by $\otimes$. In conclusion, the arrow $f\sqcup g$ is equal to a composition of blue arrows and is thus trace-preserving.
Assume that $\mathcal{A}_2 = B_{2}$. If the two morphisms $f$ and $g$ are $\otimes$-independent then they are also $\otimes^{\prime}$-independent. Hence, in the sequel, there is no loss in replacing the monoidal product $\otimes$ by $\otimes^{\prime}$ and thus assuming that the underlying involutive algebra of the monoidal product of two probability spaces is the free product of the algebras. We provide a few examples of monoidal structures with inclusions on the categories of amalgamated probability spaces, see \cite{franz2004theory} for a detailed overview. Of which we recall the definition. 

With $R$ an involutive unital complex algebra, we denote by $\Prob(R)$ the category whose objects are pairs $(A, E_A)$ where $A$ is \emph{bimodule algebra} over $R$ endowed with a compatible involution and an \emph{expectation} $E_A: A\to R$ is bimodule map compatible with the involutions on $A$ and $R$. Recall that a bimodule algebra is at the same time a right module algebra and a left module algebra, that the two actions commute and they satisfy the cross-associativity $(r_1a)r_2=r_1(ar_2)$, $r_1,r_2 \in R, a \in A$. Morphisms between two amalgamated probability spaces are expectations preserving involutive bimodule algebra morphisms. When $R=\C$, we simply call $\Prob(C)$ the category of probability spaces. 

When $R$ is \emph{commutative}, we denote by {\rm comProb}$(R)$ the category of commutative probability spaces over $R$, objects are probability spaces over $R$, but commutative as algebras and, moreover, commutative as bimodules: the left and right actions of $R$ are equal.
\begin{example}
\label{ex:categomonoid}
\begin{enumerate}[\indent 1.]
	\item The category $\Prob(\mathbb{C})$ of complex probability spaces can be endowed with several monoidal structures. Three of them that will be interesting for the present work and are denoted by $\hat{\otimes}$, $*$ and $\diamond$. Let $(A,E_{A})$ and $(B,E_{B})$ be two probability spaces; $E_{A}:A\to\mathbb{C}$ and $E_{B}:B\to\mathbb{C}$ are two complex positive unital linear forms.
	      \begin{enumerate}[\indent a.]
		      \item The linear for $E_{A}\,\hat{\otimes}\,E_{B}$ is a linear form on the free product of algebras $A \sqcup B$ and is defined, setting for an alternating word $s = a_1b_1\cdots a_pb_p\in A\free B$, $a_i \in A, b_i \in B, 1 \leq i \leq p$,
		            \begin{equation}
			            \label{tensorindep}
			            \left(E_{A}\hat{\otimes}E_{B}\right)(s) = E_{A}(a_{1}\cdots a_{p})E_{B}(b_{1}\cdots b_{p}).
		            \end{equation}
		            The bifunctor $\hat{\otimes}$ that send the pair of probability spaces $\left((A,E_{A}),(B,E_{B})\right)$ to $(A \sqcup B, E_{A} \hat{\otimes} E_{B})$ is a monoidal structure which is, in addition, \emph{symmetric}. This means in particular that there exists a \emph{commutativity constraint} $$S_{(A,E_A),(B,E_B)}:(A\sqcup B, E_A\hat{\otimes} E_B) \to (B\sqcup A, E_B\hat{\otimes} E_A)$$ for each pair of probability space $(A,E_A),(B,E_B)$. Besides $S$ is a natural transformation from $\hat{\otimes}$ to $\hat{\otimes}\circ \tau$ where $\tau$ swap the two arguments of $\hat{\otimes}$.
                    $S_{(A,E_{A}),(B,E_B)}$ is the algebra morphism
                    determined by $S_{(A,E_{A}),(B,E_B)} \circ \iota^{A\sqcup B}_{A} = \iota_{A}^{B\sqcup A}$ and $S_{(A,E_{A}),(B,E_B)} \circ \iota^{A\sqcup B}_{B} = \iota_{B}^{B\sqcup A}$. 
              This monoidal structure is related to the notion of \emph{universal tensor independence}. To define $\hat{\otimes}$ we have used the free product of star algebras. Instead, we could have used the monoidal product of algebras. In fact, the functor $\otimes$ (with a slight abuse of notations) that sends $(A, E_{A})$ and $(B,E_{B})$ to $(A\otimes B, E_{A}\otimes E_{B})$ is, first, well defined and is a monoidal structure on $\Prob(\mathbb{C})$.
		            With obvious notations, two morphisms $f:(A,E_{A}) \to (C,E_{C})$ and $g:(B,E_{B})\to (C,E_{C})$ are $\otimes$ independent if and only if:
		            \begin{enumerate}[\indent 1.]
			            \item \label{tensorind1} $\forall a\in A,~b \in B,~[f(a),g(b)] = 0$,
			            \item \label{tensorind2}$E_{C}\left(f(a_{1})g(b_{1}) \cdots f(a_{p})g(b_{p})\right) = E_{A}\left(f(a_{1}\cdots a_{p})\right)E_{B}\left(g(b_{1}\cdots b_{p})\right)$.
		            \end{enumerate}
		            Notice that the two morphims $f$ ang $g$ are $\hat{\otimes}$ independent if only point $\ref{tensorind2}$ holds. 
		            Of course, the map $\tau_{A\otimes B}: A \otimes B \to B\otimes A$ equal to the identity on $A$ and $B$ is a state preserving morphism: $\tau_{A \otimes B}\circ E_{B} \otimes E_{A} = E_{A}\otimes E_{B}$ and $\otimes$ is symmetric as well.
		      \item The second monoidal structure is related to the notion of \emph{boolean independence}. The state $E_{A} \diamond E_{B}$ is defined on the free product $A\free B$ and satisfies:
		            \begin{equation*}
			            \left(E_{A}\diamond E_{B}\right)(s_{1} \cdots s_{p}) = E_{\varepsilon_{1}}(s_{1}) \cdots E_{\varepsilon_{p}}(s_{p}).
		            \end{equation*}
		            with $s_{1},\ldots,s_{p} \in A\cup B$ and $s_1\cdots s_p$ an alternated word, $\varepsilon_{i} = A$ if $s_{i} \in A$ and $\varepsilon_{i} = B$ if $s_{i} \in B$. Since $E_{A}$ and $E_{B}$ are unital map, the boolean product is well defined. The bifunctor $\diamond$ that send the pair of probability spaces $\left(A,E_{A}\right)$, $\left(B,E_{B}\right)$ to $\left(A\free B, E_{A}\diamond E_{B} \right)$ is a symmetric monoidal structural.
		      \item The third and last monoidal structural we define is related to the notion of \emph{free independence}, at the origin of free probability theory. The free product of $E_{A}$ and $E_{B}$ is defined by the following requirement. For any alternating word with $s_i \in A\free B$,
		            \begin{equation}
			            \label{freeproduct}
			            \left(E_{A}\,\star\,E_{B}\right)(s) = 0,~s_{i} \in {\sf ker}(E_{\varepsilon_{i}}),~i \leq p,~j\leq m.
		            \end{equation}
              where ${\sf ker}$ denotes the kernel.
	      \end{enumerate}
	      A formula for $E_{A}\,\freem\,E_{B}(s)$ can be computed inductively. Computation of the free product of $E_{A}$ and $E_{B}$ are simple for words with small lengths,
	      \begin{equation*}
		      \begin{split}
			      &(E_{A}\star E_{B})(ab)= E_{A}(a)E_{B}(b), \\
			      &(E_{A}\star E_{B})(a_{1}ba_{2}) = E_{A}(a_{1})E_{B}(b)E_{A}(a_{2}) + E_{A}(a_{1}E_{B}(b)a_{2}) + E_{A}(a_{1})E_{B}(b)E_{A}(a_{2}) \\ &\hspace{10cm}-2E_{A}(a_{1})E_{B}(b)E_{A}(a_{2}) \\
                 &\hspace{2.8cm}=E_A(a_1E_B(b)a_2)
		      \end{split}
	      \end{equation*}
	      See \cite{speicher1998combinatorial} and \cite{nica2006lectures}.
	\item Let $R$ be an unital associative algebra, we define two monoidal structures on the category of amalgamated probability spaces, $\Prob(R)$. In case $R$ is commutative, we define a monoidal structure on the category com\Prob$(R)$. 
    Let $(A,E_{A})$ and $(B,E_{B})$ be two probability spaces in $\Prob(R)$.
	      \begin{enumerate}[\indent a.]
		      \item We begin with the amalgamated free product $E_{A}\star_{R}E_{B}$ of $E_{A}$ and $E_{B}$. It is defined by the equation \eqref{freeproduct}, for all alternating word $s \in A\free_{R}B$ in the amalgamated free product $A\free_{R}B$, $E_{A}\star_{R}E_{B}$ is defined by requiring:
		            \begin{equation*}
			            \left(E_{A}\,\star_{R}\,E_{B}\right)(s) = 0,~s_{i} \in {\sf ker}(E_{\varepsilon_{i}}),~i \leq p,~j\leq m.
		            \end{equation*}
		      \item The amalgamated boolean product $E_{A}\diamond_{R}E_{B}$ of the two expectations $E_{A}$ and $E_{B}$ is defined by:
		            \begin{equation*}
			            \left(E_{A} \diamond_{R} E_{B}\right)(s_{1}\cdots s_{p}) = E_{\varepsilon_{1}}(s_{1}) \cdots E_{\varepsilon_{p}}(s_{p})
		            \end{equation*}
		            with $s_{1},\ldots,s_{p} \in A \cup B$ and $s_1\cdots s_p$ an alternated word , $\varepsilon_{i} = A$ if $s_{i} \in A$ and $\varepsilon_{i} = B$ if $s_{i} \in B$.
	      \end{enumerate}
	\item We defined amalgamated versions of the notion of boolean and free independence, using essentially the same formulae as for the non-amalgamated case. We can not do so for tensor independence, at least for two reasons. First, the amalgamated monoidal product $A\otimes_{R}B$ with bimodule structure given by
	      \begin{equation*}
		      r(a \otimes_{R} b)r^{\prime}=ra\otimes_{R} br^{\prime},~a \in A,~b \in B,~r,r^{\prime} \in R,
	      \end{equation*}
	      is not naturally an algebra, meaning that the canonical product $A\otimes B$ does not descend to a product on $A{\otimes}_{R}B$. Secondly, formula \eqref{tensorindep} can not be used to define an amalgamated version of tensor independence, since there may be $a_{1},a_{2}\in A$, $b_{1} \in B$ and $r \in R$ with $E(a_{1}ra_{2})E(b_{1})\neq E(a_{1}a_{2})E(rb_{1})$,~
	      Finally, in the last section of the present work we use amalgamated probability spaces over commutative algebras $R$.
    \end{enumerate}
\end{example}
We mentioned earlier there might be an issue if we try to define independence of more than two morphisms. An additional constraint needs to be satisfied by the natural morphisms $\iota^{1}$, $\iota^{2}$ and the unit $E$. This is the content of the next definition.
\begin{definition}
	\label{inclusionunit}
	Inclusions $\iota^{1}$ and $\iota^{2}$ are called compatible with unit constraints if the diagram
	\begin{figure}[!htb]\centering
		\begin{tikzcd}
			E \otimes A\arrow{rd} & \arrow{l}A\arrow{r}\arrow{d}&\arrow{dl} A \otimes E \\
			&A&
		\end{tikzcd}
		\caption{\label{fig:inclusionunit}\small Compatibilty constraint between inclusions and unit}
	\end{figure}
\end{definition}
A more vernacular way to put constraints of Definition \ref{inclusionunit} is that inclusions $\iota^{1}$ and $\iota^{2}$ and left and right unit morphisms are, respectively, inverse from each other.
\begin{theorem}[Theorem 3.6 in \cite{gerhold2016categorial}] Let $(\mathcal{C},\otimes,E,\alpha,\ell,r,\iota^{1},\iota^{2})$ be a monoidal category.
	\begin{enumerate}[\indent 1. ]
		\item If $\iota^{1}$ and $\iota^{2}$ are inclusions which are compatible with unit constraints, the unit object $E$ is initial, i.e there is a unique morphism $\eta_{A}:E \to A$ for every object $A \in \Obj(\mathcal{C})$.
		      Furthermore,
		      \begin{equation}
			      \tag{$\star$}
			      \label{inclusionsinitial}
			      \iota^{1}_{A,B} = \left(\id_{A} \otimes \eta_{B} \right) \circ r_{A}^{-1},\quad \iota^{2}_{A,B} = \left(\eta_{A} \otimes \id_{B} \right) \circ \ell_{B}^{-1}.
		      \end{equation}
		      holds for all objects $A,B,C \in \Obj(\mathcal{C})$.
		\item Suppose that the unit object $E$ is an initial object. Then \eqref{inclusionsinitial} read as a definition yields inclusions $\iota^{1}$, $\iota^{2}$ which are compatible with the unit constraints.
	\end{enumerate}
\end{theorem}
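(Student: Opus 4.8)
For the statement to prove (Theorem 3.6 of \cite{gerhold2016categorial}), the plan is to establish the two assertions in turn, using only naturality of $\alpha,\ell,r$, naturality of the inclusions regarded as natural transformations $\iota^{1}:P_{1}\Rightarrow\otimes$ and $\iota^{2}:P_{2}\Rightarrow\otimes$, the commutativity of the diagram of Definition~\ref{inclusionunit}, and the standard coherence identity $\ell_{E}=r_{E}$ (which holds in every monoidal category, being a consequence of the pentagon and triangle axioms). I first note that the two unlabelled arrows in the diagram of Definition~\ref{inclusionunit} can only be $\iota^{2}_{E,A}:A\to E\otimes A$ and $\iota^{1}_{A,E}:A\to A\otimes E$, so that ``compatible with unit constraints'' unwinds to the two equations $\ell_{A}\circ\iota^{2}_{E,A}=\id_{A}$ and $r_{A}\circ\iota^{1}_{A,E}=\id_{A}$.

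For the first assertion I would exhibit the morphism $\eta_{A}:=\ell_{A}\circ\iota^{1}_{E,A}\colon E\to A$ and then prove it is the only one. Given any $g\colon E\to A$, naturality of $\iota^{1}$ at the morphism $(\id_{E},g)\colon(E,E)\to(E,A)$ of $\mathcal{C}\times\mathcal{C}$ gives $\iota^{1}_{E,A}=(\id_{E}\otimes g)\circ\iota^{1}_{E,E}$; composing on the left with $\ell_{A}$ and using naturality of $\ell$ in the form $\ell_{A}\circ(\id_{E}\otimes g)=g\circ\ell_{E}$ yields $\ell_{A}\circ\iota^{1}_{E,A}=g\circ(\ell_{E}\circ\iota^{1}_{E,E})$. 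Since the case $A=E$ of compatibility gives $r_{E}\circ\iota^{1}_{E,E}=\id_{E}$ and $\ell_{E}=r_{E}$, we get $\ell_{E}\circ\iota^{1}_{E,E}=\id_{E}$, hence $g=\ell_{A}\circ\iota^{1}_{E,A}$ for every $g$; thus $E$ is initial with $\eta_{A}=\ell_{A}\circ\iota^{1}_{E,A}$. Formula $(\star)$ then follows at once: naturality of $\iota^{1}$ at $(\id_{A},\eta_{B})\colon(A,E)\to(A,B)$ gives $\iota^{1}_{A,B}=(\id_{A}\otimes\eta_{B})\circ\iota^{1}_{A,E}$, and $r_{A}\circ\iota^{1}_{A,E}=\id_{A}$ together with invertibility of $r_{A}$ gives $\iota^{1}_{A,E}=r_{A}^{-1}$; the computation for $\iota^{2}$ is obtained by replacing $r$ by $\ell$ and exchanging the roles of the two factors.

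For the converse, assume $E$ is initial, write $\eta_{A}\colon E\to A$ for the unique morphism, and \emph{define} $\iota^{1}_{A,B}$, $\iota^{2}_{A,B}$ by the two formulas in $(\star)$. To check that $\iota^{1}$ is a natural transformation I would compute, for $f_{1}\colon A_{1}\to B_{1}$ and $f_{2}\colon A_{2}\to B_{2}$, that $(f_{1}\otimes f_{2})\circ\iota^{1}_{A_{1},A_{2}}=\bigl(f_{1}\otimes(f_{2}\circ\eta_{A_{2}})\bigr)\circ r_{A_{1}}^{-1}$, then use $f_{2}\circ\eta_{A_{2}}=\eta_{B_{2}}$ (uniqueness of maps out of $E$), factor $f_{1}\otimes\eta_{B_{2}}=(\id_{B_{1}}\otimes\eta_{B_{2}})\circ(f_{1}\otimes\id_{E})$, and finish with naturality of $r$ (rewriting $(f_{1}\otimes\id_{E})\circ r_{A_{1}}^{-1}=r_{B_{1}}^{-1}\circ f_{1}$) to arrive at $\iota^{1}_{B_{1},B_{2}}\circ f_{1}$; the argument for $\iota^{2}$ is identical with $\ell$ in place of $r$. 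Compatibility with the unit constraints is then immediate: since $\eta_{E}=\id_{E}$, the formulas give $\iota^{1}_{A,E}=r_{A}^{-1}$ and $\iota^{2}_{E,A}=\ell_{A}^{-1}$, whence $r_{A}\circ\iota^{1}_{A,E}=\id_{A}$ and $\ell_{A}\circ\iota^{2}_{E,A}=\id_{A}$.

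The only step that is not routine diagram chasing is the coherence identity $\ell_{E}=r_{E}$ invoked in the uniqueness part of the first assertion; I would quote it (Mac Lane, or Kelly's direct lemma) rather than reprove it. Beyond that, the main care required is purely bookkeeping: keeping track of which tensor slot each inclusion occupies, and applying each naturality square at the correct morphism of $\mathcal{C}$ or of $\mathcal{C}\times\mathcal{C}$.
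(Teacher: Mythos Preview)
Your proof is correct. Note, however, that the paper does not actually prove this theorem: it is quoted as Theorem~3.6 of \cite{gerhold2016categorial} and stated without proof, so there is no argument in the paper to compare against. Your approach---exhibiting $\eta_{A}=\ell_{A}\circ\iota^{1}_{E,A}$, using naturality of $\iota^{1}$ together with the coherence identity $\ell_{E}=r_{E}$ to force uniqueness, and then reading off $(\star)$ from naturality and compatibility---is the standard one and is essentially what one finds in the cited reference.
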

As pointed out in \cite{gerhold2016categorial}, Maclane's coherence theorem can be extended to all diagrams built up from the associative constraints, left and right units, and the natural morphisms $\eta$. This extended Maclane coherence theorem implies, in particular, commutativity of the diagrams in Fig. \ref{communitconstraintun} and Fig. \ref{communitconstraintdeux} in case compatibility with the unit constraints of inclusions are satisfied.
\begin{figure}[!htb]\centering
	\begin{minipage}{0.48\textwidth}
		\centering
		\begin{tikzcd}[ampersand replacement=\&]
			(A \otimes C)\arrow["\id_{A}\otimes\iota^{2}"]{d} \arrow["\iota^{1}\otimes\id_C"]{r} \& (A\otimes B) \otimes C \arrow["\alpha_{A,B,C}"]{ld}\\
			A \otimes (B\otimes C)
		\end{tikzcd}
	\end{minipage}%
	\begin{minipage}{0.48\textwidth}
		\centering
		\begin{tikzcd}[ampersand replacement=\&]
			A \otimes B \arrow["\iota^{1}"]{d}\&\arrow[swap,"\iota^{2}"]{l}B\arrow["\iota^{1}"]{r}\&B \otimes C \arrow[swap,"\iota^{2}"]{d} \\
			(A\otimes B) \otimes C \arrow["\alpha_{A,B,C}"]{rr}\& \& A \otimes (B \otimes C)
		\end{tikzcd}
	\end{minipage}
	\caption{\label{communitconstraintun} Compatibility between the injections and the associativity constraints.}
\end{figure}
\begin{figure}[!htb]\centering
	\begin{tikzcd}
		C \arrow["\iota^{2}"]{d} \arrow["\iota^{2}"]{r} & B \otimes C \arrow["\iota^{2}"]{d} \\
		(A \otimes B) \otimes C \arrow["\alpha_{A,B,C}"]{r} & A \otimes (B \otimes C)
	\end{tikzcd}
	\caption{\label{communitconstraintdeux} Compatibility between the injections and the associativity constraints.}
\end{figure}
From here onwards, monoidal structures with injections we consider satisfy the compatibility with unit constraints. If $1 \leq i_{1}<\cdots<i_{k}\leq n$, there are unique morphisms $\iota^{i_{1},\ldots,i_{k};n}_{\mathcal{A}_1,\ldots,A_{n}}:A_{i_{1}}\otimes\cdots\otimes A_{i_{i_k}} \to \mathcal{A}_1\otimes\cdots A_{n}$ that constitute natural transformations $\iota^{i_{1},\ldots,i_{k};n}$, referred to as inclusion morphisms.
In the sequel, the symbol $\mathcal{C}$ stands for a monoidal category such that the unit object is initial.
\begin{definition}[Mutual independence, Definition 3.7 in \cite{gerhold2016categorial}]
	Let $B_{1},\ldots,B_{n},A$ be objects in the category $\mathcal{C}$ and $f_{i}:B_{i}\to A$ morphisms. Then $f_{1},\ldots,f_{n}$ are called independent if there exists a morphism $h:B_{1}\otimes \cdots \otimes B_{n} \to A$ such that the following diagram (Fig. \ref{indepseq}) commutes:
	\begin{figure}[!htb]\centering
		\begin{tikzcd}
			&A \\
			B_{i}\arrow["f_{i}"]{ru}\arrow["\iota^{i;n}"]{r} &\arrow["h"]{u} B_{1} \otimes \cdots \otimes B_{n}
		\end{tikzcd}
		\caption{\label{indepseq} Mutual independence of a sequence of morphisms.}
	\end{figure}
\end{definition}
As a consequence of the existence of the natural injections $\iota_{\mathcal{A}_1,\ldots,A_{n}}^{i_{1},\ldots,i_{k}}$, a subsequence $(f_{i_{1}},\ldots,f_{i_{k}})$ of a sequence of independent morphisms $(f_{1},\ldots,f_{n})$ is independent for any tuple $(1 \leq i_{1} < \cdots < i_{k} \leq n)$. We warn the reader, we speak about independence of sequences of morphisms, since, with obvious notations, $(f,g)$ may be an independent sequence without $(g,f)$ being independent. The morphism $h$ in the commutative diagram of Fig. \ref{indepseq} is called an independence morphism. In most cases, this morphism is unique, and we use the notation $h_{f_{1},\ldots,f_{n}}$.

We end this section with a result on the category of probability spaces over an algebra $R$, central to the construction we will provide of a Master Field based on a $H$-algebra for symmetries, given the data of the Quantum Lévy process. We refer the reader to the Annexes for the definition of an inductively complete category and to the monograph.
\par The following theorem states that certain categories of bimodule-algebras over an unital associative algebra at stake in this work are closed for taking colimits; this is a rather abstract statement; the non-commutative analogue of Caratheodory's extension theorem in classical probability. Recall that the category $\Alg^{\star}(R)$ has been defined in point \ref{enum:alg*} of Example 1, the category $\biMod(R)$ in point \ref{def:bimod} of the same Example and the category $\Prob(R)$ right before Example \ref{ex:categomonoid}. We defer the proof of the next theorem to the Annexes.

We recall here the definition of a colimit. Pick a category $\mathcal{C}$ and a small category $\mathcal{D}$; a category whose class of objects is a set. Such a category is sometimes called a diagram. Pick a functor $F : \mathcal{C} \to \mathcal{D}$. A colimit of $\mathcal{F}$ in $\mathcal{C}$ is object $\lim_{\rightarrow} F$ together with morphisms $\iota_d : F(d)\mapsto \lim_{\rightarrow} F$ satisfying 
$$
\iota_{d^{\prime}} \circ F(f) = \iota_{d},~d,d^{\prime},~f:d\to d^{\prime},
$$
such that:
\begin{enumerate}
\item if $C \in \mathcal{C}$ is another object of $\mathcal{C}$,
\item and $\varphi_d : F(d)\to \mathcal{C},~d\in D$ are morphisms in $\mathcal{C}$ such that for objects $d$ and $d^{\prime}$ in $\mathcal{D}$ and morphism $f : d \to d^{\prime}$, 
$$
\varphi_{d^{\prime}} \circ F(f) = \varphi_{d}. 
$$
\end{enumerate}
There there exists an unique morphism $K : \lim_{\rightarrow} F \to C$ of $\mathcal{D}$ such that $K\circ \iota_{d} = \varphi_d$. The tuple $(\lim_{\rightarrow} F, (\iota_d)_{d \in \mathcal{D}})$ is called an universal (co)cone. 
When a colimit exists for any functor over any diagram exists, we say that the category is (co)complete.
We refer to the Annex for the definition of an upward-directed set and the proof of the following Theorem.
\begin{theorem}
\label{thm:takeda}
	When $\mathcal{C}=\Prob(R)$, $\Alg^{\star}(R)$ any functor from an upward direct set to $\mathcal{C}$ has a colimit in $\mathcal{C}$.
\end{theorem}
The above Theorem states a slightly weaker form of cocompletness. However, this is all what is needed to perform the construction of a Quantum Holonomy Field. In the more abstract setting of a Categorical Holonomy Field (see below), we should however require cocompletness, in order not to make too specific requirements.
\section{Quantum and Categorical Holonomy Fields}
\label{noncommasterfields}

This Section constitutes the core of our work. We begin with a gentle review of classical \emph{lattice} Holonomy Fields in the following Section \ref{sec:classicallattivehfield}. Sections \ref{sec:defiZhangHF} and \ref{sec:constructionlevyproc} are devoted to constructing Quantum Holonomy Fields from either a diagram of probability spaces over a directed set of sequences of lassos or a Quantum Lévy process. We will be more general than this and construct Holonomy Fields from a diagram of objects in a (co)complete category. 
\subsection{Classical lattice Holonomy Fields}
\label{sec:classicallattivehfield}
Before giving the main definitions of this paper, namely the definition of a Quantum Holonomy Field on a $H$-algebra (see Definition \ref{definition_master_field}), we will review briefly the classical notion of a Holonomy field on a lattice. 

In this paper, all the paths that we will consider will be piecewise affine continuous paths on the Euclidean plane $\R^{2}$. We denote by $\path(\R^{2})$ the set of all these paths. Each path $c$ has a starting point $\underline{c}$, an end point $\overline{c}$, an orientation, but it has no preferred parametrisation. Constant paths are allowed. Given two paths $c_{1}$ and $c_{2}$ such that $c_{1}$ finishes at the starting point of $c_{2}$, the concatenation of $c_{1}$ and $c_{2}$ is defined in the most natural way and denoted by $c_{1}c_{2}$. Reversing the orientation of a path $c$ results in a new path denoted by $c^{-1}$. A path which finishes at its starting point is called a {\em loop}. A loop which is as injective as possible, that is, a loop which visits twice its starting point and once every other point of its image, is called a {\em simple loop}. A path of the form $c\ell c^{-1}$, where $c$ is a path and $\ell$ is a simple loop, is called a {\em lasso} (see Fig. \ref{lasso}).

\begin{figure}[!htb]\centering
	\includegraphics[scale=1.5]{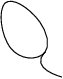}
	\caption{\label{lasso} \small A lasso drawn on the plane.}
\end{figure}

Let us call {\em edge} a path that is either an injective path (thus with distinct endpoints) or a simple loop. By a {\em graph} on $\R^{2}$, we mean a finite set $\mathbb E$ of edges with the following properties:
\begin{enumerate}[\indent 1.]
	\item for all edge $e$ of $\mathbb E$, the edge $e^{-1}$ also belongs to $\mathbb E$,
	\item any two edges of $\mathbb E$ which are not each other's inverse meet, if at all, at some of their endpoints,
	\item the union of the ranges of the elements of $\mathbb E$ is a connected subset of $\R^{2}$.
\end{enumerate}
From the set $\mathbb E$, we can form the set $\mathbb V$ of {\em vertices} of the graph, which are the endpoints of the elements of $\mathbb E$, and the set $\mathbb F$ of {\em faces} of the graph, which are the bounded connected components of the complement of the union of the range of the edges of the graph. Although it is entirely determined by $\mathbb E$, it is the triple $\mathbb G=(\mathbb V,\mathbb E,\mathbb F)$ that we regard as the graph.

From the graph $\mathbb G$, and given a vertex $v$, we form the set ${\sf L}_{v}(\mathbb G)$ of all loops based at $v$ that can be obtained by concatenating edges of $\mathbb G$.  The operation of concatenation makes ${\sf L}_{v}(\mathbb G)$ a monoid, with the constant loop as the unit element. In order to make a group out of this monoid, one introduces the backtracking equivalence of loops and the notion of reduced loops. A loop is {\em reduced} if in its expression as the concatenation of a sequence of edges (which is unique) one does not find any two consecutive edges of the form $ee^{-1}$. We denote by ${\sf RL}_{v}(\mathbb G)$ the subset of ${\sf L}_{v}(\mathbb G)$ formed by reduced loops. It is however not a submonoid of ${\sf L}_{v}(\mathbb G)$, for the concatenation of two reduced loops needs not to be reduced. The appropriate operation on ${\sf RL}_{v}(\mathbb G)$ is that of opposite concatenation followed by reduction where, as the name indicates, one concatenates two loops, the second one preceding the first one, and then erases sub-loops of the form $ee^{-1}$ until no such loops remain. It is true, although perhaps not entirely obvious, that this operation is well defined, in the sense that the order in which one erases backtracking sub-loops does not affect the final reduced loop.

From the graph $\mathbb G$ and the vertex $v$, we thus built a group ${\sf RL}_{v}(\mathbb G)$. This group is in fact isomorphic in a very natural way to the fundamental group based at $v$ of the subset of $\R^{2}$ formed by the union of the edges of $\mathbb G$. An important property of the group ${\sf RL}_{v}(\mathbb G)$ is that it is generated by lassos (see Fig. \ref{declasso})

\begin{figure}[!htb]\centering
	\includegraphics[scale=1]{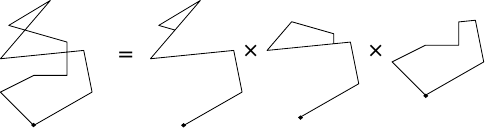}
	\caption{\label{declasso}\small Decomposition of a loop into a product of lassos based at $v=\bullet$.}
\end{figure}

In fact, ${\sf RL}_{v}(\mathbb G)$ is a free group, with a rank equal to the number of faces of $\mathbb G$, and it admits bases formed by lassos. More precisely, we will use the following description of a basis of this group. Let us say that a lasso $c\ell c^{-1}$ surrounds a face $F$ of the graph if the loop $\ell$ traces the boundary of $F$.

\begin{proposition}[Proposition 5.12 in \cite{gabriel2015planar}]
\label{prop:fondresult}
The group ${\sf RL}_{v}(\mathbb G)$ admits a basis formed by a collection of lassos, each of which surrounds a distinct face of $\mathbb G$.
\end{proposition}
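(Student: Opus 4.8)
The plan is to deduce the statement from two facts, the first of which is recorded just above: ${\sf RL}_{v}(\mathbb G)$ is a free group of rank $n:=\#\mathbb F$; and the second is the classical remark that a free group of finite rank is Hopfian, so that any family of $n$ generators of a free group of rank $n$ is automatically a free basis (apply Hopficity to the endomorphism sending a fixed free basis onto the given family). Granting these, it suffices to exhibit a generating family of ${\sf RL}_{v}(\mathbb G)$ consisting of exactly $n$ lassos, one around each face. For every face $F\in\mathbb F$ I would pick a vertex $v_{F}\in\partial F$, orient the boundary of $F$ into a reduced loop $\ell_{F}$ based at $v_{F}$ tracing $\partial F$ with $F$ on its left, choose an edge-path $c_{F}$ in $\mathbb G$ from $v$ to $v_{F}$, and set $\lambda_{F}:=c_{F}\,\ell_{F}\,c_{F}^{-1}$, a lasso surrounding $F$. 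What then remains is to show that $\{\lambda_{F}:F\in\mathbb F\}$ generates ${\sf RL}_{v}(\mathbb G)$.

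I would prove this by induction on $n=\#\mathbb F$. If $n=0$ then $\mathbb G$ is a plane tree, ${\sf RL}_{v}(\mathbb G)$ is trivial, and the empty family works. If $n\geq 1$, travel along a generic segment from a point of some bounded face to a far-away point of the unbounded face $F_{\infty}$: the edge $e_{0}\in\mathbb E$ across which this segment \emph{first} enters $F_{\infty}$ lies on $\partial F_{\infty}$ and on the boundary of a bounded face $F_{0}$, with $F_{0}$ and $F_{\infty}$ on its two sides; being incident to two distinct faces, $e_{0}$ is not a bridge. Let $\mathbb G'$ be the graph obtained by deleting $e_{0}$ and $e_{0}^{-1}$ from $\mathbb E$. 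Then $\mathbb G'$ is connected, its set of faces is exactly $\mathbb F\setminus\{F_{0}\}$ (deleting $e_{0}$ merges only $F_{0}$ with $F_{\infty}$), and the inclusion $\mathbb G'\subseteq\mathbb G$ realises ${\sf RL}_{v}(\mathbb G')$ as a subgroup of ${\sf RL}_{v}(\mathbb G)$.

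The inductive step now has an algebraic half and a combinatorial half. Algebraically, re-adding the single edge $e_{0}$ to the connected graph $\mathbb G'$ enlarges the loop group by one free generator: ${\sf RL}_{v}(\mathbb G)=\langle\, {\sf RL}_{v}(\mathbb G')\,,\; \mu\,\rangle$, where $\mu:=c\,e_{0}\,d$ with $c$ (resp.\ $d$) any edge-path of $\mathbb G'$ from $v$ to $\underline{e_{0}}$ (resp.\ from $\overline{e_{0}}$ to $v$); this is the topological fact that attaching a $1$-cell to a connected graph turns its fundamental group into a free product with $\mathbb Z$. Combinatorially, since $e_{0}$ separates $F_{0}$ from $F_{\infty}$ it is traversed exactly once in $\ell_{F_{0}}$, and all other edges of $\partial F_{0}$ lie in $\mathbb G'$; choosing $c_{F_{0}}$ inside $\mathbb G'$ one then checks that $\mu^{-1}\lambda_{F_{0}}$ is a loop of $\mathbb G'$ based at $v$, hence lies in ${\sf RL}_{v}(\mathbb G')$, so that $\langle {\sf RL}_{v}(\mathbb G'),\lambda_{F_{0}}\rangle = {\sf RL}_{v}(\mathbb G)$. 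By the induction hypothesis the lassos $\lambda_{F}$ for $F\in\mathbb F\setminus\{F_{0}\}$, constructed inside $\mathbb G'$ and hence also lassos around the same faces in $\mathbb G$, generate ${\sf RL}_{v}(\mathbb G')$; adjoining $\lambda_{F_{0}}$ shows $\{\lambda_{F}:F\in\mathbb F\}$ generates ${\sf RL}_{v}(\mathbb G)$. Combined with the Hopfian remark this completes the argument.

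The main obstacle I anticipate is the topological bookkeeping of the inductive step, not the algebra: one must justify carefully that some face meets $F_{\infty}$ along an edge, that deleting that edge does exactly what is claimed to the set of faces and to the loop group, and --- in order for ``$\lambda_{F}$ surrounds $F$'' to be literally correct --- one must also handle degenerate boundary walks (pendant edges, cut vertices), where $\partial F$ read off naively is not a simple loop and one first erases the backtracking sub-walks. An equivalent but perhaps more structural route would fix a spanning tree $T^{*}$ of the dual graph $\mathbb G^{*}$ rooted at $F_{\infty}$, pass to the complementary spanning tree $T$ of $\mathbb G$, and check that $\{\lambda_{F}\}_{F\in\mathbb F}$ and the fundamental-cycle basis of ${\sf RL}_{v}(\mathbb G)$ attached to $T$ are obtained from one another by a sequence of Nielsen moves that is triangular for the tree order on $\mathbb F$, so that $\{\lambda_{F}\}$ is again a basis; this merely repackages the same topological input as a statement about $\mathbb G^{*}$.
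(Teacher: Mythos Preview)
The paper does not give its own proof of this proposition: it is quoted from \cite{gabriel2015planar} and the reader is referred to \cite{levy2008two} for details. There is therefore no ``paper's proof'' to compare against.

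Your argument is correct in outline and is a perfectly acceptable route. The Hopfian shortcut (any $n$-element generating set of a free group of rank $n$ is a basis) is a clean way to reduce the problem to producing $n$ generating lassos, and the induction by deleting an edge on the outer boundary is standard. Two small points of care: first, the sentence ``$\mu^{-1}\lambda_{F_{0}}$ is a loop of $\mathbb G'$'' is only literally true after you align the orientation of $e_{0}$ in $\ell_{F_{0}}$ with that in $\mu$ and choose the basepoint $v_{F_{0}}=\underline{e_{0}}$ and $c_{F_{0}}=c$; in general what you really use is that, in the free product ${\sf RL}_{v}(\mathbb G)={\sf RL}_{v}(\mathbb G')*\langle\mu\rangle$, the element $\lambda_{F_{0}}$ has the form $w_{1}\mu^{\pm 1}w_{2}$ with $w_{1},w_{2}\in{\sf RL}_{v}(\mathbb G')$, which already gives $\langle {\sf RL}_{v}(\mathbb G'),\lambda_{F_{0}}\rangle={\sf RL}_{v}(\mathbb G)$. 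Second, as you yourself note, for faces whose boundary walk is not a simple loop one must pass to the reduced boundary to get a genuine lasso; this is harmless but should be said.

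The dual-spanning-tree approach you sketch at the end is closer to the argument one finds in \cite{levy2008two}: fix a spanning tree $T$ of $\mathbb G$, so that the chord-loops $\{c_{e}e\,c'_{e}\}_{e\notin T}$ form a free basis, and then show that a suitable ordering of the faces makes the change of basis from chord-loops to face-lassos upper-triangular (each $\lambda_{F}$ uses exactly one new chord). Your inductive deletion argument is essentially the same triangular structure unrolled one step at a time; the Hopfian trick lets you avoid tracking the triangularity explicitly.
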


In fact, the group ${\sf RL}_{v}(\mathbb G)$ admits many such bases, but it will be enough for us to know that there exists one. A proof of this proposition, and more details about graphs in general, can be found in \cite{levy2008two}.

A classical lattice gauge field on $\mathbb G$ with structure group $G$ is usually described as an element of the configuration space
\[\mathcal C_{\mathbb G}=\{g=(g_{e})_{e\in \mathbb E} \in G^{\mathbb E} : \forall e\in \mathbb E, g_{e^{-1}}=g_{e}^{-1}\}.\]
This configuration space is acted on by the lattice gauge group $\mathcal J_{\mathbb G}=G^{\mathbb V}$, according to the formula
\[(j\cdot g)_{e}=j_{\overline{e}}^{-1} g_{e} j_{\underline{e}}.\]
Let us fix a vertex $v$ of our graph. Any element $g$ of the configuration space $\mathcal C_{\mathbb G}$ induces a Holonomy map ${\sf L}_{v}(\mathbb G)\to G$, which to a loop $\ell$ written as a concatenation of edges $e_{1}\ldots e_{n}$ associates the element $g_{e_{n}}\ldots g_{e_{1}}$ of $G$. This map descends to the quotient by the backtracking equivalence relation, and induces a morphism of groups ${\sf RL}_{v}(\mathbb G)\to G$. The action of a gauge transformation $j \in \mathcal J_{\mathbb G}$ on $\mathcal C_{\mathbb G}$ modifies this morphism by conjugating it by the element $j_{v}$.  These observations can be turned in the following convenient description of the quotient $\mathcal C_{\mathbb G}/\mathcal J_{\mathbb G}$.

\begin{proposition} For all $v\in \mathbb V$, the Holonomy map induces a bijection
	\[\mathcal C_{\mathbb G}/\mathcal J_{\mathbb G} \longrightarrow {\rm Hom}({\sf RL}_{v}(\mathbb G),G)/G.\]
\end{proposition}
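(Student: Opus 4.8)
The plan is to fix a spanning tree of $\mathbb G$ and exploit it twice: once to gauge-fix configurations, and once to produce a convenient free basis of ${\sf RL}_{v}(\mathbb G)$.

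First I would record why the map is well defined, which is essentially done in the discussion above: any $g\in\mathcal C_{\mathbb G}$ induces a group morphism $\mathrm{hol}_{g}\colon{\sf RL}_{v}(\mathbb G)\to G$ sending the reduced class of a loop $e_{1}\cdots e_{n}$ to $g_{e_{n}}\cdots g_{e_{1}}$, which descends through backtracking because $g_{e^{-1}}=g_{e}^{-1}$; a telescoping computation using $\overline{e_{i}}=\underline{e_{i+1}}$ and $\underline{e_{1}}=\overline{e_{n}}=v$ gives $\mathrm{hol}_{j\cdot g}=j_{v}^{-1}\,\mathrm{hol}_{g}\,j_{v}$ for every $j\in\mathcal J_{\mathbb G}$, so $g\mapsto\mathrm{hol}_{g}$ descends to a map $\mathcal C_{\mathbb G}/\mathcal J_{\mathbb G}\to\Hom({\sf RL}_{v}(\mathbb G),G)/G$, where $G$ acts by conjugation.

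Next I would choose a spanning tree $T\subset\mathbb E$ (it exists since $\mathbb G$ is connected) and, for each vertex $w$, let $\pi_{w}$ be the unique reduced path inside $T$ from $v$ to $w$. For each geometric edge $e\notin T$, with a fixed choice of orientation, set $\lambda_{e}=\pi_{\underline e}\,e\,\pi_{\overline e}^{-1}$, a loop based at $v$. The reduced classes of the $\lambda_{e}$ form a free basis of ${\sf RL}_{v}(\mathbb G)$: contracting $T$ identifies $\bigcup\mathbb E$ with a wedge of circles indexed by the non-tree edges, giving the standard free basis of the fundamental group of a graph (consistent with the proposition quoted above from \cite{gabriel2015planar} and with $|\mathbb E|/2-|\mathbb V|+1=|\mathbb F|$). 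Now, given $g\in\mathcal C_{\mathbb G}$, define $j\in\mathcal J_{\mathbb G}$ by $j_{v}=1$ and, recursively outward along $T$, $j_{\overline e}=g_{e}\,j_{\underline e}$ for each tree edge $e$ oriented away from the root; since $T$ is a tree this is consistent and yields $(j\cdot g)|_{T}=1$. Hence every gauge orbit has a representative with $g|_{T}=1$, and for such a representative a direct computation gives $\mathrm{hol}_{g}(\lambda_{e})=g_{e}$ for every non-tree edge $e$.

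It then follows that $g\mapsto\mathrm{hol}_{g}$ restricts to a bijection from $\{g\in\mathcal C_{\mathbb G}:g|_{T}=1\}$ onto $\Hom({\sf RL}_{v}(\mathbb G),G)$: injectivity is immediate from the previous line, and surjectivity follows by taking, for a prescribed $\rho$, the configuration with $g_{e}=\rho(\lambda_{e})$ and $g_{e^{-1}}=g_{e}^{-1}$ on non-tree edges and $g\equiv 1$ on $T$. Finally, if two configurations trivial on $T$ lie in the same $\mathcal J_{\mathbb G}$-orbit, the gauge transformation relating them satisfies $j_{\overline e}=j_{\underline e}$ on every tree edge, hence is constant on $\mathbb V$ since $T$ spans $\mathbb G$; and a constant gauge transformation $j\equiv c$ sends such a $g$ to the configuration trivial on $T$ with non-tree values $c^{-1}g_{e}c$, that is, it conjugates $\mathrm{hol}_{g}$ by $c=j_{v}$. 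Therefore the bijection above is equivariant and descends to the asserted bijection $\mathcal C_{\mathbb G}/\mathcal J_{\mathbb G}\to\Hom({\sf RL}_{v}(\mathbb G),G)/G$. The only points needing care are the convention bookkeeping — the order reversal in $g_{e_{n}}\cdots g_{e_{1}}$ and the side on which $j_{v}$ acts — and the assertion that the $\lambda_{e}$ form a free basis of ${\sf RL}_{v}(\mathbb G)$; all the rest is a routine verification.
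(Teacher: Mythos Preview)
Your proof is correct and follows the standard spanning-tree argument. The paper itself does not prove this proposition: it is stated as a known fact and immediately used, with the surrounding discussion only explaining why the map is well defined (the holonomy of a loop under a gauge transformation is conjugated by $j_{v}$). Your argument supplies exactly the missing details --- gauge-fixing along a spanning tree, identifying the residual gauge group with constant transformations, and matching non-tree edges with a free basis of ${\sf RL}_{v}(\mathbb G)$ --- and does so cleanly. There is nothing to correct beyond the bookkeeping caveats you already flag.
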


It follows from this proposition that describing a probability measure on the quotient space $\mathcal C_{\mathbb G}/\mathcal J_{\mathbb G}$ is equivalent to describing the distribution of a random group homomorphism from ${\sf RL}_{v}(\mathbb G)$ to $G$, provided this random homomorphism is invariant under conjugation. Combining this observation with the fact that ${\sf RL}_{v}(\mathbb G)$ is a free group, and choosing for instance a basis $l_{1},\ldots,l_{n}$ formed by lassos surrounding the $n$ faces of $\mathbb G$, we see that a probability measure on $\mathcal C_{\mathbb G}/\mathcal J_{\mathbb G}$ is the same thing as a $G^{n}$-valued random variable $(H_{l_{1}},\ldots, H_{l_{n}})$, invariant in distribution under the action of $G$ on $G^{n}$ by simultaneous conjugation on each factor.

Let us finally introduce some further notation. We denote by ${\sf L}_{0}(\R^{2})$ the set of loops on $\R^{2}$ based at the origin which are concatenation of a finite sequence of segments and by ${\sf RL}_{0}(\R^{2})$ the group of reduced loops; this is the quotient of ${\sf L}_{0}$ by the backtracking relation. 

From now on, we will always assume that $0$ is a vertex of all the graphs that we consider. 

It is important to observe that any reduced loop on $\R^{2}$ belongs to ${\sf RL}_{v}(\mathbb G)$ for some graph $\mathbb G$. Thus,
\[{\sf RL}_{0}(\R^{2})=\bigcup_{\mathbb G \text{ graph}} {\sf RL}_{0}(\mathbb G).\]

Let us be more precise on that point. We explain how to build a graph $\mathbb{G}_{L}$, from any finite sequence of reduced loops $L$. We label this construction since we will refer to it in the next Section. 

We let ${\sf Lass}_0(\mathbb{R}^2)$ be the set of anti-clockwise oriented lassos in ${\sf RL}_0(\mathbb{R}^2)$.

\begin{construction}
\label{graph}
\par For any finite set of loops $L$ in ${\sf RL}_0(\mathbb{R}^2)$, there exists a set ${\sf Lass_0}({L}) \subset {\sf Lass_0}(\mathbb{R}^2)$ such that ${\sf RL}_0\langle L \rangle \subset {\sf RL}_0\langle{\sf Lass}_0({L})\rangle$.

In fact, any pair of loops in $L$ has the \emph{finite self-intersections property}. An intersection point between two loops ${\ell,\ell^{\prime}}$ is defined as a meeting point of two transverse segments, one of $\ell$ and one of $\ell^{\prime}$. The sequence $L$ has the finite self-intersections property if the total number of meeting points of any pair of loops in ${\sf L}$ is finite.

One builds an oriented graph $\mathbb{G}_L$ embedded in $\mathbb{R}^2$ out of the set of loops $L$ by declaring:

\begin{enumerate}
\item the vertices of $\mathbb{G}_L$ are the meeting points of the loops in $L$ 
\item the oriented edges of $\mathbb{G}_L$ are the segments of the loops in $F$ connecting two meeting points.
\end{enumerate}

See Fig. \ref{fig:graphgl} for an example of the graph $G_L$ and Fig. \ref{fig:graphlassos} for a basis of its group of reduced loops.
\begin{figure}[!htb]\centering
	\scalebox{0.4}{
	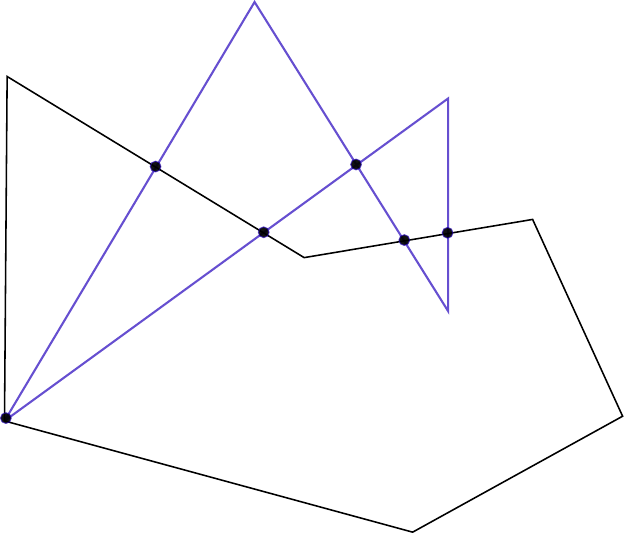
	}
	\caption{\label{fig:graphgl} Two affine loops are drawn on the plane, one is coloured in violet and the other in black. The vertices of the graph $\mathbb{G}_{L}$ are pictured with $\bullet$, the edges are the piecewise linear paths connecting these vertices.}
\end{figure}

\begin{figure}[!htb]\centering
	\scalebox{0.6}{
	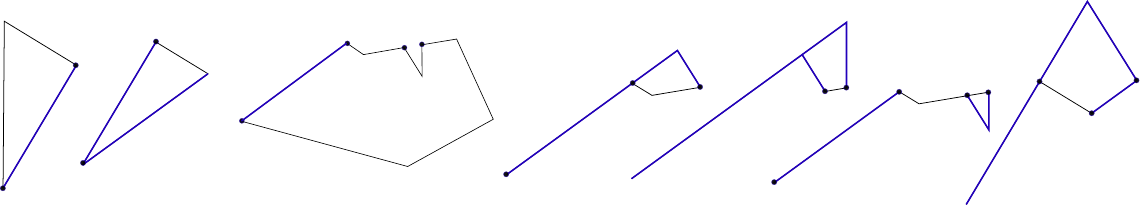
	}
	\caption{\label{fig:graphlassos} A basis of anticlockwise oriented lassos for the group of reduced loops drawn on the graph of Fig. \ref{fig:graphgl}.}
\end{figure}
\end{construction}

 We shall need the \emph{refinement order} on graphs. For two graphs $\mathbb{G}$ and $\mathbb{G}^{\prime}$, we write $\mathbb{G} \prec \mathbb{G}^{\prime}$ if all edges of $\mathbb{G}$ are obtained as concatenations of edges in $\mathbb{G}^{\prime}$.

 Now, with these definitions, 
 $$
 {\sf RL}_0(\mathbb{R}^2) = \underset{\rightarrow}{\lim}~{\sf RL}_0(\mathbb{G}).
 $$

\subsection{Holonomy Fields on a $H$-algebra: Definition(s)}
\label{sec:defiZhangHF}
We proceed with the definitions of Quantum Holonomy Field and of Categorical Holonomy Field in the next two sections.
\subsubsection{Quantum Holonomy Field}
\label{sec:quantumholo}
In the following, we set $\mathbb{K} = \mathbb{R}$ or $\mathbb{C}$. Let $R$ be a unital and involutive algebra over $\mathbb{K}$. Recall from Example \ref{enum:alg*} page \pageref{enum:alg*} that the category of involutive bimodule algebras over $R$ is denoted $\Alg^{\star}(R)$. We fix a non-commutative algebraic probability space $\mathcal{A}\in \Prob(R)$, that is, a pair $(\mathcal{A},\tau_{\mathcal{A}})$ where $\mathcal A$ is an object of $\Alg^{\star}(R)$ and $\tau_{\mathcal{A}}:\mathcal A\to R$ is, in particular a morphism of bimodules,
$$
	\tau_{\mathcal{A}}(r_{1}ar_{2}) = r_{1}\tau_{\mathcal{A}}(a)r_{2},~a \in \mathcal{A},~r_{1},r_{2} \in R.
$$

We choose a monoidal structure $(\otimes, R)$ with unit $R$ on the category \Prob($R$). At that point, we should make two \emph{restrictive assumptions}:
\begin{enumerate}
	\item \label{restrictiveun} With $(\mathcal{A}_{1}, \tau_{\mathcal{A}_1}), (\mathcal{A}_{2}, \tau_{\mathcal{A}_2}) \in \Prob(B)$ the underlying algebra of the product $(\mathcal{A}_{1},\tau_{\mathcal{A}_1})\otimes (\mathcal{A}_{2},\tau_{\mathcal{A}_{2}})$ is assumed to be the free product of algebra (with identification of units) $\mathcal{A}_1 \free \mathcal{A}_2$, that is:
 $$
 (A_1, \tau_{A_1})\otimes(A_2,\tau_{A_2}) = (A_1\sqcup A_2, \tau_{A_1} \otimes \tau_{A_2})
 $$
	\item \label{restrictivedeux} the monoidal product $\otimes$ is assumed to be \emph{symmetric} with commutativity constraint given by $S_{\mathcal{A}_1\sqcup \mathcal{A}_2}: \mathcal{A}_1 \sqcup \mathcal{A}_2 \rightarrow \mathcal{A}_2 \sqcup \mathcal{A}_1$, the coproduct of the two injections $\iota_{A_1}^{A_2 \sqcup A_1}$ and $ \iota_{A_2}^{A_2\sqcup A_1}$. In particular:
 $$
 \tau_{A_2} \otimes \tau_{A_1} \circ S_{A_1\sqcup A_2} = \tau_{A_1} \otimes \tau_{A_2}.
 $$
\end{enumerate}
These two assumptions have two important consequences:
\begin{enumerate}
	\item any permutation of a sequence of mutually independent morphisms is a sequence of mutually independent morphisms,
	\item the independence morphism associated with a sequence of independent morphisms is given by the free product of these morphisms.
\end{enumerate}
Finally, we choose a $H$-algebra $(H,\Delta, \varepsilon, \mathcal{\mathcal{S}})$ of $\Alg^{\star}(B)$. 
We are now ready to define the notion of interest for the present work.

\begin{definition}{(Quantum Holonomy Field on a $H$-algebra)}
	\label{definition_master_field}
	Under the assumptions \ref{restrictiveun} and \ref{restrictivedeux} on the monoidal product $\otimes$, \emph{An $H$-algebraic Holonomy field} is a group homomorphism 
    $$\H: {\sf RL}_{0}(\mathbb{R}^{2})\to \Hom_{\mathrm{Alg}^{\star}(B)}(H,\mathcal{A})$$ satisfying the following three properties:
	\begin{enumerate}[$\indent 1.$]
		\item \label{gaugeinv}\emph{(Gauge invariance)} Let $\ell_{1},\ldots,\ell_{n} \in {\sf L}_{0}(\mathbb{R}^{2})$ be a finite sequence of loops. For any morphism $\tau_{H}:H\to B$ of $\Alg^{\star}({B})$:
		      \[(\tau_{H} \otimes \tau_{\mathcal{A}}) \circ ((\H_{\ell_{n}} \freem \ldots \freem {\sf H}_{\ell_{1}})\free \id_H) \circ \Omega_{c}^{n} =\tau_{\mathcal A} \circ (\H_{\ell_{1}} \freem \ldots \freem {\sf H}_{\ell_{n}}).\]
        where we recall that:
        \[\Omega_{c}=(\iota_{1} \freem \iota_{2} \freem \iota_{1}) \circ (\id_{H\free H} \free \mathcal{\mathcal{S}}) \circ (\Delta \free \id_{H}) \circ \Delta.\]
		\item \emph{(Independence)} If $(\ell_{1},\ldots,\ell_{n})$ and $(\ell'_{1},\ldots,\ell'_{m})$ are two finite sequences of loops such that 
  $$\bigcup_{i=1}^{n} \mathrm{Int}(\ell_{i}) \textrm{ and } \bigcup_{j=1}^{m} \mathrm{Int}(\ell'_{j}) \textrm{ are disjoint },$$
then $\H_{\ell_{1}}\freem \ldots \freem  \H_{\ell_{n}}$ and $\H_{\ell'_{1}}\freem \ldots \freem  \H_{\ell'_{m}}$ are $\otimes$-independent.
    
		\item \emph{(Invariance by area-preserving homeomorphisms)} For any area-preserving diffeomorphism $T:\R^{2}\to \R^{2}$ and all loops $(\ell_{1},\ldots,\ell_{n})$, we have the  equality 
        $$\tau_{\mathcal{A}} \circ (\H_{\ell_{1}}\freem \ldots \freem \H_{\ell_{n}}) = \tau_{\mathcal{A}} \circ (\H_{T(\ell_{1})}\freem \ldots \freem \H_{T(\ell_{n})}).$$
	\end{enumerate}
\end{definition}

This generalisation of the notion of Planar Holonomy Field, although already wide enough to encompass classical gauge fields with compact structure groups as well as their large $N$ limits --- the so-called Master Field --- is set in a context that is less general than our definition of a $H$-algebra. Indeed, in this definition, we work on the particular algebraic category $\Alg^{\star}(B)$. It is not much more difficult to extend the definition to a setting where the algebraic category in which we take our probability spaces is almost arbitrary.

\subsubsection{Categorical Holonomy Field}
\label{sec:catholofield}
Let $(\mathcal{C},\free,k)$ be an algebraic category with coproduct $\free$ and unit $k$. Pick a second category $\mathcal{D}$ and $F:\mathcal{C}\to\mathcal{D}$ a \emph{faithful} (injective on the homomorphisms sets) and \emph{wide} (surjective on the class of objects) functor.
The category $\mathcal{D}$ is considered as an \emph{enlargement} of the category $\mathcal{C}$: the classes of objects are the same whereas the homomorphism sets between pairs of objects $O,O^{\prime} \in \mathcal{C}$ is larger in $\mathcal{D}$. Finally, set $F(k) = \star$. Morphisms of the category $\mathcal{D}$ with target spaces the object $\star$ are seen as \emph{distributions or states}. In fact, define a third category $\mathcal{B}$ which objects are pairs $(O,\phi_{O})$ with $O$ an object of $\mathcal{C}$ and $\phi$ a morphism of $\mathcal{D}$ from $F(O)$ to the initial object $k_{\mathcal{D}}\in\mathcal{D}$. A morphism between two objects $(O,\phi_{O})$ and $(O^{\prime},\phi_{O^{\prime}})$ in $\mathcal{B}$ is a morphism $f:O \to O^{\prime}$ of the category $\mathcal{C}$ such that $\phi_{O^{\prime}}\circ F(f) = \phi_{O}$. Although this construction may seem artificial, it is well-known in category theory. In fact, the category $\mathcal{B}$ is called a \emph{comma category} and we then write, after the work of Lawvere,
\begin{equation*}
	\mathcal{C} \rightarrow \mathcal{D} \leftarrow \star,~\mathcal{B} = \mathcal{C}\,\downarrow \, \star.
\end{equation*}
The comma category $\mathcal{B}$ comes with two important forgetful functors:
\begin{enumerate}[\indent $\scriptstyle \bullet$]
	\item {\sf Domain:} $\mathcal{B} \to \mathcal{C}$, $(O,\phi_{O})\mapsto O$, $f \mapsto f$
	\item {\sf Arrow:}  $\mathcal{B} \rightarrow \mathcal{D}^{\rightarrow}$, $(O,\phi_{O}) \mapsto \phi_{O} $, $ f \mapsto F(f)$.
\end{enumerate}

\begin{definition}[Lift of a morphism of $\mathcal{C}$]
	\label{def:morphism}
	Let $A$ be an object of $\mathcal{C}$ and $(B,\tau_{B})$ an object  $\mathcal{B}$. A morphism $f:A\to B $ of $\mathcal{C}$ lifts to a morphism $\mathcal{B}$, denoted $\hat{f}$, with source $(A,\tau_{B}\circ F(f))$ and target space $(B,\tau_{B})$.
\end{definition}

\begin{assumptions}
\label{assumptions:triptic}
We assume $\mathcal{B}$ to be endowed with a \emph{symmetric monoidal product} with injections $\left(\otimes, E\right)$ with $E$ an initial object of $\mathcal{B}$.
We assume that {\sf Domain} functor exchanges the monoidal structures of $\mathcal{B}$ and the algebraic structure of $\mathcal{C}$; it is a \emph{monoidal functor}. This implies:
\begin{equation*}
	{\sf Domain}((A,\tau_{A}) \otimes (B,\tau_{B})) = A \free B,\quad A,B \in \Obj(\mathcal{C}),~{\sf Domain} (f\otimes g) = f \sqcup g.
\end{equation*}
\end{assumptions}
The probabilistic setting of the last definition corresponds to the following triptic, with $R$ an unital involutive associative algebra:
\begin{equation*}
	\mathrm{Alg}^{\star}(B) \rightarrow \mathrm{BiMod}(B) \leftarrow B,~ \Prob(B) = \mathrm{Alg}^{\star}(B)\, \downarrow B
\end{equation*}

Again, pick $(\mathcal{A},\tau_{\mathcal{A}}) \in \mathcal{B}$
and a $H$-algebra $(H,\Delta,\varepsilon,\mathcal{S})$ of $\mathcal{C}$.
\begin{notation}
In the following definition, we use the shorter notation $A_{\ell_{1},\ldots,\ell_{n}}$ for the coproduct of a sequence $(A_{\ell_{1}},\ldots,A_{\ell_{n}})$ of morphisms of the category $\mathcal{C}$ from $H$ to $\mathcal{A}$,
$$
A_{\ell_1,\ldots,\ell_n} : H^{\free \, n} \to \mathcal{A},~A_{\ell_1,\ldots,\ell_n}\circ\iota_i = A_{\ell_i},~ 1 \leq i \leq n.
$$
\end{notation}
\begin{definition}{(Categorical Holonomy Field on a $H$-algebra)}
	\label{definition_master_field_categorical}
	Under Assumptions \ref{assumptions:triptic}, a $H$-categorical Holonomy field is a \emph{group homomorphism} $$\H: {\sf RL}_{0}(\mathbb{R}^{2})\to \Hom_{\mathcal{C}}(H,\mathcal{A})$$ satisfying to the following three assumptions:
	\begin{enumerate}[\indent 1.]

		\item \label{gaugeinv2}\emph{(Gauge invariance)} Let $\ell_{1},\ldots,\ell_{n} \in {\sf L}_{0}(\mathbb{R}^{2})$ be a finite sequence of loops. For each morphism $\tau_{H}:H\to k$ of the category $\mathcal{D}$:
		      \[(\tau_{\mathcal{A}} \otimes \tau_{H}) \circ ((\H_{\ell_{1},\ldots,\ell_{1}}) \free \id_H) \circ \Omega_{c}^{n} =\tau_{\mathcal A} \circ \H_{\ell_{1},\ldots,\ell_{n}}.\]
		\item \label{independence}\emph{(Independence)} If $(\ell_{1},\ldots,\ell_{n})$ and $(\ell'_{1},\ldots,\ell'_{m})$ are two finite sequences of loops such that 
  $$\bigcup_{i=1}^{n} \mathrm{Int}(\ell_{i}) \textrm{ and } \bigcup_{j=1}^{m} \mathrm{Int}(\ell'_{j}) \textrm{ are disjoint },$$ then
			      $\H_{\ell_{1}, \ldots, \ell_{n}}$ and $\H_{\ell'_{1}, \ldots , \ell'_{m}}$ are $\otimes$-independent.
		\item \label{areainv}\emph{(Invariance by area-preserving homeomorphisms)} For all area-preserving diffeomorphism $T:\R^{2}\to \R^{2}$ and any sequence of loops $(\ell_{1},\ldots,\ell_{n})$, the following equality holds: $$\tau_{\mathcal{A}} \circ F(\H_{\ell_{1},\ldots,\ell_{n}}) = \tau_{\mathcal{A}} \circ F(\H_{T(\ell_{1}), \ldots, T(\ell_{n})}).$$
	\end{enumerate}
\end{definition}

\begin{remarque}
In the next Section, we construct a categorical Holonomy Field that satisfies a smilingly strengthened gauge-invariance property. In fact, $\mathcal{A}$ will be a right comodule over $H$ built as a direct limit of right comodules over $H$ and the morphism $\tau_{\mathcal{A}}$ will be gauge-invariant, which means
\begin{equation}
\label{eqn:strenghtened}
	(\tau_{\mathcal{A}}\otimes \tau_{H}) \circ \overline{\Omega} = \tau_{\mathcal{A}}, \text{ for all }\phi_{H} \in \Hom_{D}(H, k).
\end{equation}
where ${\bar \Omega}: \mathcal{A} \to \mathcal{A} \free H$ is the right comodule morphism on $\mathcal{A}$, see Lemma \ref{lemma:lemmaomegabar} and the diagram in Fig. \ref{diag:co-action}. This will be explained in much detail in the next Section.
Equation \eqref{eqn:strenghtened} trivially implies gauge-invariance property $\ref{gaugeinv}$.
\end{remarque}
In Definition \ref{definition_master_field_categorical}, we use the co-product structure $\free$ of the category $\mathcal{C}$ to state what gauge-invariance and invariance by area-preserving homeomorphisms of the plane mean. This is not compulsory as we should see now and further in the next section on the course of defining a categorical Holonomy field.

Let $\ell_{1},\ldots,\ell_{n}$ be a finite sequence of loops and $c_{1},\ldots,c_{p}$ a family of lassos with disjoint bulks such that $\ell_{1},\ldots,\ell_{n} \in {\sf RL}_{0}(c_{1},\ldots,c_{p})$, we will see in the next section how such a family $(c_{1},\ldots,c_{p})$ can be obtained from $\ell_{1},\ldots,\ell_{n}$.
We claim that property \ref{gaugeinv2} for the sequence $(\ell_{1},\ldots,\ell_{n})$ can be obtained from the fact that \ref{gaugeinv} holds for the sequence of lassos $(c_{1},\ldots,c_{p})$. In fact, let $w = w_1\cdots w_q$ be a word on the symbols $c_1,\ldots,c_p$. We denote by $m^{w}$ the unique morphism from the coproduct $H_{w}:=H_{w_1}\free \cdots \free H_{w_q}$ to $H_c := H_{c_1}\free \cdots \free H_{c_p}$ where $H_{w_{i}} = H_{c_j} = H$, $1 \leq i \leq q$, $1 \leq j \leq p$ satisfying: 
$$ m^{w}\circ \iota_{H_{w_k}}^{H_{w}} =  \iota_{H_{w_{k}}}^{H_{c_1}\cdots H_{c_p}}.$$

We write each $\ell_i$ as a reduced word $w_i(c)$ on the lassos, the $c$'s. Then, with $\tau_H : H \to k$ a morphism of $\mathcal{C}$,
\begin{equation*}
	\begin{split}
		&(\tau_{\mathcal{A}} \otimes \tau_H) \circ (({\sf H}_{w_{1}(c)} \freem {\sf H}_{w_{2}(c)}\freem \cdots \freem {\sf H}_{w_{n}(c)}) \free \id_H)\circ \Omega_{c}^n
		\\
		&\hspace{1cm} = (\tau_{\mathcal{A}} \otimes \tau_H) \circ (({\sf H}_{c_{1}} \freem \cdots \freem {\sf H}_{c_{p}}) \free \id_H) \circ (m^{w_{1}(c)\cdots w_{n}(c)} \sqcup \id_H) \circ \Omega_{c}^n \\
		&\hspace{1cm} = (\tau_{\mathcal{A}}\otimes \tau_{H}) \circ (\id_{H} \free ({\sf H}_{c_{1}} \freem \cdots \freem {\sf H}_{c_{p}})) \circ \Omega_{c}^{p}\circ m^{w_{1}(c)\cdots w_{n}(c)}\\
		&\hspace{1cm} =\tau_{\mathcal{A}}\circ {\sf H}_{c_{1}}\freem \cdots \freem {\sf H}_{c_{n}} \circ m^{w_{1}(c)\cdots w_{n}(c)} \\
		&\hspace{1cm}= \tau_{\mathcal{A}}\circ ({\sf H}_{\ell_{1}}\freem\cdots\freem{\sf H}_{\ell_{p}}),
	\end{split}
\end{equation*}
where the first and last equality follow from  ${\sf H}_{c_{1}}\freem \cdots \freem {\sf H}_{c_{n}} \circ m^{w_{1}(c)\cdots w_{n}(c)} = {\sf H}_{\ell_{1}}\freem\cdots\freem{\sf H}_{\ell_{p}}$ which follows from unicity of the solution to the universal product defining the coproduct $\sqcup$. For the second equality, we remark that 
$$
 \Omega_{c}^{p}\circ m^{w_{1}(c)\cdots w_{n}(c)} \circ \iota_{H_{{w(k)}}^{H_{w}}} = (({\sf H}_{w_{1}(c)} \freem {\sf H}_{w_{2}(c)}\freem \cdots \freem {\sf H}_{w_{n}(c)}) \free \id_H)\circ \Omega_{c}^n \circ \iota_{H_{{w(k)}}^{H_{w}}}
$$
which follows in turn from the elementary fact that $\Omega_c$ is an ``algebra'' morphism, $m_{H\sqcup H} \circ \Omega_c = \Omega_c \circ m_H $. 
We leave to the reader the verification that invariance by area-preserving homeomorphisms property \ref{areainv} and independence property \ref{independence} hold for any sequence of loops if it holds for all sequence of lassos.

To sum up, if ${A}$ is a right comodule on $H$ with co-action $\overline{\Omega}$, properties $1.$ -- $3.$ of Definition $\ref{definition_master_field_categorical}$ are implied by:
\begin{enumerate}[\indent $1^{\prime}$.]
	\item \label{independenceprime}\emph{(Independence)} If $(c_{1},\ldots,c_{n})$ is a finite sequences of lassos with two by two disjoint bulks, ${\sf H}_{c_{1}},\ldots,{\sf H}_{c_{n}}$ is a $\otimes$ - mutually independent family of morphisms.
	\item \label{gaugeinvprime}\emph{(Gauge invariance)}
	      For any lasso $c$, $\H_{c}$ is a morphism of $H$-comodules and
	      \begin{equation*}
		      (\tau_{H}\otimes \tau_{\mathcal{A}}) \circ \overline{\Omega} = \tau_{\mathcal{A}}, \text{ for all }\phi_{H} \in \Hom_{D}(H, k).
	      \end{equation*}
	\item \label{areainvprime}\emph{(Invariance by area-preserving homeomorphisms)} For any area-preserving diffeomorphism $T:\R^{2}\to \R^{2}$ and any lasso $c$, we have the equality
	      $$\tau_{\mathcal{A}} \circ F(\H_{c}) = \tau_{\mathcal{A}} \circ F(\H_{T(c)}).$$
\end{enumerate}

\begin{definition}
\label{def:distroHolonomy}
\emph{The distribution of a categorical Holonomy field} ${\sf H}$ is the collection $\{\Phi^{{\sf H}}_{\ell},~\ell \in {\sf RL}_{\text{Aff},0}(\mathbb{R}^{2})\}$ of morphisms from $H$ to $k$ defined by:
\begin{equation}
	\Phi^{{\sf H}}_{\ell} = \tau_{\mathcal{A}} \circ {\sf H}_{\ell},~\ell \in {\sf RL}_{0}(\mathbb{R}^{2}).
\end{equation}
\end{definition}
We denote by $\star_{\otimes}$ the product on the space homomorphisms from $H$ to $k$ in the category $\mathcal{D}$ and defined by:
\begin{equation*}
	\alpha \star \beta = (\alpha \otimes \beta) \circ \Delta,~ \alpha,\beta \in \Hom_{\mathcal{D}}(H,k).
\end{equation*}
We do not record in the symbol for the product between morphisms in $\textrm{Hom}_{\mathcal{D}}(H,k)$ the dependence toward $\otimes$. It should be clear from the context.
Properties $1.$--$3.$ of Definition $\ref{definition_master_field_categorical}$ for the categorical Holonomy field implies the following ones for its distribution:
\begin{proposition}
\begin{enumerate}[\indent 1.]
	\item $\Phi^{\H}_{\ell_{1}\ell_{2}\ell_{1}^{-1}} = \Phi^{\H}_{\ell_{2}},~\Phi^{\H}_{\ell_{1}^{-1}} = \Phi^{\H}_{\ell_{1}}\circ \mathcal{S}~\text{ for all } \ell_{1},\ell_{2} \in {\sf RL}_{0}(\mathbb{R}^{2})$,
	\item $\Phi^{\H}_{\ell_{1}\ell_{2}}=\Phi^{\H}_{\ell_{1}} \star \Phi^{\H}_{\ell_{2}}$, for all simple loops $\ell_{1},\ell_{2} \in {\sf RL}_{0}(\mathbb{R}^{2})$ with disjoint interiors,
	\item $\Phi^{\H}_{\ell} = \Phi^{\sf H}_{T(\ell)}$ for all area-preserving homeomorphisms $T$ of the Euclidian plane $\mathbb{R}^{2}$.
\end{enumerate}
\end{proposition}

\subsection{Categorical Holonomy Field: colimit over lassos}
\label{construction_generalized_master_field}
In this Section, we explain how to construct a Categorical Holonomy Field starting with a certain \emph{family of objects in $\mathcal{B}$ indexed by the set ${\sf Lass}_0(\mathbb{R}^2)$ of anticlockwise oriented lassos drawn on the plane and based at the origin}; that is from a family 
\begin{equation*}
\mathcal{H} = \{ (H_c, \tau_c) \in \mathcal{B},~c \in {\sf Lass}_0(\mathbb{R}^2)\}.
\end{equation*}

It is natural to construct the Holonomy Field ${\sf H}$ as a certain limit since the group of reduced loops drawn on the plane is itself a colimit (or a direct limit, we want to be coherent with the categorical terminology in use so far, we refer the reader to the Annexes for the definition of a colimit of a functor), as we have seen in the previous Section. We should give now an alternative description of ${\sf RL_0}(\mathbb{R}^2)$ as a colimit over finite sequences of lassos. 

We denote by $\mathcal{P}\left({\sf RL}_{0}\left(\mathbb{R}^{2}\right) \right)$ the set of finite sequences of distinct and reduced loops. We denote by $\mathcal{P}\left( {\sf Lass}_0(\mathbb{R}^2) \right)$ the set of finite sequences of lassos in ${\sf Lass}_0(\mathbb{R}^2)$ \emph{with disjoint bulks}. Alternatively, these are free (in the group ${\sf RL_0}(\mathbb{R}^2))$ sequences of lassos, basis of the subgroup of ${\sf RL_0}(\mathbb{R}^2)$ they generate.
We set ${\sf RL}_0\langle L \rangle \subset {\sf RL}_0(\mathbb{R}^{2})$ the subgroup of the group of reduced loops on the plane that are concatenation and reduction of loops in $L$. 

Given an integer $n\geq 1$, we recall that a permutation $\sigma \in \mathfrak{S}_n$ acts on a sequence of loops $(\ell_1,\ldots,\ell_n)$ by 
$$
\sigma \cdot (\ell_1,\ldots,\ell_n)=\ell_{\sigma^{-1}(1)},\ldots,\ell_{\sigma^{-1}(n)}.
$$


For two finite sequences of reduced loops $L$ and $L^{\prime}$, we will write 
$$L \prec L^{\prime} \textrm{~if~} L \subset {\sf RL}_0(L^{\prime}).$$ 

The relation $\prec$ is transitive but not antisymmetric: it is a \emph{preorder}. In fact, for any permutation $\sigma \in \mathfrak{S}_n$, one has $L \prec \sigma\cdot L$ and $\sigma \cdot L \prec L$. For any pair of sequences of loops $L \prec L^{\prime}$, there exists a group morphism $\phi_{L^{\prime},L}$ such that:
\begin{equation*}
	\begin{array}{cccc}
		\phi_{L,L^{\prime}}: & {\sf RL}_0\langle L \rangle & \rightarrow & {\sf RL}_0\langle L^{\prime}\rangle \\
		                    & \ell                       & \mapsto     & \ell
	\end{array}.
\end{equation*}
Note that $(\mathcal{P}(\sf {\sf RL}_0(\mathbb{R}^2)),\prec)$ yields a small category with objects $\mathcal{P}(\sf {\sf RL}_0(\mathbb{R}^2))$ and morphisms $L \prec L^{\prime}$, $L,L^{\prime}\in \mathcal{P}(\sf {\sf RL}_0(\mathbb{R}^2))$. The $\phi$'s yield a covariant functor ${\sf L}: \mathcal{P}\left({\sf L}_{0}(\mathbb{R}^{2}) \right) \to {\rm Grp}$ with values in the category Grp of groups defined by: 
$${\sf L}(L) = {\sf RL}_0\langle L \rangle, \textrm{ and }{\sf L}(L \prec L^{\prime}) = \phi_{L,L^{\prime}}.$$ 
The family of morphisms $\phi_{L, L^{\prime}},~L\prec L^{\prime} \in \mathcal{P}({\sf L}_0(\mathbb{R}^{2}))$ enjoys the following trivial properties:

\begin{enumerate}[$\indent \scriptstyle \bullet$]
	\item Let $L,~L^{\prime}$ be two finite sequences of loops with $L\prec L^{\prime}$, then:
	      \begin{equation*}
		      \phi_{L,L^{\prime}} = \phi_{\alpha \cdot L, \beta \cdot L^{\prime}},~\alpha \in \mathfrak{S}_{|L|},~\beta \in \mathfrak{S}_{|L^{\prime}|},
	      \end{equation*}
	\item with $L_{1} \prec M_{1}$ and $L_{2} \prec M_{2}$ four finite sequences of affine loops, one has:
	      \begin{equation*}
		      \phi_{M_{1},L_{1}}(\ell) = \phi_{M_{2},L_{2}}(\ell),~ \ell \in {\sf R}{\sf L}\langle L_{1}\rangle\cap {\sf R}{\sf L}\langle L_{2}\rangle.
	      \end{equation*}
\end{enumerate}
The set of affine loops drawn on the plane is the colimit of the functor ${\sf L}$ restricted to $\mathcal{P}({\sf Lass}_0(\mathbb{R}^2))$, as it follows from the previous Section:
\begin{equation*}
	{\sf RL}_{0}\left(\mathbb{R}^{2}\right) = \underset{{\longrightarrow}}{\lim}~{\sf L}~|~\mathcal{P}({\sf Lass}_0(\mathbb{R}^2)).
\end{equation*}
where $|$ is used for restriction.
This implies that to build a group morphism ${\sf H}: {\sf RL}_0(\mathbb{R}^2) \to G$ for a certain group $G$, one can start from the data of a group morphism ${\sf H}_L:{\sf RL}_0(L)\to G$ for any sequences of lassos in $\mathcal{P}({\sf RL}_0({\sf Lass}_0(\mathbb{R}^2))$ compatible with the $\phi$'s in the following sense:
$$
{\sf H}_{L^{\prime}} \circ \phi_{L^{\prime},L}= {\sf H}_{L}. 
$$

\begin{assumptions}
\label{assumptionsdeux}
For the entire Section, the algebraic setting is the one introduced at beginning of Section \ref{sec:catholofield}. In addition to the requirements on the three categories $\mathcal{B},~\mathcal{C}$ and $\mathcal{D}$ we made in Assumptions \ref{assumptions:triptic}, we require for the categories $\mathcal{C}$ and $\mathcal{D}$ to be \emph{cocomplete}. This means that $\mathcal{C}$ and $\mathcal{D}$ have colimits: it implies that any functor over a small category with values in $\mathcal{C}$ or in $\mathcal{D}$ (resp. in $\mathcal{D}$) admits a colimit in $\mathcal{C}$. We will only use this property for showing the existence of a colimit of a functor over the small category implied by the preorder $\prec$ on $\mathcal{P}({\sf Lass}_0(\mathbb{R}^2))$. In that case, the term direct limit instead of colimit is used at times. We refer to \cite{adamek2004abstract}, Chapter III, Section 11 for details. It implies that the comma category $\mathcal{B}$ is also cocomplete\footnote{\url{https://ncatlab.org/nlab/show/comma+category\#completeness_and_cocompleteness}}.
\end{assumptions}
First, we focus on the construction of the morphism
$${\sf H}:{\sf RL}_0(\mathbb{R}^{2}) \to \textrm{Hom}_{\mathcal{C}}(H,\mathcal{A})$$ of the Definition \ref{definition_master_field_categorical}, regardless of the properties $1-3$. Thereafter, we will make additional hypothesises on $ \mathcal{H} = (H_c, \tau_c: H \to k), c \in {\sf Lass}_0(\mathbb{R}^2)$ to ensure them. The first step is to construct the object $\mathcal{A} \in \mathcal{C}$ together with a morphism $\tau_{\mathcal{A}}: \mathcal{A}\to k$. 

To do this, from the family $\mathcal{H}$ of objects in $\mathcal{B}$, we build an associated functor $\tilde{\sf A}$ with values in $\mathcal{B}$ over the finite sequences of \emph{lassos} (equipped with the preorder $\prec$).
It exists thanks to Assumptions \ref{assumptionsdeux}.
%
Next, to build a Categorical Holonomy Field, we will start from the data, for every sequence of lassos $L \in \mathcal{P}({\sf Lass}_0(\mathbb{R}^2))$, of a group morphism
$$\H_{L} \in {\rm Hom}_{\mathcal{G}rp}({\sf RL}_0\langle L \rangle, {\rm Hom}_{\mathcal{C}}(H,\mathcal{A}))$$
and assume that
\begin{equation}
	\label{projective}
	{\H}_{L} = {\H}_{L^{\prime}} \circ \phi_{L,L^{\prime}},\quad L \prec L^{\prime}.
\end{equation}

We may conclude the existence of a Categorical Holonomy Field by using the universal property of the colimit of ${\sf L}$ and equation \eqref{projective}; there exists a morphism ${\sf H}$ from ${\sf RL}_0(\mathbb{R}^{2})$ into $\Hom_{\mathcal{C}}(H,\mathcal{A})$ such that the diagram in Fig. \ref{holo} is commutative.
\begin{figure}[!htb]\centering
	\begin{tikzcd}
		{\sf RL}_0(L)\arrow{rr}\arrow{rd} \arrow["{{\sf H}}_{L}",swap]{rdd}& & {\sf RL}_0(L^{\prime})\arrow{ld}\arrow["{{\sf H}}_{L^{\prime}}"]{ldd} \\
		&{\sf RL}_0(\mathbb{R}^{2})\arrow["{\sf H}"]{d}& \\
		&\Hom_{\mathcal{C}}(H,\mathcal{A})&
	\end{tikzcd}
	\caption{\label{holo}\small The Holonomy ${\sf H}$ obtained as a solution of an universal problem.}
\end{figure}
We now enter into the definition of the functor $\tilde{\sf A}:\mathcal{P}({\sf Lass}_0(\mathbb{R}^2)) \to \mathcal{B}$ from the family $\mathcal{H}$. 
Let $\left(c_{1},\ldots,c_{p}\right)$ be a finite sequence of lassos in $\mathcal{P}({\sf Lass}_0(\mathbb{R}^2))$ and define $H_{(c_{1},\ldots,c_{p})} \in \mathcal{B}$ by:
\begin{equation*}
	H_{\left(c_{1},\ldots,c_{p}\right)} = (H ,\tau_{c_{1}}) \otimes \cdots \otimes (H ,\tau_{c_{p}})=(H^{\sqcup\,p},\tau_{c_{1}}\otimes \cdots \otimes \tau_{c_{p}}).
\end{equation*}
\begin{notation}
\label{marginals}
In the following, we use the notation shorter notation:
$$\tau_{(c_{1},\ldots,c_{p})} = \tau_{c_{1}} \otimes \cdots \otimes \tau_{c_{p}}.$$
Given a sequence of loops $(\ell_1,\ldots,\ell_p)$ and an integer $1 \leq i \leq p$, we denote by $\iota_{{i}}^{(\ell_{1},\ldots,\ell_{p})}$ the $i^{th}$ injection in $\mathcal{C}$ from $H$ into the product $H^{\sqcup p}$. 
\end{notation}

We define the group morphism ${\sf H}_{(c_{1},\ldots,c_{p})}: {\sf RL}\langle(c_{1},\ldots,c_{p}) \rangle \to \Hom_{\mathcal{C}}(H, H^{\free p})$ by the following prescription on its values on $(c_1,\ldots,c_p)$:
\begin{equation}
	\label{def_hol}
	\tag{Hol}
	{\sf H}_{(c_{1},\ldots,c_{p})}(c_{i}) = i_{{i}}^{(c_{1},\ldots,c_{p})}.
\end{equation}
Recall that $\Hom_{\mathcal{C}}(H, H^{\free p})$ is endowed with the product
\begin{equation*}
	A \times B = (A \freem B) \circ \Delta,~A,B \in \Hom_{\mathcal{G}rp}(H,H^{\free p}).
\end{equation*}
One has for example $H_{(c_1,\ldots,c_p)}(c_1c_2)=\iota_1^{(c_1,\ldots,c_p)}\times \iota_2^{(c_1,\ldots,c_p)}$. We introduce next \emph{the Holonomy of a sequence of loops in} ${\sf RL}_0\langle (c_1,\ldots,c_p) \rangle $ \emph{respectively to} $(c_1,\ldots,c_p)$. Let $(\ell_{1},\ldots,\ell_{n})$ be a finite sequence of loops in ${\sf RL}\langle (c_{1},\ldots,c_{p})\rangle$, we define the morphism of the category $\mathcal{C}$:
$${\sf H}_{c_{1},\ldots,c_{p}}(\ell_{1},\ldots,\ell_{n}): H^{\free n} \to H^{\free p}$$ by,
\begin{equation}
\label{eqn:mhol}
\tag{mHol}
	{\sf H}_{c_{1},\ldots,c_{p}}(\ell_{1},\ldots,\ell_{n}) = {\sf H}_{c_{1},\ldots,c_{p}}(\ell_{1}) \freem \cdots \freem {\sf H}_{c_{1},\cdots,c_{p}}(\ell_{n}).
\end{equation}
Finally, we set $$\tau^{\ell_{1},\ldots,\ell_{n}}_{c_{1},\ldots,c_{p}} = \tau_{c_{1},\ldots,c_{p}} \circ {\sf H}_{c_{1},\ldots,c_{p}}(\ell_{1},\ldots,\ell_{n}): H^{\sqcup n} \to k.$$
and call this morphism of $\mathcal{D}$ \emph{the distribution of $\ell_1,\ldots,\ell_n$ relatively to $(c_1,\ldots,c_p)$}.
We emphasize the dependence of $\tau^{\ell_{1},\ldots,\ell_{n}}_{c_{1},\ldots,c_{p}}$ toward the sequence of lassos we picked initially.
We would like now to suppress this dependence of the distribution $\tau_{\ell_{1},\ldots,\ell_{n}}^{c_{1},\ldots,c_{p}}$ toward a choice of basis of ${\sf RL}_0((c_1,\ldots,c_p))$.
We need, first, more assumptions about the morphisms $\tau_{c},~c\in {\sf Lass}_0(\mathbb{R}^2)$ and, second, a description of all the basis of lassos ${\sf RL}_0((c_1,\ldots,c_p))$.
Luckily, for the second point, we can rely on a result of Artin \cite{artin1947theory} but we need to introduce \emph{braid groups}. Let $n\geq 1$ an integer. Without going too much into details, the braid group $\mathcal{B}_{n}$ (the braid group on $n$ strands) has the following presentation:
\begin{equation*}
	\mathcal{B}_{n} = \langle \beta_{1},\ldots,\beta_{n-1}\,|\, \beta_{i}\beta_{i+1}\beta_{i}=\beta_{i+1}\beta_{i}\beta_{i+1},~\beta_{i}\beta_{j} = \beta_{j}\beta_{i},~|i-j|\geq 2\rangle.
\end{equation*}
Recall that the group of permutations $\mathfrak{S}_{n}$ admits the same presentation with the additional property $\sigma_{i}^{2} = 1$. Incidentally, to a braid $\beta \in \mathcal{B}_{n}$, we can associate a permutation $\sigma_{\beta}$. It is defined on an elementary braid $\beta_{i}$ by
\begin{equation*}
\label{eq:action}
	\sigma_{\beta_{i}} =  (i,i+1),~ 1 \leq i \leq n-1,
\end{equation*}
and extended as a group morphisms on $\mathcal{B}_{n}$.
The braid group $\mathcal{B}_{n}$ acts on a finite sequence of loops $\ell_{1},\ldots,\ell_{p}$ with $\ell_{i} \in {\sf RL}_{0}(\mathbb{R}^{2})$. The action of the elementary braid $\beta_{i},~ 1 \leq i \leq n-1$ is
\begin{equation*}
	\beta_{i} \cdot (c_{1},\ldots,c_{n}) = (c_{1},\ldots, c_{i+1},c_{i+1}c_{i}c_{i+1}^{-1},\ldots,c_{n}).
\end{equation*}
In fact, the simple computations
\begin{equation*}
	(\beta_{i} \beta_{i+1} \beta_{i}) (c_{1},\ldots,c_{n})= (c_{1},\ldots,c_{i+1},c_{i+2},c_{i+2}c_{i+1}c_{i}(c_{i+2}c_{i+1})^{-1},\ldots,c_{n}) = \beta_{i+1}\beta_{i}\beta_{i+1}.
\end{equation*}
anb $\beta_i\beta_j\cdot (c_1,\ldots,c_n)=\beta_j \beta_i \cdot c_1,\ldots,c_n)$ for $|i-j| > 1$ show that \eqref{eq:action} defines an action of $\mathcal{B}_n$ on ${\sf RL}_0\langle (c_1,\ldots,c_n) \rangle$.
\begin{proposition}[see Proposition 6.8 in \cite{gabriel2015planar}]
	\label{prop:artin}
	Let $(c_{1},\ldots,c_{n})$ and $(c^{\prime}_{1},\ldots,c^{\prime}_{p})$ be two sequences of anticlockwise oriented lassos based at the origin $0$. For each family, we assume that the interiors of the bulks are pairwise distinct and that ${\sf R}{\sf L}\langle c_{1},\ldots,c_{n}\rangle = {\sf RL}\langle c^{\prime}_{1},\ldots,c^{\prime}_{p}\rangle$. Then $n = p$ and there exists a braid $\beta \in \mathcal{B}_{n}$ such that
	\begin{equation*}
		\sigma_{\beta}\cdot(c_{1}^{\prime},\cdots,c_{n}^{\prime}) = \beta \cdot(c_{1},\cdots,c_{n}).
	\end{equation*}
\end{proposition}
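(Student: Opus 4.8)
The plan is to split the statement into a topological part --- comparing the faces enclosed by the two families, which in particular forces $n=p$ --- and a group-theoretic part, in which the change of basis between two geometric bases of a free group is recognised as the action of a braid by means of Artin's theorem.

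First I would record that a finite family of lassos with pairwise disjoint bulks freely generates a subgroup of $\lmss{RL}_0(\mathbb{R}^2)$ of rank equal to the number of lassos. Indeed, picking a point $z_j$ in the bulk $B_j$ of each $c_j$, the winding numbers around the $z_j$ assemble into a homomorphism $\lmss{RL}\langle c_1,\ldots,c_n\rangle\to\mathbb{Z}^n$ sending $c_j$ to the $j$-th basis vector: a lasso $a\ell a^{-1}$ has the same winding number around any point as the simple loop $\ell$, namely $1$ inside its bulk and $0$ elsewhere. This map is onto, so the abelianisation has rank at least $n$, while the group is generated by $n$ elements; hence it is free of rank exactly $n$. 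Applying this to both families and writing $G:=\lmss{RL}\langle c_1,\ldots,c_n\rangle=\lmss{RL}\langle c'_1,\ldots,c'_p\rangle$ yields $n=p$, and moreover $(c'_1,\ldots,c'_n)$, being a generating family of size $n$ in a free group of rank $n$, is a free basis of $G$.

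Next I would match the bulks. Fix $i$ and a point $z$ in the bulk $B'_i$ of $c'_i$. Writing $c'_i$ as a word in the $c_j^{\pm 1}$ and taking winding numbers around $z$, one finds that $w_z(c'_i)$ equals the exponent sum of $c_j$ in that word if $z$ lies in $B_j$ (necessarily a single such index, by disjointness) and vanishes if $z$ lies in no bulk; since $w_z(c'_i)=1\neq 0$ the point $z$ lies in some $B_j$. As $B'_i$ is connected and the $B_j$ are pairwise disjoint open disks, $B'_i\subseteq B_{\pi(i)}$ for a unique index $\pi(i)$; the symmetric argument gives the reverse inclusions, so $\pi$ is a permutation and $B'_i=B_{\pi(i)}$. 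After reordering $(c'_1,\ldots,c'_n)$ by $\pi$ --- a permutation that I will absorb into $\sigma_\beta$ at the very end --- I may assume $B'_i=B_i$ for every $i$.

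It remains to realise the change of basis by a braid, which is where I would invoke Artin. Collapsing each bulk to an interior puncture identifies $G$ with the fundamental group of the $n$-punctured plane based at the origin; in this model both $c_i$ and $c'_i$ are conjugate to the $i$-th standard generator $x_i$, and, once ordered compatibly with the radial position of the bulks, the product of each family represents the loop encircling all $n$ punctures. Hence the automorphism $\theta$ of $G$ with $\theta(c_i)=c'_i$ sends each $x_i$ to a conjugate of itself and fixes that loop, so by Artin's theorem \cite{artin1947theory} it is induced by an element $\beta\in\mathcal{B}_n$ acting through the formula recalled before the statement for $\beta_i\cdot(c_1,\ldots,c_n)$; unwinding the identification $\theta=\beta$ and reinstating the reordering of the previous step gives the asserted equality $\sigma_\beta\cdot(c'_1,\ldots,c'_n)=\beta\cdot(c_1,\ldots,c_n)$. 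I expect the main obstacle to be precisely this last geometric input: that any basis of $G$ by lassos with disjoint bulks enclosing all $n$ faces has total product (in the appropriate order) equal to the fixed loop around infinity. That is what confines $\theta$ to the braid subgroup of $\operatorname{Aut}(F_n)$, rather than to the larger group of automorphisms that only permute the peripheral conjugacy classes, and so makes Artin's characterisation applicable; for this point I would follow the argument of \cite{gabriel2015planar} closely.
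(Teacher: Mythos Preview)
The paper does not actually prove this proposition: it is stated with a pointer to Proposition~6.8 of \cite{gabriel2015planar} and to Artin's paper \cite{artin1947theory}, and the text moves directly to Definition~\ref{def:braid} without any proof environment. So there is no ``paper's own proof'' to compare against; the result is imported from the literature.

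That said, your outline is exactly the classical route through Artin's theorem, and it is the route taken in the reference the paper cites. The rank argument for $n=p$ via winding numbers is clean, and the identification of the change of basis with an automorphism of $F_n$ permuting the peripheral conjugacy classes is the right setup for Artin's characterisation of $\mathcal{B}_n$ inside $\operatorname{Aut}(F_n)$. You have also correctly isolated the one place where real work is needed: showing that the product of the lassos, taken in a suitable order, is the boundary loop of the unbounded face, so that the automorphism lies in the braid subgroup rather than merely in the group of conjugacy-class-permuting automorphisms. This is precisely the geometric input that \cite{gabriel2015planar} supplies; without it one would only get an element of the symmetric automorphism group (sometimes called the loop braid group), which is strictly larger than $\mathcal{B}_n$. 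One small caution: the Hurwitz-type action written in the paper, $\beta_i\cdot(c_1,\ldots,c_n)=(c_1,\ldots,c_{i+1},c_{i+1}c_ic_{i+1}^{-1},\ldots,c_n)$, does not preserve the left-to-right product $c_1\cdots c_n$, so when you carry out the Artin step you should check which invariant word (e.g.\ the reversed product, or the product of inverses) is actually fixed by this convention and phrase the hypothesis of Artin's theorem accordingly.
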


\begin{definition}[see Definition 6.10 in \cite{gabriel2015planar}]
	\label{def:braid}
	Let $\mathcal{H} = (H,\tau_{c})_{c\in {\sf Lass_0}(\mathbb{R}^2)}$ a family of objects in $\mathcal{B}$ indexed by lassos.
	\begin{enumerate}[\indent $\scriptstyle \bullet$]
		\item The family $\mathcal{H}$ is said to be \emph{purely invariant by braids} if for any sequence $\left(c_{1},\ldots,c_{n}\right)$ of lassos with disjoint interiors, one has
		      \begin{equation}
			      \label{eq:braiinv}
			      \tag{braid}
			      \tau_{\beta \cdotp c} = \tau_{\sigma_{\beta} \cdot c} \circ {\sf H}_{\sigma_{\beta}\cdotp c}(\beta\,\cdotp c),\quad  \textrm{ for any braid } \beta \in \mathcal{B}_{n}.
		      \end{equation}
		\item The family $\mathcal{H}$ is said \emph{infinitely divisible} if for any pair of lassos $\left(c_{1}, c_{2}\right)$ with disjoint interiors, we have:
		      \begin{equation}
			      \label{eq:infinidiv}
			      \tag{$\infty$-div}
			      \tau_{c_{1}\cdotp c_{2}} = \tau_{c_{1},c_{2}} \circ {\sf H}_{\left(c_{1},c_{2}\right)}\left(c_{1}c_{2}\right) \quad \textrm{if}\quad c_{1}c_{2} \quad \textrm{is a lasso in ${\sf RL}_0(\mathbb{R}^2)$.}
		      \end{equation}
	\end{enumerate}
\end{definition}

Proposition \ref{prop:artin} motivates the first point of Definition \ref{def:braid}. 

\begin{assumptions}
\label{binvinfidiv}
{From now on, we assume the family $\mathcal{H}= (H_{c},\tau_{c})_{c\in {\sf Lass}_0(\mathbb{R}^2)}$ to be \emph{braid-invariant} and \emph{infinitely divisible.}}
\end{assumptions}

\begin{proposition}
	\label{prop:inductivelimit}
	With the notations introduced so far, let $L \subset \mathcal{P}({\sf Lass}_0(\mathbb{R}^2)$ be a finite sequence of loops. Let $(c_{1},\ldots,c_{p})$ and $(c^{\prime}_{1},\ldots,c^{\prime}_{q})$ be two sequences of lassos in $\mathcal{P}({\sf RL}_0(\mathbb{R}^2)$ and assume that $$L \subset {\sf RL}_0\langle(c_{1},\ldots,c_{p}) \rangle \cap {\sf RL}_0\langle(c^{\prime}_{1},\ldots,c^{\prime}_{q})\rangle,$$ then the distribution of $L$ respectively to $(c_1,\ldots,c_p)$ is equal to the distribution of $L$ respectively to $(c^{\prime}_1,\ldots,c^{\prime}_q)$:
    $$\tau_{c_{1},\ldots,c_{p}}^{L} = \tau_{c^{\prime}_{1},\ldots,c^{\prime}_{q}}^{L}.$$
    This implies that if $ (c_1,\ldots,c_p) \prec (c^{\prime}_1,\ldots,c^{\prime}_q)$ is pair of finite sequences of lassos in $\mathcal{P}({\sf Lass}_0(\mathbb{R}^2))$, one has in fact:
    $${\sf H}_{(c^{\prime}_1,\ldots,c^{\prime}_q)}(c_1,\ldots,c_p):(H^{\free p},\tau_{c_{1},\ldots,c_{p}})\to(H^{\free q},\tau_{c^{\prime}_{1},\ldots,c^{\prime}_{q}}).$$
\end{proposition}

\begin{proof}


 Let $L=(\ell_1,\ldots,\ell_n) \in \mathcal{P}({\sf RL}_0(\mathbb{R}^2))$ be a sequence of loops. Let $f_{1},\ldots,f_{q}$ be an enumeration of the faces of $\mathbb{G}_{L}$ and pick a sequence of anticlockwise affine lassos $(c_{1},\ldots,c_{p})$ with $c_{i}$ surrounding the face $f_{i}$. First, since we assumed that the monoidal product is symmetric, $\tau_{c_{1},\ldots,c_{p}}$ does not depend on the enumeration of the faces we choose.
	Then, if $\beta$ is a braid in $\mathcal{B}_{p}$,
	\begin{equation*}
		\begin{split}
			\tau^{L}_{\beta \cdot (c_{1},\ldots,c_{p})} &= \tau_{\beta \cdot (c_{1},\ldots,c_{p})} \circ {\sf H}_{\beta \cdot (c_{1},\ldots,c_{p})}(\ell_{1},\ldots,\ell_{q})\\ &= \tau_{c_{\beta(1)},\ldots c_{\beta(q)}}\circ {\sf H}_{\sigma_{\beta}\cdot c}(\beta \cdot c)\circ {\sf H}_{\beta \cdot c}(\ell_{1},\ldots,\ell_{p}) = \tau_{c_{\beta(1)},\ldots,c_{\beta(n)}}\circ {\sf H}_{\sigma_{\beta}\cdot c}(\ell_{1},\ldots,\ell_{q})\\
			&= \tau_{c_{\beta(1)},\ldots,c_{\beta(n)}}^{L} = \tau_{c_{\beta(1)},\ldots,c_{\beta(n)}}^{L} = \tau_{c_{1},\ldots,c_{n}}^{L}.
		\end{split}
	\end{equation*}
	In conclusion, $\tau_{L}^{c_{1},\dots, c_{p}}$ does not depend on the basis of lassos $(c_{1},\ldots,c_{p})$ we choose for ${\sf RL}(\mathbb{G}_{L})$.

	Let $C^{\prime} = (c^{\prime}_{1},\ldots,c^{\prime}_{p})$ be a finite sequence of affine lassos such that $L\subset {\sf RL}\langle (c^{\prime}_{1},\ldots,c^{\prime}_{p})\rangle$. The graph $\mathbb{G}_{C^{\prime}}$ is finer than the graph $\mathbb{G}_{L}$ and can thus be obtained by iterative application of two transformations, starting from the graph $\mathbb{G}_{L}$:
	\begin{enumerate}[\indent\indent 1.]
		\item \label{trun} adding a vertex on an edge,
		\item \label{trdeux}connecting two vertices (they can be equal).
	\end{enumerate}

	\par Let $\mathbb{G}_{L} \prec \mathbb{G}_{1} \prec \cdots \prec \mathbb{G}_{n} \prec \mathbb{G}_{C^{\prime}}$ be a sequence of graphs obtained by successive applications of the transformations \ref{trun} and \ref{trdeux}: $\mathbb{G}_{i+1}$ is obtained from $\mathbb{G}_{i}$ by one of the transformation \ref{trun}, \ref{trdeux}.

	Next, we define inductively a sequence of objects $H^{i},~{0\leq i \leq n+1}$ in the category $\mathcal{B}$.

	We first define a sequence $(c^{(1)},\ldots,c^{(n+1)})$ with $c^{(i)} \in \seqlasso$ such that for each integer $1 \leq i\leq n$, $c^{(i)}$ is a basis of ${\sf RL}\langle \mathbb{G}_{i}\rangle$. Put $c^{(n+1)} = (c^{\prime}_{1},\ldots,c^{\prime}_{p})$. If $\mathbb{G}_{i+1}$ is obtained by adding a vertex to $\mathbb{G}_{i}$, the groups of loops drawn on $\mathbb{G}^{i+1}$ is equal to the group of loops drawn on $\mathbb{G}^{(i)}$. In that case we set $c^{i} = c^{i+1}$.

	On the contrary, assume that two vertices of $\mathbb{G}_{i}$ are connected by an affine path $e$ to obtain $\mathbb{G}_{i+1}$. If a bounded face $f$ of $\mathbb{G}_{i}$ is cut into two faces $f_{1}$ and $f_{2}$, we obtain a basis $c^{(i)}$ of ${\sf RL}(\mathbb{G}_{i})$ as follows. First, we pick the lassos $b_{1},\ldots,b_{q}$ that do not surround the faces $f_{1}$ and $f_{2}$. It may happen that the tail of $b_{i}$ contains $e$, in that case we can certainly replace this lasso by another one surrounding the same face but with a tail avoiding $e$. Then we take the product of the two lassos surrounding the faces $f_{1}$, $f_{2}$ to obtain a lasso $c_{f}$. If the added edge is on the boundary of the unique unbounded face of $\mathbb{G}_{i+1}$ then we simply remove the lasso that surround the bounded face created by adding the edge $e$.

	We eventually pick any enumeration of the set of lassos we obtain this way. See Fig. \ref{fig:exdecomposition} for an example of these elementary transformations.

	\begin{figure}[!htb]\centering
		\scalebox{0.7}{
		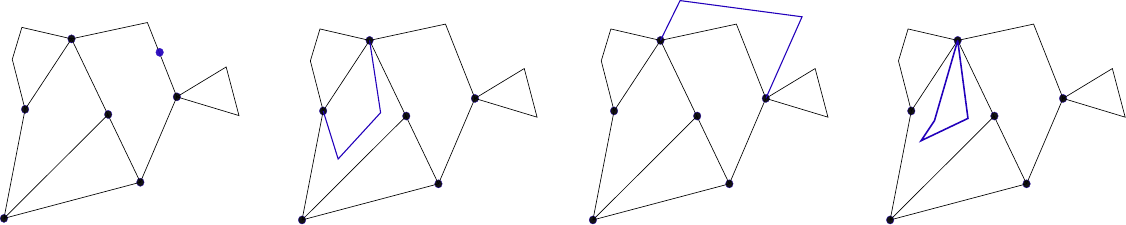
		}
		\caption{\label{fig:exdecomposition} Example of elementary transformations of a graph. The added edge $e$ is coloured in violet, as well as the added vertex.}
	\end{figure}
	Next, we define inductively, the sequence of objects in $\mathcal{B}$ by the equations:
	\begin{equation*}
		\begin{split}
			H^{(0)} = (H^{\free |L|}, \tau_{L}),~
			H^{i} = \left(H^{\free |L|},~\tau_{c^{(i)}} \circ {\sf H}_{c^{(i)}}\left(\ell_{1},\ldots,\ell_{p}\right)\right) = (H^{|L|}, \tau_{i}), 1 \leq i \leq n
		\end{split}
	\end{equation*}
	and set $\tau^{(n+1)}=\tau^{L}_{c^{\prime}_{1},\ldots,c_{p}^{\prime}}$.
	Let $i \leq n$ an integer such that $\mathbb{G}^{(i+1)}$ is obtained by cutting a face of $\mathbb{G}^{(i)}$ in two. We claim that the diagram in Fig. \ref{comtracepreserv} is a commutative diagram of morphisms in $\mathcal{C}$. In Fig. \ref{comtracepreserv}, the blue arrows are morphisms of $\mathcal{B}$. The upper arrow in Fig. \ref{comtracepreserv} is thus painted in blue owing to the infinite divisibility of the morphisms $\tau_{c},~c \in {\sf Lass}_0(\mathbb{R}^2)$.

	\begin{figure}[!htb]\centering
		\begin{tikzcd}
			(H_{c^{(i)}},E_{c^{(i)}})\arrow["{\sf H}_{c^{(i+1)}}(c^{(i)})",cyan]{rr} & & (H_{c^{(i+1)}},E_{c^{(i+1)}})\\
			H^{(i)}\arrow["\id"]{rr}\arrow[cyan]{u}{{\sf H}_{c^{(i)}}(l_{1},\ldots,l_{p})}& &H^{(i+1)}
			\arrow[cyan,swap]{u}{{\sf H}_{c^{(i+1)}}(\ell_{1},\ldots,\ell_{p})}
		\end{tikzcd}
		\caption{\label{comtracepreserv}\small Compatibility of the holonomies above the loops $\ell_{1},\ldots,\ell_{p}$ computed in the basis $c_{i}$ and $c_{i+1}$.}
	\end{figure}

	From Fig. \ref{comtracepreserv}, the sequence of morphisms $\left(\tau_{i}, 1 \leq i\leq n\right)$ is a constant sequence which leads readily to the conclusion since $\tau^{L}_{(c_{1},\ldots,c_{p})}= \tau_{0} = \tau_{n+1} = \tau^{L}_{(c^{\prime}_{1},\ldots,c^{\prime}_{p})}$.
\end{proof}
Following \ref{prop:inductivelimit}, given a sequence of loops $L$ we denote by $\tau^{L}$ the distribution of $(\ell_1,\ldots,\ell_p)$ respectively to any sequence of lassos generating the group of reduced loops ${\sf RL}_0(\mathbb{G}_L)$.
Proposition \ref{prop:inductivelimit} then implies that the functor ${\sf \tilde{A}}$ defined for any sequences of lassos $~ L \prec L^{\prime} \in \mathcal{P}({\sf Lass}_0(\mathbb{R}^2))$ by, 
\begin{equation*}
	{\sf \tilde{A}}(L) = (H^{|L|}, \tau^{L}),~ {\sf \tilde{A}}(L,L^{\prime}) = {\sf H}_{L^{\prime}}(L),
\end{equation*}
is well-defined.
We denote by $\mathcal{A}=(A,\tau_{\mathcal{A}})$ the colimit of ${\sf A}$ in $\mathcal{B}$. Finally, to define ${\sf H}_{L}$ with $L = (c_1,\ldots,c_p) \in \mathcal{P}({\sf Lass}_0(\mathbb{R}^2)$, we simply take the product in the group of homomorphism $(\Hom_{\mathcal{C}}(H,\mathcal{A}), \times)$ of the Holonomies $c_i$ that is of the canonical injections $\iota_{i}^{(c_1,\ldots,c_p)}$ in the order prescribed by the sequence $(c_1,\ldots,c_p)$. By construction, the property \ref{independence} holds.

We recall that the right comodule $(H,\Omega_c)$ where $\Omega_c$ the conjugacy coaction, in the category rCoMod$\mathcal{C}(H)$ has been defined in Section \ref{sec:ZhangcoMod}, equation \eqref{eqn:gaugecoaction}.
\begin{assumptions}
Assume further that each morphism $\tau_{c},~c\in{\sf Lass}_0(\mathbb{R}^2)$ is \emph{gauge-invariant}:
\begin{equation}
	\label{eqn:gaugeinvariant}
	\tag{G-inv}
	\tau_{c}\circ (({\sf K} \freem \id_{H}) \circ \Omega_{c}) = \tau_{c},~ c\in{\sf Lassos}.
\end{equation}
for each morphism $K: H \to H_{c}$ such that $\hat{K}$ (see Definition \ref{def:morphism}) is independent from $\id_{H_{c}}$.
\end{assumptions}

\begin{lemma}
\label{lemma:lemmaomegabar}
 With the notations introduced so far, the colimit $\mathcal{A}$ can be endowed with a co-action $\overline{\Omega}_{c}$ of the $H$-algebra $H$ that makes the diagram in Fig. \ref{diag:co-action} commutative.

\begin{figure}[!htb]\centering
	\begin{tikzcd}
		A \arrow{r}{\overline{\Omega}_{c}}&H \free A\\
		H^{\free |L|}\arrow{u}{j_{H_{L}}^{A}} \arrow{r}{\Omega_{c}^{|L|}} & H \free H^{\free |L|} \arrow{u}[swap]{\id_{H}\free j_{H_{L}}^{A}}
	\end{tikzcd}
	\caption{\label{diag:co-action}\small The canonical injection $j^{A}$ are comodule morphisms if the direct limit $A$ is endowed with $\bar{\Omega}_{c}$.}
\end{figure}
\end{lemma}
\begin{proof}
First, notice that for any pair of sequences of loops $L \prec L^{\prime}$, the Holonomy ${\sf H}_{L^{\prime}}(L)$ is gauge covariant,
\begin{equation}
	\label{eqn:gaugelassosun}
\Omega_{c}^{|L^{\prime}|}\circ {\sf H}_{L^{\prime}}(L)=(\id_{H} \freem {\sf H}_{L^{\prime}}(L))\circ \Omega_{c}^{L}.
\end{equation}
Hence, the co-action $\overline{\Omega}_{c}:A\to H \free A$, defined by
\begin{equation*}
	\overline{\Omega}_{c}([X_{L}]) = ((\id_{H}\free j_{H_{L}}^{\mathcal{A}})\circ \Omega_{c})(X_{L}),~ [X_{L}] \in A
\end{equation*}
is well defined, since:
\begin{equation*}
	\begin{split}
		((\id_{H} \free j_{H_{L^{\prime}}}^{\mathcal{A}})\circ \Omega_{c}^{L^{\prime}}\circ {\sf H}_{L^{\prime}}(L))(X_{L})&=(\id_{H}\free( j_{H_{L^{\prime}}}\circ{\sf H}_{L^{\prime}}(L)))\circ \Omega_{c}^{L})(X_{L})\\ &=((\id_{H}\free\iota_{H_{L}})\circ\Omega_{c}^{L})(X_{L}).
	\end{split}
\end{equation*}
\end{proof}
The property $\ref{gaugeinv}$ is satisfied by every morphism $\tau_{c_{1},\ldots,c_{p}}$, $(c_{1},\ldots,c_{p}) \in \mathcal{P}({\sf Lass}_0(\mathbb{R}^2))$. In fact, let $\tau_{H}:H \to k$ be a morphism of the category $\mathcal{D}$. For all integer $1 \leq k \leq p$, owing to equations \eqref{eqn:gaugeinvariant} and \eqref{eqn:gaugelassosun},
\begin{equation}
	\label{gauginvlasso}
	(\tau_{H} \otimes (\tau_{c_{1}} \otimes \cdots \otimes \tau_{c_{p}})) \circ (\id_{H} \free \iota^{(c_1,\ldots,c_p)}_{k}) \circ \Omega_{c} = \tau_{c_{k}}.
\end{equation}
Since the morphisms $\iota^{(c_1,\ldots,c_p)}_{k}$ are mutually independent, the morphisms $\iota^{(c_1,\ldots,c_p)}_{{i}} \circ \Omega_{c_{i}}: H_{c_{i}} \to (H,\tau_{H}) \otimes H_{c_{1}} \otimes \cdots \otimes H_{c_{p}}$ are also mutually independent.
Hence, owing to equation \eqref{gauginvlasso},
\begin{equation*}
	\Omega_{c}^{(n)} = \iota^{(c_1,\ldots,c_p)}_{{1}} \circ \Omega_{{1}} \freem \cdots \freem \iota^{(c_1,\ldots,c_p)}_{{p}}\circ \Omega_{{p}} \in \Hom_{\mathcal{B}}(H_{c_{1}} \otimes \cdots \otimes H_{c_{p}}, (H,\tau_{H}) \otimes H_{c_{1}} \cdots \otimes H_{c_{p}}).
\end{equation*}

Gauge invariance of $\tau_{c_{1},\ldots,c_{p}}$ implies gauge-invariance of the morphisms $\tau_{\ell_{1},\ldots,\ell_{p}}$, which in turn implies gauge invariance of their direct limit, $\phi_{\mathcal{A}}$.
\par Invariance by area-preserving homomorphisms of $\mathbb{R}^{2}$ is implied by invariance by area-preserving homomorphisms of each morphism $\tau_{c},~c\in{\sf Lass}_0(\mathbb{R}^2)$: for each area-preserving diffeomorphism $T:\mathbb{R}^2\to\mathbb{R}^2$, we have the equality $\tau_{c}=\tau_{T(c)}$.
Let us collect in the following Theorem the main outcome of this section.
\begin{theorem}
	For each lasso $c \in {\sf Lass}_0(\mathbb{R}^2)$, pick $\tau_{c} : H \rightarrow k$ a morphism of $\mathcal{D}$. Assume that the family $\{\tau_{c},~c \in {\sf Lass}_0(\mathbb{R}^2)\}$ is
	\begin{enumerate}[\indent $\sbt$]
		\item purely braid invariant, see equation \eqref{eq:braiinv},
		\item infinitely divisible, see equation \eqref{eq:infinidiv}
		\item gauge-invariance, see equation  \eqref{eqn:gaugeinvariant},
		\item invariance by homomorphisms preserving the area: for each lasso $c\in {\sf Lass}_0(\mathbb{R}^2)$ and for each area-preserving diffeomorphism ~$T : \mathbb{R}^2 \rightarrow \mathbb{R}^{2}$ 
		\begin{equation*}
			\tau_{c} = \tau_{T(c)} 
		\end{equation*}
		Then there exists a categorical Holonomy field $\H$ in the sense of Definition \ref{definition_master_field_categorical} such that for each anticlockwise lasso, $\tau_{\mathcal{A}} \circ \H(c) = \tau_{c}$.
	\end{enumerate}
\end{theorem}

\par In the next section, we expose how to obtain a family $\{\tau_{c}: H\to k,~c\in{\sf Lass}_0(\mathbb{R}^2)\}$ that is braid-invariant, infinitely divisible, gauge-invariant and invariant by area-preserving homomorphisms starting from a quantum L\'evy process on a $H$-algebra $H$.

\subsection{Categorical Holonomy Field: construction from a Quantum L\'evy process} 
\label{sec:constructionlevyproc}

In this section, we work in the algebraic settings we progressively put in place in the previous section. Let $(\mathcal{C},\free,k)$ be an \emph{algebraic category}. Let $F:\mathcal{C}\to\mathcal{D}$ be a \emph{wide} and \emph{faithful} functor from $\mathcal{C}$ to $\mathcal{D}$. Set $F(k)=\star$.
	We assume that $\mathcal{C}$ and $\mathcal{D}$ are \emph{cocomplete}. Set $\mathcal{B} = \mathcal{C} \, \downarrow \, \star$ and pick a symmetric monoidal product $\otimes$ on $\mathcal{B}$ such that \emph{${\sf Domain}$ is a monoidal functor};
\begin{equation*}
	\mathcal{C} \overset{F}{\rightarrow} \mathcal{D} \leftarrow \star,\quad \mathcal{B} = \mathcal{C}\,\downarrow \, \star.
\end{equation*}

Let $(H,\Delta,\varepsilon,\mathcal{S})$ be a $H$-algebra in $\mathcal{C}$ and $\mathcal{A}=(A,\tau_{\mathcal{A}})$ be an object in the comma category $\mathcal{B}$. We recall the definition of a quantum L\'evy processes on $H$. 
\begin{definition}[L\'evy process \cite{franz2004theory}]
	\label{Levy}
	 For any time $s > 0$, let $j_{s} : H \rightarrow \mathcal{A}$ be a morphism of $\mathcal{C}$. Set, for any pair of times $0 < s < t$, $j_{s,t} = j_t \freem j_s \circ \Delta$. We say that $j=(j_{s})_{s>0}$ is a \emph{quantum L\'evy process on $H$} if
	\begin{enumerate}[\indent 1.]
		\item (Increments 1) \label{increment1} for all triple of times $u < s <t$, $j_{u,s}\freem j_{s,t} = j_{u,t}$ ,
		\item (Increments 2) \label{increment2}for all time $s\geq 0$ and $b \in B$, $j_{s,s}(b) = \varepsilon(b)$, 
		\item \text{(Independence)} \label{increment3}for any tuple $(s_{1} < t_{1} \leq s_{2} < t_{2} \ldots \leq s_{p} < t_{p})$,
		      \begin{equation*}
			      \tau \circ j_{s_{1},t_{1}} \freem \cdots \freem j_{s_{p},t_{p}} = \tau \circ j_{s_{1},t_{1}} \otimes \cdots \otimes \tau \circ j_{s_{p},t_{p}} \quad ,
		      \end{equation*}
		\item (Stationnarity) \label{increment4} $\tau \circ j_{s,t} = \tau \circ j_{t-s}$ 
	\end{enumerate}
\end{definition}
\begin{remarque}
In case we consider the category $\Alg^{\star}(B)$ of involutive bimodule algebras over an \emph{Banach algebra} $B$ and $\mathcal{B}$ the category \Prob(B) of non-commutative probability spaces (that is the usual settings of non-commutative probability theory), since the initial object is $B$, we require also
\begin{equation*}
	\label{continuity}
	\lim_{t \to s^{+}} \tau \circ j_{s,t} = \varepsilon.
\end{equation*}
\end{remarque}

We let $j = (j_{s,t})_{s \leq t}$, $j_{s,t}:H \to (\mathcal{A},\tau_{\mathcal{A}})$, $s \leq t$ be a L\'evy process on the $H$-algebra $H$ taking values in an object $(\mathcal{A}, \tau_{\mathcal{A}})$ of $\mathcal{B}$. Let $c \in {\sf Lass}_0(\mathbb{R}^2)$ be a lasso drawn on the plane and denote by $|c|$ the area enclosed by the bulk of $c$. For each anticlockwise oriented lasso, we define the object $H_{c}$ in the category $\mathcal{B}$ by 
\begin{equation*}
H_{c}=(H,\tau_{c}) = (H, \tau_{\mathcal{A}} \circ j_{|c|}).
\end{equation*}

\begin{definition}
     Let $j$ be a quantum L\'evy process on a $H$-algebra $H$,
	\begin{enumerate}[\indent 1.]
		\item we say that $j$ is \emph{braid invariant} if for all integer $n\geq1$ and braid $\beta \in \mathcal{B}_{n}$, $\beta\cdot (j_{s_{1},t_{1}},\ldots,j_{s_{n},t_{n}})$ has the same distribution as $\sigma_{\beta} \cdot (j_{s_{1},t_{1}},\ldots,j_{s_{n},t_{n}})$ for tuples of times $s_{1} < t_{1} \leq s_{2} < t_{2} \leq \ldots \leq s_{n} < t_{n}$,
		\item we say that $j$ is \emph{gauge-invariant} if for all pair of times $0 < s,t$, $(\tau_{\mathcal{A}} \otimes \tau_{H}) \circ (j_{s,t} \free \id_H) \circ \Omega_{c}$.
	\end{enumerate}
\end{definition}
\begin{theorem}
	\label{maintheoremzhanghol}
 In the algebraic setting recalled at the beginning of this Section, let $j=(j_{s})_{s > 0}$ be a braid and gauge-invariant Lévy process.
	There exists a Categorical Holonomy Field ${\sf H }$, in the sense of Definition $\ref{definition_master_field_categorical}$ satisfying the following property. For any one-parameter family of growing simple loops $\gamma = (\gamma_{t})_{t\geq 0}$ based at $0 \in \mathbb{R}^{2}$, surrounding a domain Int($\gamma_{t}$) with
	\begin{enumerate}[\indent 1.]
		\item for all time $t \geq 0$, $|\mathrm{Int}(\gamma_{t})| = t$,
		\item for all times $s \leq t$, $\mathrm{Int}(\gamma_{s}) \subset \mathrm{Int}(\gamma_{t})$.
	\end{enumerate}
	The process $(\H(\gamma_{t}))_{t\geq 0}$ has the same non-commutative distribution as the initial L\'evy process $j$.
\end{theorem}

\begin{proof}
    Infinite divisible property of the family $H_{c},~c \in {\sf Lassos}$ is implied by the fact that $j$ is a L\'evy process. In fact, let $c_{1}$ and $c_{2}$ two lassos such that $c_{1}c_{2}$ is also a lasso. In that case, the area enclosed by the bulk of $c$ is the sum of the two areas enclosed by $c_{1}$ and by $c_{2}$. Hence, 
    \begin{align*}
    \tau_{c_{1}c_{2}} &= \tau_{\mathcal{A}} \circ j_{|c_{1}|+|c_{2}|} \\
    &= \tau_{\mathcal{A}} \circ (j_{0, c_{1}}\times j_{|c_{1}|, |c_{1}| + |c_{2}|}) & \textrm{(By definition of the increments)} \\ 
    &= (\tau_{c_{1}} \otimes \tau_{c_{2}}) \circ (\iota_{c_{1}}\times \iota_{c_{2}})& (\otimes-\textrm{independence of the two increments $j_{0,|c_{1}|}$ and $j_{|c_{1}|,|c_{1}|+|c_{2}|}$})
    \end{align*}
    The braid- and gauge-invariance of $\mathcal{H}$ is downwardly implied by the braid- and gauge-invariance of $j$.
    So does invariance by area-preserving homomorphisms.
\end{proof}
\section{Examples of Quantum Holonomy Fields}
\label{examples}
We let $\mathbb{K}$ be one of the three algebras with division $\mathbb{C},\mathbb{R}$ and the quaternions $\mathbb{H}$ and pick an integer $N\geq 1$. We denote by $\mathcal{M}_{N}(\mathbb{K})$ the algebra of $N\times N$ matrice with entries in $\mathbb{K}$.
This Section is devoted to examples. In Section \ref{yangmillsfields}, we apply the method exposed in the previous sections to build Quantum Holonomy Fields over the $H$ algebra of polynomials functions $\mathcal{F}(\mathbb{U}(N,\mathbb{K}))$ (the structural morphisms will be introduced in due time) on the unitary group $\mathbb{U}(N,\mathbb{K})$. We recover a class of \emph{generalized (free and classical) master fields} introduced by the author in \cite{cebron2017generalized}. In this case, thanks to the Fubini theorem, the construction of a Quantum Holonomy Field can be performed from any gauge-invariant Lévy process (braid-invariance is implied by gauge-invariance, which is \emph{not} true in full generality).

In Section \ref{higherdimmasterfield}, we build higher dimensional Master Fields: Quantum Holonomy Fields over the dual Voiculescu groups $\mathcal{O}\langle n \rangle$, $n\geq 1$, obtained from a braid-invariant free Lévy process. Finally, in Section \ref{sec:amalgamated}, we provide an example of a Quantum Holonomy Fields over a $H$-algebra in the category ${\rm Prob}(\C^d)$, the category of amalgamated probability spaces over $\mathbb{C}^d$, seen as a commutative algebra. 
We begin by introducing the (classical) stochastic process at the basis of all the Quantum Holonomy Fields defined in this Section: a Brownian diffusion over $\mathbb{U}(N,\mathbb{K})$.
We refer the reader to the first part of \cite{nico2} for further details.
We denote by $({\sf i},{\sf j},{\sf k})$ the linear real basis of $\mathbb{H}$:
$$
{\sf i}^{2} = {\sf j}^{2} = {\sf k}^{2} = -1, \quad {\sf ij}={\sf k}, {\sf jk}={\sf i}, {\sf ki} = {\sf j}.
$$
The adjoint of an element $x \in \mathbb{K}$ is denoted $x^{\star}$; $\star : \mathbb{K}\to \mathbb{K}$ is a $\mathbb{R}$-linear involution such that:  
$$
{\sf i}^{\star} = -{\sf i},~{\sf j}^{\star} = -{\sf j},~{\sf k}^{\star} = -{\sf k}.
$$
The adjoint $M^{\star}$ of a matrix $M = \left(M_{ij}\right)_{1 \leq i,j\leq N}\in \mathcal{M}_{N}(\mathbb{K})$ is defined by:$$M^{\star} = \left(M^{\star}_{ij}\right)_{ 1 \leq i,j \leq N} = \left( M^{\star}_{ji}\right)_{1 \leq i,j \leq N}.$$
The group of unitary matrices with entries in $\mathbb{K}$ is the connected subgroup of $\mathcal{M}_{N}(\mathbb{K})$ defined by
\begin{equation*}
	\mathbb{U}(N,\mathbb{K}) = \{M \in \mathcal{M}_{N}\left(\mathbb{K}\right),~MM^{\star} = M^{\star}M = 1 \}^{0}.
\end{equation*}
where the exponent ${ }^{0}$ means that we take the connected component of the identity (it is needed for the real case). If $K = \mathbb{R}$ the group $\mathbb{U}(N,\mathbb{R})^0$ is the group of special orthogonal matrices $SO(N,\mathbb{R})$ and for $K = \mathbb{C}$, $\mathbb{U}(N,\mathbb{C})^0$ is the whole group of unitary matrices with complex entries. The Lie algebra $\mathfrak{u}(N,\mathbb{K})$ is given by
\begin{equation*}
	\mathfrak{u}(N,\mathbb{K}) = \{H \in \mathcal{M}_{N}(\mathbb{K}) : H^{\star} + H = 0 \}.
\end{equation*}
The real Lie algebra  of skew-symmetric matrices of size $N\times N$ is denoted $a_{N}$ and the vector space of symmetric matrices of size $N\times N$ is denoted $s_{N}$. As real Lie algebras, one has the direct-sum decompositions:
\begin{equation}
	\label{liealgebras}
	\mathfrak{so}_{N} = \mathfrak{a}_{N}, \quad \mathfrak{u}_{N} = \mathfrak{a}_{N} \oplus {\sf i}\mathfrak{s}_{N}, \quad \mathfrak{sp}_{N} = \mathfrak{a}_{N} \oplus {\sf i}\mathfrak{s}_{N} + {\sf j} \mathfrak{s}_{N} \oplus {\sf k}\mathfrak{s}_{N},~ N \geq 1.
\end{equation}
It follows that, with $\beta = \textrm{dim}_{\mathbb{R}}(\mathbb{K})$,
$$
\textrm{dim}(\mathfrak{u}(N,\mathbb{K})) = \frac{N(N-1)}{2} + (\beta - 1) \frac{N(N + 1)}{2}, N \geq 1.
$$
To define a Brownian motion on the group $\mathbb{U}(N,\mathbb{K})$ one needs to pick first a scalar product on the Lie algebra $\mathfrak{u}\left(N,\mathbb{K} \right)$ invariant by the Adjoint action of $U(N,\mathbb{K})$ over its Lie algebra. 
Notice that since $U\left(N,\mathbb{K}\right)$ is semi-simple, its Killing form is non-degenerate. Besides, as the group $\mathbb{U}(N,\mathbb{K})$ is compact, the negative of the Killing form is an invariant scalar product. Since we are going to let the dimension $N$ tends to $+\infty$, we care about the normalization of the Killing form. Let $\langle \cdot, \cdot \rangle_{N}$ be the scalar product
\begin{equation*}
	\langle X,Y \rangle_{N} = \frac{\beta N }{2}\mathcal{R}e({\sf Tr}(X^{\star}Y)),~ X,Y \in \mathfrak{u}\left(N,\mathbb{K}\right).
\end{equation*}
The direct sums in the equations \eqref{liealgebras} are decompositions into mutually orthogonal summands for $\langle \cdot, \cdot \rangle_{N}$. Let $\{H^{N}_{k}\}$ be an orthonormal basis of $\mathfrak{u}\left(N,\mathbb{K}\right)$, \emph{the Casimir element} $C_{\mathfrak{u}(N,\mathbb{K})}$ is a bivector in the real two-fold tensor product $ \mathfrak{u}(N,\mathbb{K}) \otimes_{\mathbb{R}} \mathfrak{u}(N,\mathbb{K})$ defined by the formula:
\begin{equation*}
	C_{\mathfrak{u}(N,\mathbb{K})} = \sum_{k = 1}^{\beta} H_{k} \otimes H_{k}.
\end{equation*}
Set $c_N^{\mathbb{K}} = m_{\mathcal{M}(N,\mathbb{K})}(C_{\mathfrak{u}(N,\mathbb{K})}$ where $m_{\mathcal{M}(N,\mathbb{K})}:\mathcal{M}(N,\mathbb{K})^{\otimes 2}\to \mathcal{M}(N,\mathbb{K})$ is the matrix multiplication.
\emph{The unitary Brownian motion} $\munitaryfd = (\munitaryfd(t))_{t\geq 0}$ is the stochastic process which is the strong solution of the following classical stochastic differential equation:
\begin{equation*}
	\begin{split}
		&\textrm{d}\munitaryfd(t) = \textrm{d}{\sf W}_{N}^{\mathbb{K}}(t)\munitaryfd(t) + \frac{1}{2}c^{\mathbb{K}}_{N}\munitaryfd(t) \textrm{dt}\\
		&\munitaryfd(0) = I_{N},
	\end{split}
\end{equation*}
with ${\sf W}_{N}^{\mathbb{K}}$ a standard Brownian motion $\mathfrak{u}_N(\mathbb{K})$ with respect to the scalar product $\langle -,-\rangle_N$ and $I_N$ the identity matrix in $\mathcal{M}(N,\mathbb{K})$. The entries of ${\sf W}_{N}^{\mathbb{K}}$ are, up to symmetries, independent Brownian motions whose variance scales as the inverse of the dimension $N$. 
We refer the reader to \cite{levy2008two} for the proof of the following Lemma.
\begin{lemma}
    The matricial stochastic process $\munitaryfd$ exists for any time $t\geq 0$, is almost surely valued in the compact group $\mathbb{U}(N,(\mathbb{K})$ and is gauge-invariant;
    $$
    UU_N^{\mathbb{K}}U^{-1} \overset{distr}{=} U_N^{\mathbb{K}}
    $$
    with $U$ a Haar distributed unitary matrix.
\end{lemma}

This diffusion yields three quantum processes that are of interest for the present work and are defined in \cite{nico2}. Of these three quantum processes, one is a Quantum Lévy process. They are introduced in the next Section.

\subsection{Classical Yang-Mills fields}
\label{yangmillsfields}
 Recall that we denote by $\F(\U(N,\K))$ the algebra of complex polynomial functions on $\U(N,\K)$ (polynomials functions in the coefficients of the matrix). First, we define a \emph{classical L\'evy process $j_{N}$} over $\F(\U(N,\K))$, (the increments are tensor independent), by setting for all time $s\geq 0$:
\begin{equation*}
	\begin{array}{cccc}
		j_{N}^{\mathbb{K}}(s): & \mathcal{F}(\mathbb{U}(N,\mathbb{K})) & \to     & \left(L^{\infty}(\Omega,\mathcal{F},\mathbb{P}),\mathbb{E}\right) \\
		                       & f                                     & \mapsto & f(U_{N}^{\mathbb{K}}(s)).
	\end{array}
\end{equation*}
We recall that $\mathcal{F}(U(N,\mathbb{K}))$ is an involutive $H$-algebra, being a commutative Hopf algebra with structure morphisms:
\begin{equation*}
	\Delta(f)(U,V) = f(UV),~\mathcal{S}(f)(U)=f(U^{-1}),~\varepsilon(f)=f(I_{N}),~\star(f)=\bar{f}.
\end{equation*}
The law of $j^{\mathbb{K}}_{N}$ is invariant by conjugation by any unitary matrices in $\U(N,\K)$ since this property holds for the driving noise ${\sf W}_{N}^{\mathbb{K}}$. To prove braid-invariance for $j_{N}^{\mathbb{K}}$, it is sufficient to prove:
\begin{equation*}
	(j_{N}^{\K}(t)\times j_{N}^{\K}(s,t)\times \left[j_{N}^{\K}\right]^{-1}(t),~j^{\mathbb{K}}_{N}(t))\overset{\text{distrib.}}{=}(j_{N}^{\K}(s,t),j_{N}^{\K}(t)).
\end{equation*}
This last equation is readily implied by gauge-invariance and independence of the increments through Fubini's Theorem. We can therefore apply Theorem \ref{maintheoremzhanghol} to obtain a Holonomy field associated with $j^{\mathbb{K}}_{N}$. This field is the $\mathbb{U}(N)$-\emph{Yang-Mills field} on the plane with structure group $U(N,\mathbb{K})$; we denote it by $\Phi^{\mathbb{K}}_N$.

There are two other gauge- and braid-invariant processes associated with the unitary diffusion $\munitaryfd$. The first of these quantum processes depends on two integers $n,d \geq 1$, it extracts $d\times d$ square blocks from the matricial process $\munitaryfd$ with $N=nd$ and is defined by, for any time $t\geq 0$:
\begin{equation}
	\label{def:processrectextraction}
	\begin{array}{cccc}
		U^{\mathbb{K}}_{n,d}: & \mathcal{O}\langle n \rangle & \to     & \left(\mathcal{M}_{d}(L^{\infty}(\Omega,\mathcal{F},\mathbb{P})),~\mathbb{E}\otimes(\frac{1}{d}{\sf Tr})\right) \\
		                      & u_{ij}                       & \mapsto & \munitaryfd(i,j).
	\end{array}
\end{equation}
where for $nd \times nd$ matrix $A$ and integers $1 \leq i,j \leq n$, $A(i,j)$ is the $d\times d$ sub-matrix at position $(i,j)$ in $A$. Notice that, $U^{\mathbb{K}}_{n,d}$ is not a Lévy process over $\mathcal{O}\langle n \rangle$. However, as it is proved in \cite{nico2} \emph{it converges as $d\to +\infty$ and $n$ is maintained constant toward a free Lévy process over $\mathcal{O}\langle n \rangle$}. In the next Section we define this process as the solution of a free stochastic differential equation.

\par The third quantum process we consider extracts rectangular blocks from the process $\munitaryfd$. Let $n\geq 1$ an integer and $d_{N}=(d^{1}_{N},\ldots,d^{n}_{N})$ a partition of $N$, which means:
\begin{equation*}
	1 \leq d^{i}_{N},~ d^{1}_{N}+\ldots+d^{n}_{N} = N,~\text{ for all } 1 \leq i \leq n.
\end{equation*}
The $H$-algebra $\runitaryalg$ we call \emph{rectangular dual group} is an object in the category $\Alg^{\star}(\mathcal{R})$, we defined it in Section \ref{zhangalgebras}. Recall that $\mathcal{R}$ is an unital involutive algebra generated by a \emph{complete set} of $n$ self-adjoint projectors $\{p_i,~1\leq i \leq n\}$:
$$
p_ip_j = \delta_{i=j} p_i~,~ \sum_{i=1}^n p_i = 1.
$$
The $\mathcal{R}$-amalgamated quantum stochastic process $\runitaryfdN$ over $\mathcal{R}\mathcal{O}\langle n \rangle$ takes values in the rectangular probability space $(\mathcal{M}(N,(L^{\infty-}(\Omega,\mathcal{F},\mathbb{P},\mathbb{K})), \mathbb{E}_{d_N})$ (see \cite{nico2}) over random matrices whose entries have finite moments of any order. For an integer $1 \leq i \neq N$, we let $p_{i}^{d_n}$ be the matrix in $\mathcal{M}(d_N,\mathbb{K})$ with coefficient $(i,i)$ set to $1$ and the other entries are set to $0$ and define:
\begin{equation*}
	\begin{array}{cccc}
		\runitaryfdN: & \runitaryalg & \mapsto & \mathcal{M}_{d_{N}}(L^{\infty-}(\Omega,\mathcal{F},\mathbb{P},\mathbb{K})) \\
		              & u            & \mapsto     & \munitaryfd             \\
		              & p_{i}        & \mapsto     & p^{d_{n}}_{i}.
	\end{array}
\end{equation*}
 We will use the shorter notation: 
$$
\mathcal{M}_{d_N}=\mathcal{M}_{d_N}(L^{\infty-}(\Omega,\mathcal{F},\mathbb{P},\mathbb{K}))
$$
The bimodule morphism $\mathbb{E}_{d_N}:\mathcal{M}(d_N,\to \mathcal{R})$ we be defined in due time, in Section \ref{sec:amalgamated}. In \cite{nico2}, we have proved that the \emph{above-defined processes of square and rectangular extractions converge in non-commutative distributions}, which means that for any $w\in\mathcal{O}\langle n \rangle$ and $w^{\prime} \in \mathcal{R}\mathcal{O}\langle n \rangle$,
$$
\mathbb{E}[\frac{1}{d}{\sf Tr}(U_{n,d}^{\mathbb{K}}(w))] \textrm{ and } \mathbb{E}_{d_N}(U_{d_N}^{\mathbb{K}}))(w^{\prime})]
$$
have limits when $N$ tends to infinity while $n$ ($nd=N$ and $n$ is the number of parts of $d_N$) is maintained constant.
The limiting distributions are free (with amalgamation over $\mathcal{R}$ for the rectangular extraction) semi-groups, which are in addition braid-invariant. We use these semi-groups in the forthcoming Sections. The limiting semi-group of the square extraction process will be realised as the non-commutative distribution of a free Lévy process, solution to a free stochastic differential equation.

\subsection{Higher dimensional Master Fields}
\label{higherdimmasterfield}
In \cite{nico2}, we define a higher dimensional counterpart of the free unitary Brownian motion \cite{biane1997free} as a solution to a free stochastic differential equation. More precisely, pick an integer $n\geq 1$ and a von Neumann algebra $\mathcal{A}$ endowed with a tracial state $\tau$, ($\tau(aba^{-1}) = \tau(a),~\tau(aa^{\star}) \geq 0,~\tau(a^{\star})=\overline{\tau(a)}$, $a,b \in A$). We define $\mfreeunitary=(\mfreeunitary(t))_{t\geq 0}$, with $\mfreeunitary(t) \in \mathcal{A} \otimes \mathcal{M}_{n}(\mathbb{C})$ as the process  whose matrix entries (in the algebra $\mathcal{A}$) are the solution of the following free stochastic differential system:
\begin{align*}
	&\text{d}{\sf U}^{\langle n \rangle}_{i,j}(t)=\frac{\sf i}{\sqrt{n}}\sum_{k=1}^n \text{d}\mdnoise_{i,k}({t}){\sf U}^{\langle n \rangle}_{k,j}(t) - \frac{1}{2}{\sf U}^{\langle n \rangle}_{i,j}(t)\text{dt},~t\geq 0,~1\leq i,j \leq n, \\
	&{\sf U}^{\langle n \rangle}(0) = {I}_{n}.
\end{align*}
In the equation above for integers $1 \leq i,j \leq n$, ${\sf W}_{i,j} = {\sf W}_{j,i}$ is a \emph{free Brownian motion} and $\{{\sf W}_{i,j},~1 \leq i<j \leq n\}$ is a mutually free family of free Brownian motions. We refer to \cite{biane1997free} for the global existence of the solution ${\sf U}^{\langle n \rangle}$ and unicity.
From ${\sf U}^{\langle n \rangle}$, we build the \emph{free unitary Brownian motion of dimension $n$}; for any time $t\geq 0$, we set:
\begin{equation*}
	\freeunitary(t):\mathcal{O}\langle n \rangle \to \mathcal{A},~\freeunitary(t)(u_{ij}) = \mfreeunitary(t)(i,j),~1 \leq i,j \leq n.
\end{equation*}
We leave to the reader the proof of the following Lemma.
\begin{lemma}
    The process $U^{\langle n \rangle}$ is a Lévy process on $\mathcal{O}\langle n \rangle$ with free increments.
\end{lemma}
The above lemma implies that, for $t\geq 0$, the distribution of $\freeunitary$ defined by 
$$
\tau^{\langle n \rangle}({t}) := \tau \circ \freeunitary(t): \mathcal{O}\langle n \rangle \to \mathbb{C}
$$
satisfies for any $s\geq 0$ (see Example \ref{ex:categomonoid}, fourth item):
$$
\tau^{\langle n \rangle}({s+t}) = \tau^{\langle n \rangle}({s}) \star \tau^{\langle n \rangle}({t}).
$$
Let $V$ be a unitary matrix in $\mathcal{M}_{n}(\mathbb{C})\otimes\mathcal{A}$ and set, for any time $t\geq 0$, 
$$[\mfreeunitary]^{V}(t)=V\mfreeunitary(t)V^{-1}.$$ 
We assume further that the involutive subalgebra of $\mathcal{A}$ generated by the entries of $V$ is free from the algebra generated by the entries of $\mfreeunitary(t)$ for all times $t\geq 0$. We defer the proof of the following Lemma to the Annexes.
\begin{lemma}
    \label{lemma:gaugeinv}
     Let $t \geq 0$ a time, let $u \in \udualgroup$ and $V$ be an unitary element of $\mathcal{M}_{n}(\mathbb{C})\otimes \mathcal{A}$ free from $\{ {\sf W}_{i,j}(t),t\geq 0,1 \leq i,j \leq n \}$. Then the non-commutative process
     $
     [{\sf W}]^V(t) = V{\sf W}(t)V^{-1},~t \geq 0,
     $
     has the same distribution as ${\sf W}(t)$ : ${\sf W}$ is gauge-invariant.
\end{lemma}

\begin{corollaire} Let $t \geq 0$ a time, let $u \in \udualgroup$ and $V$ be an unitary element of $\mathcal{M}_{n}(\mathbb{C})\otimes \mathcal{A}$, with the notations introduced so far,
	\begin{equation}
		\label{gaugeinvu}
		\tau \circ [\mfreeunitary]^{V}(t)(u)= \tau \circ \mfreeunitary(t)(u),~t\geq 0,~u\in \udualgroup.
	\end{equation}
\end{corollaire}

\begin{proof}
The process ${\sf U}^{{V}}$ is the solution of the following free stochastic differential system, with obvious notation,
	\begin{equation*}
		\text{d}[\mfreeunitary]^{V}(t)=\frac{\sf i}{\sqrt{n}}\text{d}\mdnoise^{V}(t)[\mfreeunitary]^{V}(t)-\frac{1}{2}[\mfreeunitary]^{V}(t),~t\geq 0.
	\end{equation*}
\end{proof}
We saw that for a classical Levy process on the $H$-algebra of function on a group, gauge-invariance and independence of increments imply braid-invariance. It seems difficult to prove the braid-invariance of $\freeunitary$ with the same arguments. However and as already highlighted, we have proved in \cite{nico2}, see also \cite{ulrich2015construction} that $\freeunitary$ is the limit in the non-commutative distribution of the braid-invariant process $U^{\mathbb{C}}_{n,d}$. Hence $U^{\langle n \rangle}$ is braid-invariant.

\begin{remarque}
	It would be nice to have a direct proof of braid-invariance of the Lévy process $U^{\langle n \rangle}$. In the case $n=1$, this is trivial and follows from traciality, see \cite{cebron2017generalized}. However, when $n>1$, this is not sufficient. To be more precise, we would have to prove that, for any polynomial $P$ in two non-commuting variables, pairs $1 \leq i,j \leq n$ and $1 \leq k,l \leq n$ and pair of times $0 < s < t$, that
	\begin{equation}
		\tau(P({\sf U}_t^{\langle n \rangle}{\sf U}^{\langle n \rangle}_{s,t}{\sf U}^{\langle n \rangle}_t{}^{\star}(i,j),{\sf U}^{\langle n \rangle}_t(k,l)) = \tau(P({\sf U}^{\langle n \rangle}_{s,t}(i,j),{\sf U}^{\langle n \rangle}_t(k,l)))
	\end{equation}
	In the case $n=1$, one can simply write:
	\begin{align*}
	\tau(P({\sf U}_t^{\langle 1 \rangle}{\sf U}^{\langle 1 \rangle}_{s,t}{\sf U}^{\langle 1 \rangle}_t{}^{\star},{\sf U}^{\langle 1 \rangle}_t)) &= \tau(P({\sf U}_t^{\langle 1 \rangle}{\sf U}^{\langle 1 \rangle}_{s,t}{\sf U}^{\langle 1 \rangle}_t{}^{\star},{\sf U}^{\langle 1 \rangle}_t{}{\sf U}^{\langle 1 \rangle}_t({\sf U}^{\langle 1 \rangle}_t)^{\star})) \\
	&= \tau({\sf U}_t^{\langle 1 \rangle}P({\sf U}^{\langle 1 \rangle}_{s,t},{\sf U}^{\langle 1 \rangle}_t)({\sf U}^{\langle 1 \rangle}_t)^{\star})\\ 
	&= \tau(P({\sf U}^{\langle 1 \rangle}_{s,t},{\sf U}^{\langle 1 \rangle}_t))
	\end{align*}
	where we have used the fact that $\tau$ is tracial in the last equality. 
		\end{remarque}
We can apply Theorem \ref{maintheoremzhanghol}: there exists a Quantum Holonomy Field over $\mathcal{O}\langle n\rangle$ built from the higher dimensional counterpart of the free unitary Brownian motion $\freeunitary$. We state it more formally in a Proposition.
\begin{proposition}
	\label{prop:higherdimmasterfield}
	Let $n\geq 1$ an integer. There exists a Quantum Holonomy Field (see Definition \ref{definition_master_field_categorical}), which we denote by ${\sf MF}^{\langle n \rangle}$, such that for any one parameter family of growing simple loops $\gamma$ with $|\gamma_{t}|=t$ and $\mathrm{Int}(\gamma_{s}) \subset \mathrm{Int}(\gamma_{t})$ for all times $0 \leq s \leq t$, the process ${\sf MF}^{\langle n \rangle}(\gamma_{t})$ has same distribution as $\freeunitary$, the free unitary Brownian motion of dimension $n$.
\end{proposition}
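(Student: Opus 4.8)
\emph{Proof plan.} The plan is to obtain $\Phi^{\langle n \rangle}$ as the distribution of the categorical algebraic holonomy field produced by Theorem \ref{maintheoremzhanghol}, applied to the free Lévy process $\freeunitary$ on the dual Voiculescu group. First I would pin down the categorical data: take $\mathcal{C}=\Alg^{\star}$ (algebraic, with coproduct $\free$ and initial object $\C$, and inductively complete by the completeness theorem of Section \ref{monoid}), take $\mathcal{D}$ to be the category with the same objects but with positive involutive linear maps as morphisms and $F\colon\mathcal{C}\to\mathcal{D}$ the obvious wide and faithful inclusion, and take $\mathcal{B}=\Prob$ with the symmetric monoidal structure given by the free product of probability spaces, for which $\mathsf{Domain}$ sends $(A,E_{A})\otimes(B,E_{B})$ to $A\free B$ and is therefore a monoidal functor. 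The Zhang algebra is $H=\udualgroup$, which was recalled to be a Zhang algebra of $\Alg^{\star}$ with $\Delta(u_{ij})=\sum_{k}u_{ik}|_{1}u_{kj}|_{2}$, $S(u_{ij})=u_{ji}^{\star}$, $\varepsilon(u_{ij})=\delta_{ij}$.

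Next I would recall that the family $\freeunitary=(\freeunitary_{s,t})_{s\leq t}$ built from the matrix process $\mfreeunitary$ solving the free stochastic differential system is, by the results of \cite{nico2}, a $\freem$-Lévy process on $H$ in the sense of Definition \ref{Levy} (stationary, $\freem$-independent increments, $j_{s,s}=\varepsilon$), satisfying also the continuity condition $\lim_{s\to0^{+}}j_{s}(h)=\varepsilon(h)$ for $h\in H$, and that by definition its one-parameter marginals are the free unitary Brownian motion of dimension $n$. It remains to verify the two invariance hypotheses of Theorem \ref{maintheoremzhanghol}.

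\emph{Gauge invariance} is exactly the content of the Lemma proved just above: for a unitary $V$ whose entries generate an involutive subalgebra free from the one generated by the entries of $\mfreeunitary(t)$, equation \eqref{gaugeinvu} gives $\tau\circ\mfreeunitary^{V}(t)(u)=\tau\circ\mfreeunitary(t)(u)$ for all $u\in\udualgroup$; unwinding the definition of $\Omega_{c}$ and of $\mfreeunitary^{V}$, this is precisely the identity $(\phi_{H}\otimes\phi_{\mathcal{A}})\circ(\id_{H}\free j_{s,t})\circ\Omega_{c}=\phi_{\mathcal{A}}\circ j_{s,t}$, and combined with $\freem$-independence of the increments it extends to tuples, yielding hypothesis (2) of the theorem. \emph{Braid invariance} is the main obstacle. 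In contrast with the classical Yang--Mills situation of Section \ref{yangmillsfields}, where braid invariance is deduced from gauge invariance and tensor-independence of the increments through Fubini's theorem, the free increments of $\freeunitary$ do not commute and no such direct argument is available. Instead I would use that $\freeunitary$ is the limit, in non-commutative distribution, of the finite-dimensional block-extraction processes $\unitaryfdc$ (which are themselves not Lévy processes) as $d\to\infty$, a convergence established in \cite{nico2}; since braid invariance of a process is a closed condition on its joint non-commutative distributions --- the braid action $\beta\cdot(j_{s_{1},t_{1}},\ldots,j_{s_{n},t_{n}})$ is given by words in the increments and their antipodes, so the required equality is an identity of mixed moments --- and since each $\unitaryfdc$ is braid invariant, the limit $\freeunitary$ inherits braid invariance.

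Finally, with both hypotheses in place, Theorem \ref{maintheoremzhanghol} produces a categorical algebraic holonomy field $\H$ as in Definition \ref{definition_master_field_categorical}, and I would set $\Phi^{\langle n\rangle}$ to be its distribution $\{\phi_{\mathcal{A}}\circ\H_{\ell}\}_{\ell}$. The concluding clause of that theorem states that for any one-parameter family of simple loops $\gamma=(\gamma_{t})_{t\geq0}$ based at $0$ with $|\mathrm{Int}(\gamma_{t})|=t$ and $\mathrm{Int}(\gamma_{s})\subset\mathrm{Int}(\gamma_{t})$ for $s\leq t$, the process $(\H(\gamma_{t}))_{t\geq0}$ has the same non-commutative distribution as $j=\freeunitary$, which by the second step is precisely the free unitary Brownian motion of dimension $n$; this is the assertion. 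The only genuinely non-routine point is the braid-invariance step, i.e. checking that braid invariance survives the passage to the non-commutative distributional limit.
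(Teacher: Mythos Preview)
Your proposal is correct and follows essentially the same route as the paper: set up the categorical data $\mathcal{C}=\Alg^{\star}$, $\mathcal{D}$, $\mathcal{B}=\Prob$ with the free monoidal structure, take $H=\udualgroup$ and $j=\freeunitary$, invoke the gauge-invariance Lemma for hypothesis~(2), obtain braid invariance by passing it through the non-commutative distributional limit from the finite-dimensional extractions $\unitaryfdc$, and then apply Theorem~\ref{maintheoremzhanghol}. Your remark that braid invariance is a closed condition on mixed moments is exactly the justification the paper leaves implicit when it asserts that braid invariance of the approximations implies braid invariance of $\freeunitary$.
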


 Recall that by definition ${\sf MF}^{\langle n \rangle}$ is a group homomorphism from ${\sf RL}_{0}(\mathbb{R}^{2})$ to $\mathrm{Hom}_{\Alg^{\star}}(\mathcal{O}\langle n \rangle, \mathcal{A})$ and $\mathcal{A}$ is endowed with a certain tracial state $\tau_{\mathcal{A}}$. In the remaining part of this paper, we use notation $$\Phi^{\langle n\rangle}: {\sf RL}_0(\mathbb{R}^2)\to\mathbb{C},~\Phi^{\langle n\rangle}(\ell) = \tau \circ {\sf MF}^{\langle n \rangle}(\ell),~\ell \in {\sf RL}_0(\mathbb{R}^2).$$ 

 \begin{remarque}
 It is possible to extend $\Phi^{\langle n \rangle}$ to any rectifiable loop on the plane. One possible way to achieve this is to closely follow the method developed in \cite{cebron2017generalized} and to use again the result proved in \cite{nico2} about convergence in non-commutative distribution of $U^{\mathbb{K}}_{n,d}$ as the dimension $d$ of each block tends to infinity. We leave the details for future work.
  \end{remarque}
  We finish with simple computational examples.
  First, we want to compute $\Phi^{\langle n \rangle}(\ell)$, where $\ell$ is the loop that goes round the area $s$ one time in an anticlockwise manner and then goes round the outer loop enclosing the area $s+t$, also in an anticlockwise manner, see Fig. \ref{fig:exampleMM}. The decomposition into a product of lassos of this loop is given in Fig. \ref{fig:exampleMM2}. By freeness of the increments of $U^{\langle n \rangle}$,
\begin{align*}
\Phi^{\langle n \rangle}(\ell)(u_{11}) &= \tau({\sf U}^{\langle n \rangle}_s{\sf U}^{\langle n \rangle}_{s,s+t}{\sf U}^{\langle n \rangle}_{s}(u_{1,1})) \\
&=\tau({\sf U}^{\langle n \rangle}_s(u_{1,k}){\sf U}^{\langle n \rangle}_s(u_{l,1}))\tau({\sf U}^{\langle n \rangle}_{s,t}(u_{l,k}))
\end{align*}
\begin{figure}[!h]
	\centering
 \scalebox{0.4}{
	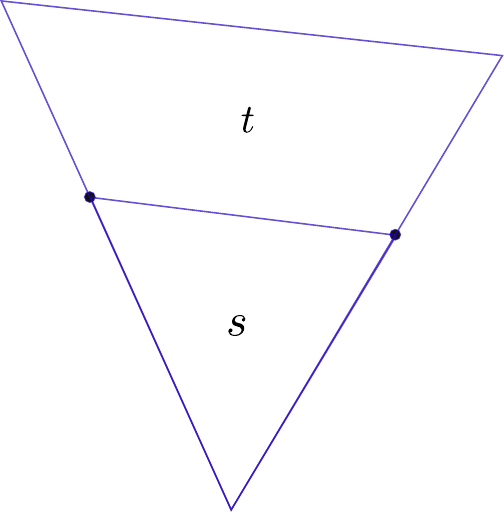
 }
	\caption{\label{fig:exampleMM} A very simple loop.}
\end{figure}
\begin{figure}[!h]
	\centering
 \scalebox{0.4}{
	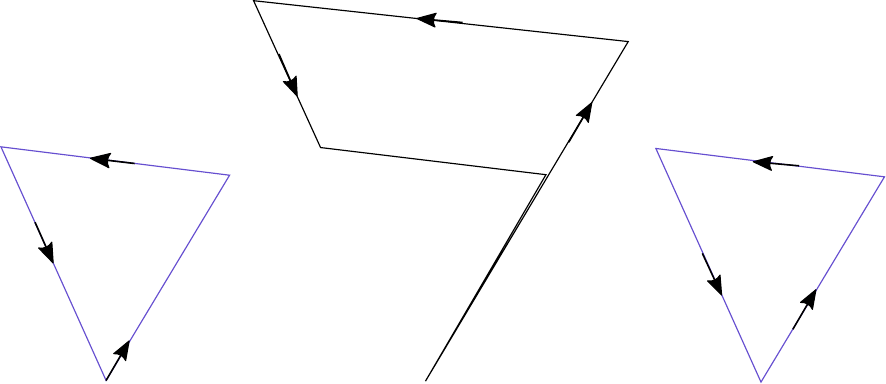
 }
	\caption{\label{fig:exampleMM2} Decomposition into lassos of the loop $\ell$.}
\end{figure}
A simple application of the free Ito formula, together with the free stochastic differential equation satisfied by ${\sf U}^{\langle n \rangle}$ shows that
\begin{equation}
	\label{eqn:expression}
\tau({\sf U}^{\langle n \rangle}_s(u_{1,k}){\sf U}^{\langle n \rangle}_s(u_{l,1})) = \frac{1}{n}(1-s)e^{-s}.
\end{equation}
From this last equation, we conclude that
$$\Phi^{\langle n \rangle}(\ell) = \frac{1}{n}n^{2}(1-s)e^{-s-\frac{t}{2}} = n(1-s)e^{-s-\frac{t}{2}}.$$
The above formula holds more generally for any couples $(a,k)$ and $(b,l)$ not just for $(1,k)$ and $(l,1)$.
We give a second example with a loop with two self-intersections; we compute the value of $\Phi^{\langle n \rangle}$ on the loop drawn on the left half of Fig. \ref{fig:second}. 
\begin{figure}[!h]
\centering
\scalebox{0.5}{
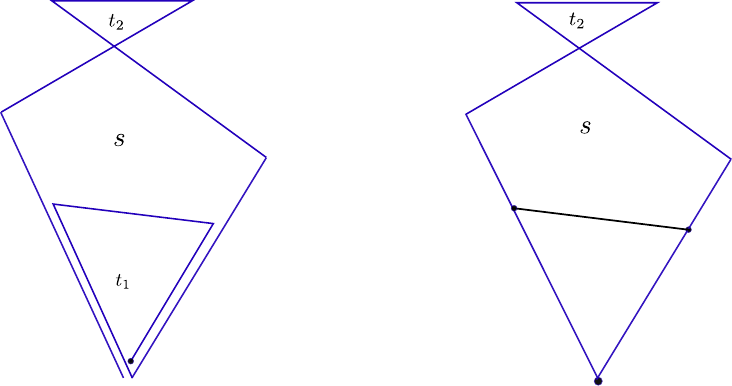
}
\caption{\label{fig:second} A second example. On the left half, the loop. We have slightly shifted the edges of the loop when they are crossed twice by the loop. On the right half is the corresponding graph.}
\end{figure}
We first decompose the loop as a product of (anticlockwise oriented) lassos going round the faces of the graph associated to the loop on the right half of Fig. \ref{fig:second}: it goes round one time in an anticlockwise manner the area $t_1$, move up and enclose the area $t_2$ in a clockwise manner and goes down to the point it started. The decomposition of this loop into a product of lassos is pictured in Fig. \ref{fig:decompoexd}.
\begin{figure}[!h]
\centering
\scalebox{0.6}{
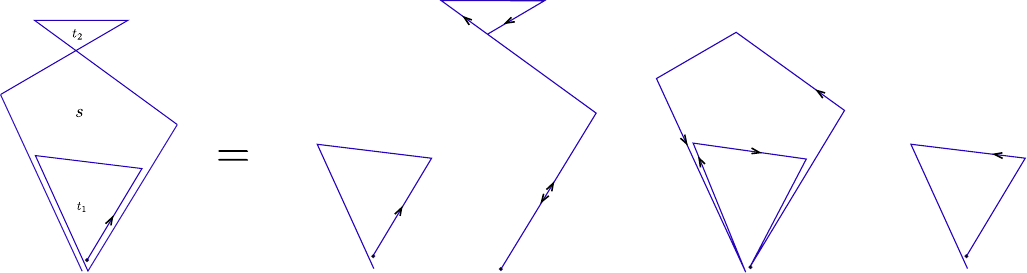
}
\caption{\label{fig:decompoexd} Decomposition of the loop on the left side as a product of lassos with disjoint bulks (on the right side.}
\end{figure}
The computation of $\Phi^{\langle n \rangle}(\ell)$ is then done in a similar way as in the previous example:
\begin{align*}
\Phi^{\langle n \rangle}(\ell)(u_{11}) &= \tau({\sf U}^{\langle n \rangle}_{t_1}(u_{1,k_1}){\sf U}^{\langle n \rangle}_{t_1,t_1+t_2}(u_{k_1,k_2}){\sf U}^{\langle n \rangle}_{t_1+t_2,t_1+t_2+s}(u_{k_2,k_3}){\sf U}^{\langle n \rangle}_{t_1}(u_{k_3,1})) \\
&=\tau({\sf U}^{\langle n \rangle}_{t_1}(u_{1,k_1}){\sf U}^{\langle n \rangle}_{t_1}(u_{k_3,1})) \tau({\sf U}^{\langle n \rangle}_{t_1,t_1+t_2}(u_{k_1,k_2})) \tau({\sf U}^{\langle n \rangle}_{t_1+t_2,t_1+t_2+s}(u_{k_2,k_3})) \\
&=\frac{1}{n}(1-t_1)e^{-t_1}\times n^3 \times e^{-\frac{t_2}{2}}e^{-\frac{s}{2}} = n^{2}(1-t_1)e^{-t_1-\frac{t_2}{2}-\frac{s}{2}}.
\end{align*}
Computing $\Phi^{\langle n \rangle}$ becomes very quickly intricated. For example, to continue making the list of the values of $\Phi^{\langle n \rangle}$ on loops with two self-intersections, we would consider the following loop in Fig \ref{fig:exampleMMt}. We apply the same method as in the two example above to compute the value of $\Phi^{\langle n \rangle}$ over this loop. This yields 
\begin{align*}
\Phi^{\langle n \rangle}(\ell)=n^2(1-t_1)(1-t-2)e^{-(t_1+t_2)-\frac{s}{2}}
\end{align*}
\begin{figure}[!h]
\centering
\scalebox{0.5}{
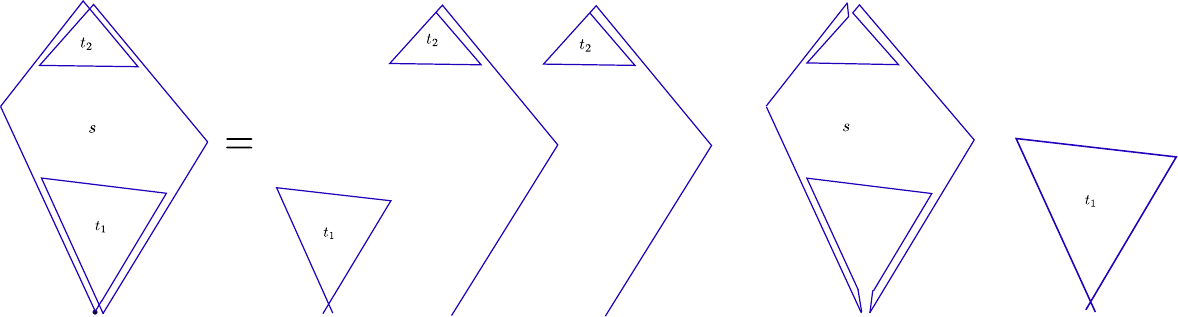
}
\caption{\label{fig:exampleMMt} A third example.}
\end{figure}
\begin{remarque}
From the previous computations, one postulates that $\varphi(\ell)$ is a monomial in $n$ of degree the number of self-intersections of $\ell$.
\end{remarque}
\subsection{Amalgamated Master Fields}
\label{sec:amalgamated}
For this Section, we will be very brief. Let $n\geq 1$ be an integer. In Section \ref{zhangalgebras} Example \ref{ex:halgebra}, we have defined the $H$-algebra $\runitaryalg$, see also the beginning of the current Section.
Let $(r_{1},\ldots,r_{n})$ be a sequence of positive real numbers such that $r_{1}+\cdots+r_{n}=1$. In \cite{nico2}, Section 6, we have introduced the free semi-group ${\sf E}_{(r_1,\ldots,r_n)}$ on the $H$-algebra $\mathcal{R}\mathcal{O}\langle n \rangle$ and prove that ${\sf E}_{(r_1,\ldots,r_n)}$ is a free with amalgamation over $\mathcal{R}$ semi-group. We explain briefly how this free semi-group is obtained.

We pick $d_N=(d_{N}^{1},\ldots,d_{N}^{n})$ an integer partition of $N$ into $n$ parts and we assume:
$$\frac{d_{N}^{i}}{N} \to r_{i},~ \mathrm{as~} N \to +\infty,~ 1 \leq i \leq n.$$ 
At the beginning of this Section, we defined the quantum process $\runitaryfdN$ on the $H$-algebra $\mathcal{R}\mathcal{O}\langle n \rangle$ extracting rectangular blocks from the unitary Brownian motion $\mathbb{U}(N,\mathbb{K})$. The bimodule algebra $\mathcal{M}_{d_N}$ over $\mathcal{R}$ is endowed with the following expectation:
\begin{equation*}
	\mathbb{E}_{d_{N}}(A) = \sum_{i=1}\frac{1}{d_{N}^{i}}\mathbb{E}\left[{\sf Tr}(p^{d_N}_{i}Ap^{d_n}_{i}) \right]p^{d_N}_{i},~ A \in \mathcal{M}_{d_{N}}.
\end{equation*}
The semi-group ${\sf E}_{(r_1,\ldots,r_n)}$ is the limit of the non-commutative distribution of the non-commutative process $\runitaryfdN$:
$${\sf E}_{d_{N}}=\mathbb{E}_{d_{N}} \circ \runitaryfdN(t)\xrightarrow{N\rightarrow +\infty} {\sf E}_{r_{1},\ldots,r_{n}}(t),~\text{ for all time } t\geq 0.$$
\par We apply our main Theorem \ref{maintheoremzhanghol} to obtain a Quantum Holonomy Field  associated with the free with amalgamation semi-group ${\sf E}_{r_1,\ldots,r_n}$ that we name \emph{amalgamated higher dimensional Master Field} with parameter $r_1,\ldots,r_n$. 
\begin{theorem}
	\label{amalgamatedfield}
	Let $r_{1},\ldots,r_{n}\geq 1$ positive real numbers summing to one. There exists a Quantum Holonomy Field which we denote ${\sf MF}^{\langle r_{1},\ldots,r_{n} \rangle}$, such that for all one-parameter growing family of simple loops $\gamma$ with $|\gamma_{t}|=t$ and $\mathrm{Int}(\gamma_{s}) \subset \mathrm{Int}(\gamma_{t})$ for all times $0 \leq s \leq t$, the process $\left({\sf MF}^{\langle r_{1},\ldots,r_{n} \rangle}(\gamma_{t})\right)_{t\geq 1}$ is a free with amalgamation over $\mathcal{R}$ quantum L\'evy process with non-commutative distribution at time $t\geq 0$ equal to ${\sf E}_{r_{1},\ldots,r_{n}}(t)$.
\end{theorem}

\section{Perspectives}
\label{sec:perspectives}
In this section, we would like to point out possible extensions of the work presented in the previous sections. We list \emph{remarks} indicating directions toward which further questions can be asked.

First, we have left open the question of extending to rectifiable loops, in the appropriate sense, the definitions constructions of Quantum and Categorical Holonomy Fields exposed above. In the case of generalized Master Fields of \cite{cebron2017generalized}, that is in our terminology in the case of Quantum Holonomy Field over the $H$-algebras $O\langle 1 \rangle$ or $\mathcal{F}(\mathbb{U}(N,\mathbb{C})$ and associated with free or classical quantum Lévy processes, this question has been answered in \cite{cebron2017generalized} in the positive. The question remains for the higher dimensional Master Field we built. It is however very likely that the method of \cite{cebron2014fluctuations} applies. In the abstract, it remains to build the appropriate framework to extend a Categorical Holonomy Field to rectifiable loops and to find the correct properties on the generator of the associated quantum Lévy process to be able to do so.

 Second, the direction of research we will address in a forthcoming paper deals with a rigorous proof of the Makeenko-Migdal equations we sketch now for the distribution ${\Phi}^{\langle n \rangle}$ of the free Master Field of dimension $n$. Again, we do not intend to give a rigorous derivation of these equations (based, for example, on the free stochastic differential that the higher dimensional free Brownian motion satisfies), but rather to provide candidates equations, supported by simple computations on a very simple configuration of loops. We will be very brief about the nature of these equations and how useful they are for computing the distribution of the Master Field ($\Phi^{\langle 1 \rangle}$ in our notations) see the very good survey \cite{https://doi.org/10.48550/arxiv.1912.06246}. The Makeenko-Migdal equations support a dynamical perspective on the computation of the distribution of the Master Field. Let $\ell$ be a reduced loop on the plane. We consider $\Phi^{\langle 1 \rangle}(\ell)$ as a function of the areas of the domains delimited by the curve $\ell$. The Makeenko-Migdal equations relate a certain combination of derivatives of $\Phi^{\langle 1 \rangle}(\ell)$ to the distribution of the Master Field evaluated on loops obtained by disentangling the initial loop at a simple intersection point, see Fig. 9 in\cite{https://doi.org/10.48550/arxiv.1912.06246}.
For the Master Field, so $n=1$, the equations read
\begin{equation}
\tag{${\sf MM}_{1}$}
\label{eqn:makeenko}
	[\frac{d}{dt_1}-\frac{d}{dt_2} + \frac{d}{dt_3} - \frac{d}{dt_4}]\Phi^{\langle n \rangle}(\ell) = \Phi^{\langle 1 \rangle}(\ell^{\prime})\Phi^{\langle 1 \rangle}(\ell^{\prime\prime})
\end{equation}
The above equations \eqref{eqn:makeenko} as the $U(\infty)$ or $\mathcal{O}\langle 1 \rangle$ Makeenko-Migdal equations. The original Makeenko-Migdal equations for the $U(N)$ Brownian Holonomy field (not just the planar sector), in any dimensions (not just in dimension two) along with a non-rigorous proof can be found in \cite{makeenko1979exact}, see \cite{kazakov1980non} for the formulation of the left-hand side of \eqref{eqn:makeenko} of the equations. The first rigorous proofs of these equations can be found in \cite{levy2008two}. Since the work of Lévy, more concise proofs appeared, see \cite{driver2017three} for the plane and \cite{driver2017makeenko} for any compact surfaces. We would propose the following equation for the distribution of the higher dimensional Master Field $\Phi^{\langle n \rangle}$:
\begin{equation}
\tag{${\sf MM}_{n}$}
	[\frac{d}{dt_1}-\frac{d}{dt_2} + \frac{d}{dt_3} - \frac{d}{dt_4}]\Phi^{\langle n \rangle}(\ell)(u_{i,j}) = \sum_{k=l}^n\Phi^{\langle n \rangle}(\ell^{\prime})(u_{i,l})\Phi^{\langle n \rangle}(\ell^{\prime\prime})(u_{l,j}) = \Phi^{\langle n \rangle}(\ell^{\prime}) \star \Phi^{\langle n \rangle}(\ell^{\prime\prime}),
\end{equation}
where $\star$ denotes the free convolution of functionals over $\mathcal{O}\langle n \rangle$.
Let us check this last equation on a very simple example, drawn in Fig. \ref{fig:exampleMM}. 
Applying the proposed Makeenko-Migdal equation to compute $\Phi^{\langle n \rangle}(\ell)$ yields the formula proved at then of Section \ref{higherdimmasterfield} since in our simple case, one obtains (taking into account that derivative with respect to the face adjacent to the unbounded face is $-\frac{1}{2}$ times $\Phi^{\langle n \rangle}$)
\begin{equation}
	(2\frac{d}{dt} - \frac{d}{ds}) \Phi^{\langle n \rangle}(\ell)(u_{1,1}) = - \Phi^{\langle n \rangle}(\ell)(u_{1,1}) - \frac{d}{ds} \Phi^{\langle n \rangle}(\ell)(u_{1,1}) =  ne^{-s-\frac{t}{2}}
\end{equation}
A strategy which would consist in essentially adapting the already existing proof using loop variables, as in \cite{driver2017makeenko}, section 3, Theorem 3.1. We should add, however, that all proofs the author is aware of, of the Makeenko-Migdal equations for the free Master Field (they are also named the $\mathbb{U}(\infty)$-Makeenko-Migdal equations) imply taking limit in the $\mathbb{U}(N)$-Makeenko--Migdal equations. To the extent of our knowledge, a direct proof of \eqref{eqn:makeenko} based on free stochastic calculus is still missing, even for the case $n=1$.

 A final question is concerned with direct proof of the braid-invariance of the distribution of the free Unitary Brownian motion of dimension $n$, a question that was left open in Section \ref{examples}.

\section{Annexes}
\subsection{Proof of Theorem \ref{thm:takeda}}
We prove Theorem \ref{thm:takeda} for the category of operator-valued probability space and recall the relevant definitions. The reader acquainted with category theory will not find any originality in the presentation of the material.


An upward-directed set is a set $S$ endowed with a preorder $\prec$ such that for any two elements $x,y \in S$ there exists a third element $z \in S$ greater to both $x$ and $y$; $x \prec z$ and $y \prec z$.
As an example, the set of finite sequences of lassos with disjoint bulks drawn on the plane equipped with $\prec$ defined in Section \ref{sec:catholofield} is an upward-directed set.

For the rest of this section, we closely follow the exposition made by Ziro Takeda in its seminal article \emph{direct limit and infinite direct products of $C^{\star}$-algebras} \cite{takeda1955inductive}. In particular, we use the outdated terminology of \emph{direct system} (a diagram in a category) over and \emph{direct limit} (colimit).

\begin{definition}
	Let $\Gamma$ be an upward-directed set. A \emph{direct system} over $\Gamma$ in a category $\mathcal{C}$ is the data of a family of objects $\{O_{\gamma},\gamma \in \Gamma\}$ in $\mathcal{C}$ and morphisms $f_{\alpha,\beta}$ for all couples $(\alpha,\beta)$ with $\alpha \prec \beta$ such that:
	\begin{enumerate}[\indent 1.]
		\item $\forall \alpha \in \Gamma$, $f_{\alpha,\alpha}=\id_{O_{\alpha}}$
		\item $\forall \alpha \prec \beta \leq \gamma$, $f_{\gamma,\beta}\circ f_{\beta,\alpha}=f_{\gamma,\alpha}$
	\end{enumerate}
\end{definition}
A direct system can alternatively be seen as a functor. In fact, the upward-directed set $\Gamma$ has associated a category, also denoted $\Gamma$ which class of objects is the set $\Gamma$. The set of homomorphisms ${\rm Hom}_{\Gamma}(\alpha,\beta)$ between two elements is either empty if $\alpha$ and $\beta$ are not comparable, either equal to the couple $(\alpha, \beta)$ if $\alpha \prec \beta$. With the notations of the last definition, the functor $O$ associated with the direct system is defined as:
\begin{equation*}
	O(\gamma) = O_{\gamma},~ O((\alpha,\beta)) = f_{\beta,\alpha}.
\end{equation*}
For the rest of this section, we fix an upward-directed set $\Gamma$.
\begin{definition}
	\label{definition_inductivelimit}
	Let $\mathcal{C}$ a category and let $O_{\alpha},~f_{\alpha,\beta}, \alpha \leq \beta$
	a direct system of $\mathcal{C}$ (A functor over the small category $\Gamma$). \emph{A direct limit} (a colimit) is the data of an object $O$ of $\mathcal{C}$ and morphisms $\phi_{\gamma}:O_{\gamma}\to O$ such that
	\begin{enumerate}[\indent 1.]
		\item $\phi_{\beta} \circ f_{\beta,\alpha} = \phi_{\alpha}$
		\item The following universal property holds. For all objects $Y \in \mathcal{C}$ and morphisms $g_{\gamma} : O_{\gamma}\to Y$ there exists a morphism $G:O\to Y$ such that the diagram in Fig. \ref{PU} is commutative for all pairs $\alpha \leq \beta$ in $\Gamma$.
		      \begin{figure}[!htb]\centering
			      \begin{tikzcd}
				      O_{\alpha} \arrow["\phi_{\alpha}"]{rd}\arrow["G_{\alpha}",swap]{rdd}\arrow["f_{\beta,\alpha}"]{rr} && O_{\beta} \arrow["\phi_{\beta}",swap]{ld}\arrow["G_{\beta}"]{ldd}\\
				      &O\arrow["G",pos=0.4]{d}& \\
				      &Y&
			      \end{tikzcd}
			      \caption{\label{PU} Universal property of the direct limit.}
		      \end{figure}
	\end{enumerate}
\end{definition}
From the universal property satisfied by direct limits, we see that a direct limit of a family of objects $\{A_{\gamma}, \gamma \in \Gamma\}$ is unique up to isomorphism. If any direct family in a category $\mathcal{C}$ admits a direct limit, we say that $\mathcal{C}$ is closed for taking direct limits, or using the language of category theory that $\mathcal{C}$ is \emph{inductively (co)complete}, see \cite{adamek2004abstract}.
\begin{theorem}
	The categories $\Prob(R)$, $\Alg^{\star}(R)$ and $\biMod(R)$ are inductively (co)complete.
\end{theorem}

\begin{proof}We prove only that $\Prob(R)$ is closed for taking direct limits. We let $\Gamma$ be a directed set and $(A_\alpha, f_{\alpha,\beta}),~\alpha,~\beta \in \Gamma$ be a direct system.
	Let $\mathcal{A}$ be the set of equivalence classes $\{[\left(\gamma,a_{\gamma}\right)],~ a_{\gamma} \in A_{\gamma},~ \gamma \in \Gamma\}$, with
	\begin{equation*}
		a_{\alpha} \sim a_{\beta} \Leftrightarrow \exists \delta \geq \alpha,\beta \textrm{ such that } f_{\delta,\alpha}(a_{\alpha}) = f_{\delta,\beta}(a_{\beta}).
	\end{equation*}
	Since the maps $f_{\gamma}~(\gamma \in \Gamma)$ are trace preserving, one has
	$$
		\tau_{\delta}(f_{\delta,\alpha}(a_{\alpha}))= \tau_{\delta}\left(f_{\delta,\beta}(a_{\beta})\right) = \tau_{F}\left(a_{F}\right) \textrm{ for all pairs } (a_{\alpha},a_{\beta}) \textrm{ with } a_{\alpha} \sim a_{\beta},
	$$
	hence, the function $(\gamma,a_{\gamma}) \rightarrow \phi_{\gamma}(a_{\gamma})$ is constant on the classes for $\sim$ and thus descends to a linear form $\tau$ on the quotient space $\bigsqcup_{F}\mathcal{A}_{F}/\sim$.
	The algebraic operations on $\mathcal{A}$ are defined as follows
	\begin{enumerate}[\indent 1.]
		\item Addition: $[x_{\alpha}] + [x_{\beta}] = [x_{\delta} + y_{\delta}]$, with $\delta \geq \alpha,~\delta \geq \beta$, $x_{\delta} = f_{\delta,\alpha}(x_{\alpha})$ and $y_{\delta} = f_{\delta,\beta}(x_{\beta})$.
		\item Multiplication: $[x_{\alpha}] \cdot [y_{\beta}] = [x_{\delta} \cdot y_{\delta}]$,
		\item Star operation: $[x_{\alpha}]^{\star} = [x_{\alpha}^{\star}]$.
		\item To define the bimodule structure on $\mathcal{A}$ over $R$, we simply set:
		      \begin{equation*}
			      r[x_{\alpha}]r^{\prime} = [rx_{\alpha}r^{\prime}],~r,r^{\prime}\in R.
		      \end{equation*}
		      In fact, if $[x_{\alpha}] = [x_{\beta}]$ with $\beta \geq \alpha$, then $x_{\beta} = f_{\beta,\alpha}(x_{\alpha})$ and $rx_{\beta}r^{\prime} = rf_{\beta,\alpha}(x_{\alpha})r^{\prime} = f(rx_{\alpha}r^{\prime})$ for $r,r^{\prime} \in  R$
	\end{enumerate}

\end{proof}
\subsection{Proof of Lemma \ref{lemma:gaugeinv}}
\begin{lemma}
     Let $t \geq 0$ a time, let $u \in \udualgroup$ and $V$ be an unitary element of $\mathcal{M}_{n}(\mathbb{C})\otimes \mathcal{A}$, with the notations introduced so far,
     $
     W^{V}(t) 
     $
     has same distribution as $W(t)$.
\end{lemma}

\begin{proof}
	
 Let $m\geq 1$ be an integer, we denote by ${\sf NC}_{2m}$ the set of matchings of the interval $\llbracket 1, 2m \rrbracket$. A matching ${\sf m} \in {\sf NC}^{(2)}_{2m}$ is alternatively seen as a non-crossing partition or as an involution of $\llbracket 1,2m \rrbracket$ verifying:
	\begin{equation*}
		\text{for all } k < l \in \llbracket 1,2m \rrbracket,~ {\sf m}(k) < {\sf m}(l).
	\end{equation*}
	We compute the cumulants of the family $\{\mdnoise^{V}(\alpha,\beta),~ 1 \leq \alpha, \beta \leq n \}$ and prove that:
	\begin{equation}
		\label{cumulantsgaugeinvariance}
		\begin{split}
			&k_{2m+1}\left(\mdnoise^{V}(\alpha_{1},\beta_{1}),\ldots,\mdnoise^{V}(\alpha_{2m+1},\beta_{\beta_{2m+1}})\right) = 0,~m\geq 1, \text{ for all } \alpha,\beta \in \{1,\ldots,n\}^{2m+1}, \\
			&k_{2m}\left(\mdnoise^{V}(\alpha_{1},\beta_{1}),\ldots,\mdnoise^{V}(\alpha_{2m+1},\beta_{\beta_{2m}})\right)=k_{2m}\left(\mdnoise(\alpha_{1},\beta_{1}),\ldots,\mdnoise(\alpha_{2m},\beta_{\beta_{2m}})\right) \\
			&\phantom{k_{2m}\left(\mdnoise^{V}(\alpha_{1},\beta_{1}),\ldots,\mdnoise^{V}(\alpha_{2m+1},\beta_{\beta_{2m}})\right)}=\sum_{{\sf m}\in{\sf NC}_{2m}} \prod_{i=1}^{2m}\delta_{\alpha_{i},\beta_{{\sf m}(i)}}\delta_{\alpha_{{\sf m}(i)},\beta_{i}},~\alpha,\beta \in \{1,\ldots,n\}^{2m}.
		\end{split}
	\end{equation}
	To compute these cumulants, we use the following fundamental formula relating cumulants with products of elements of $\mathcal{A}$ as entries to the cumulants of these elements. Let $p\geq 1$ be an integer and $k_{p}\geq 1$ be another one. We define the interval partition $\sigma = \{\{1,\ldots,k_{1}\},\{k_{1}+1,\ldots,k_{2}\},\ldots,\{k_{p-1}+1,\ldots,k_{p}\}\}$ and denote by $1_{k_{p}}$ the partition of $\llbracket 1,k_{p} \rrbracket$ with only one block. Let $a_{1},\ldots,a_{k_{p}} \in \mathcal{A}$, then:
	\begin{equation}
		\label{formulaproductcumulant}
		k_{k_{p}}(a_{1} \cdots a_{k_{1}},a_{k_{1}+1} \cdots a_{k_{2}},\ldots, a_{k_{p-1}+1} \cdots a_{k_{p}}) = \sum_{\substack{\pi \in {\sf NC}_{k_{p}} \\ \pi \vee \sigma=1_{k_{p}}}}k_{\pi}(a_{1},\ldots,a_{k_{p}}).
	\end{equation}
	By using formula $\eqref{formulaproductcumulant}$ and freeness of the matricial entries of $V$ with $\mdnoise_{t}$, it is easy to prove that the odd cumulants in \eqref{cumulantsgaugeinvariance} are equal to zero. Let $p \geq 1$ and $\alpha_{1},\ldots,\alpha_{2p} \in \{1,\ldots,n\}^{2p}$.  Consider a word $u=v_{1}w_{1}\tilde{v}_{1} \cdots v_{1}\tilde{w}\tilde{v}_{1})$ with the $v^{'s}$ and the $\tilde{v}^{s}$ in the algebra generated by the matrix coefficients of $V$ and the $w^{'s}$ in the algebra generated by the matrix coefficients of $W_{t}$. We say that an integer $i\leq 3p$ is white coloured if $u_{i}$ is equal to $v_{i}$ or $v^{\prime}_{i}$ and black coloured if $u_{i}$ is equal to $u_{i}$.

	Let $\sigma$ be the interval partition $\sigma=\{\{1,2,3\},\ldots,\{6p-2,6p-1,6p\}\}$. Let $\pi \in {\sf NC}_{6p}$ such that $\sigma\vee\pi=1_{6p}$. By using nullity of mixed cumulants having components of $V$ and $\mdnoise_{t}$ as entries, we prove that $$k_{\pi}(V(\alpha_{1},k_{1}),\mdnoise(k_{1},q_{1}),V^{\star}(q_{1},\beta_{1}),\ldots, V(\alpha_{2p},k_{2p}),\mdnoise(k_{2p},q_{2p}),V^{\star}(q_{2p},\beta_{2p}))$$
	is equal to zero if a block of $\pi$ is white a black coloured. Also, the trace of $\pi$ on the set of black coloured integers is a matching ${\sf m}$, in that case,
	\begin{equation}
		\label{tosumun}
		\begin{split}
			&k_{\pi}(V(\alpha_{1},k_{1}),\mdnoise(k_{1},q_{1}),V^{\star}(q_{1},\beta_{1}),\ldots, V(\alpha_{2p},k_{2p}),\mdnoise(k_{2p},q_{2p}),V^{\star}(q_{2p},\beta_{2p}))= \\
			&\hspace{2cm}k_{{\sf m}}(W(k_{1},q_{1}),\ldots,W(k_{2p},q_{2p}))k_{{\sf K}({\sf m},\pi)}(V(\alpha_{1},k_{1}),V^{\star}(q_{1},\beta_{1}),\ldots,V^{\star}(q_{2p},\beta_{2p})).
		\end{split}
	\end{equation}
	The non crossing partition denoted ${\sf K}({\sf m},\pi)$ is a partition of the white coloured integers of $\llbracket 1,6p \rrbracket$ and equal to the complement of ${\sf m}$ in the partition $\pi$: ${\sf m} \cup {\sf K}({\sf m},\pi)=\pi$. We sum \eqref{tosumun} over non crossing partitions having the same trace ${\sf m}\in{\sf NC}^{2}_{2p}$ over black coloured integers and over integers $k_{1},\ldots,k_{2p}$,$q_{1},\ldots,q_{2p}$ in $\{1,\ldots,n\}^{2}$. By using the moments-cumulants formula, we obtain:
	\begin{equation*}
		\begin{split}
			&\sum_{\substack{1 \leq k_{1},\ldots,k_{2p}\leq n, \\ 1 \leq q_{1},\ldots,q_{2p} \leq n}}
			\sum_{\substack{\pi\in{\sf NC}_{6p} \\ \pi_{\bullet}={\sf m}}}k_{\pi}(V(\alpha_{1},k_{1}),\mdnoise(k_{1},q_{1}),V^{\star}(q_{1},\beta_{1}),\ldots, V(\alpha_{2p},k_{2p}),\mdnoise(k_{2p},q_{2p}),V^{\star}(q_{2p},\beta_{2p})) \\
			&=\sum_{\substack{1 \leq k_{1},\ldots,k_{2p}\leq n, \\ 1 \leq q_{1},\ldots,q_{2p} \leq n}}k_{{\sf m}}(W(k_{1},q_{1}),\ldots,W(k_{2p},q_{2p}))\tau_{{\sf K}({\sf m},1_{6p})}(V(\alpha_{1},k_{1}),V^{\star}(q_{1},\beta_{1}),\ldots,V^{\star}(q_{2p},\beta_{2p}))\\
			&=\sum_{\substack{1 \leq k_{1},\ldots,k_{2p}\leq n, \\ 1 \leq q_{1},\ldots,q_{2p} \leq n}}\prod_{i=1}^{2p}\delta_{k_{i},q_{{\sf m}(i)}}\delta_{q_{i},k_{{\sf m}(i)}}\tau_{{\sf K}({\sf m},1_{6p})}(V(\alpha_{1},k_{1}),V^{\star}(q_{1},\beta_{1}),\ldots,V^{\star}(q_{2p},\beta_{2p})) \\
		\end{split}
	\end{equation*}
	To compute the right-hand side of the last equation, pick a block $V$ of the partition ${\sf K}({\sf m},1_{6p})$, then by using traciality of $\tau$, we have:
	\begin{equation}
		\sum_{\substack{1 \leq k_{1},\ldots,k_{2p}\leq n, \\ 1 \leq q_{1},\ldots,q_{2p} \leq n}}\tau_{V}(V^{\star}(q_{i_{1}},\beta_{i_{1}})V(\alpha_{i_{2}},k_{i_{2}}) \cdots V(\alpha_{i_{l}},k_{{\sf m}(i_{l})})\prod_{l}\delta_{k_{i_{l}},q_{{\sf m}(i_{l})}}\delta_{q_{i_{l}},k_{{\sf m}(i_{l})}} = \prod_{l}\delta_{\alpha_{l},\beta_{{\sf m}(l)}}\delta_{\beta_{l},\alpha_{{\sf m}(l)}}.
	\end{equation}
	The proof of the formulas \eqref{cumulantsgaugeinvariance} is now complete.
\end{proof}
\subsection{Proof of Lemma \ref{lemma_alge_alg_alg_star}}
\begin{lemma}\label{lemma_alge_alg_alg_star}
	The category $\rCoMod\mathcal C(H)$ is an algebraic category.
\end{lemma}

\label{annex:prooflemmacinq}
 It is a simple verification that $k$ is a right $H$-comodule-algebra and that for every $H$-comodule-algebra $M$, the unique morphism in $\mathcal C$ from $k$ to $\mathcal C$ satisfies \eqref{eq:morcomod}, hence is a morphism in the category of right $H$-comodule-algebras. This shows that $k$ is an initial element of $\rCoMod\mathcal C(H)$.

	We must now define a coproduct in $\rCoMod\mathcal C(H)$. Let $(M,\Omega_{M})$ and $(N,\Omega_{N})$ be two right $H$-comodules. We will endow the object $M\free N$ of $\mathcal C$ with a co-action of $H$. For this, we start from the map $\Omega_{M}\free \Omega_{N}:M\free N \to M\free H \free N\free H$. We will compose this map with a map which, informally, forgets the origin of the factors belonging to $H$. Pictorially, we want a morphism $M\free {\color{cyan} H_{|1}} \free N\free {\color{magenta} H_{|2}}\to M\free N\free H$ which sends, for instance, $n{\color{magenta} h_{|2}}n'm{\color{cyan} h'_{|1}}m'n''{\color{cyan} h''_{|1}}{\color{magenta} h'''_{|2}}$ to $nhn'mh'm'n''(h''h''')$. This map is built from the canonical maps $\iota_{M}:M\to M\free N\free H$, $\iota_{N}:N\to M\free N\free H$, $\iota_{H}:H\to M\free N\free H$ by the formula
	\[\Omega_{M\free N}=(\iota_{M}\freem \iota_{H} \freem \iota_{N} \freem \iota_{H})\circ (\Omega_{M}\free \Omega_{N}).\]
	We claim that $(M\free N, \Omega_{M\free N})$ is a coproduct of $(M,\Omega_{M})$ and $(N,\Omega_{N})$. The fact that $\Omega_{M\free N}$ is a morphism in $\mathcal C$ follows from its very definition. There remains to prove that it satisfies the equalities \eqref{relationcomod}. Let us treat the first equality in detail.

	We begin by drawing (see Fig. \ref{omegamun}) the diagram associated with the universal problem of which the pair $(M\sqcup N,\Omega_{M\sqcup N})$ is the solution. \begin{figure}[!htb]\centering
		\usetikzlibrary{cd}
		\begin{tikzcd}
			N \arrow[blue,"\Omega_{M}"]{r}\arrow{rd}& N \sqcup H \arrow{rd}\\
			& M\sqcup N \arrow[blue,"\Omega_{M\sqcup N}",shift left=0]{r}&M\sqcup N \sqcup H \\
			M \arrow[blue,"\Omega_{N}",swap]{r}\arrow{ru}& M\sqcup H \arrow{ru}
		\end{tikzcd}
		\caption{\label{omegamun}\small The universal problem solved by $(M\sqcup N,\Omega_{M\sqcup N})$.}
	\end{figure}

	Using this universal property and the fact that $\Omega_{M}$ and $\Omega_{N}$ are co-actions, we draw a second diagram (see Fig. \ref{omegadeuxpu}) in which a map  $f:M\free N \to M\free N \free H \free H$ appears, that is the coproduct of the two morphisms obtained by composition of the arrows at the top, respectively at the bottom of the diagram. Choosing the right or orange arrow in each composition does not matter, see Definition \ref{def:comodZhang}. We claim that the two maps of which we want to prove the equality, namely $(\Omega_{M\sqcup N}\sqcup \id_{H}) \circ \Omega_{M\sqcup N}$  and $(\id_{M\free N}\free \Delta) \circ \Omega_{M\sqcup N}$, are equal to this map.

	\begin{figure}[!htb]\centering
		\usetikzlibrary{cd}
		\begin{tikzcd}
			N \arrow[blue]{r}\arrow{rd}& N \sqcup H \arrow[shift right,swap,orange,"\Delta"]{r}\arrow[shift left, blue,"\Omega" ]{r}&N\sqcup \tilde{H} \sqcup H\arrow{rd}\\
			& M\sqcup N \arrow[violet,"f"]{rr}&&M\sqcup N\sqcup\tilde{H}\sqcup H \\
			M \arrow[blue]{r}\arrow{ru}& M\sqcup H\arrow[shift right,orange,swap,"\Delta"]{r}\arrow[shift left,blue]{r}& M\sqcup\tilde{H}\sqcup H \arrow{ru}
		\end{tikzcd}
		\caption{\label{omegadeuxpu}\small In this diagram, blue arrows indicate a co-action of $H$ and orange arrows a coproduct of $H$. We use the notation $\tilde H$ for the sake of clarity.
		}
	\end{figure}

	This is done by two more diagrams. The first (Fig. \ref{omegaomega1}) shows the map $(\Omega_{M\sqcup N}\sqcup \id_{H}) \circ \Omega_{M\sqcup N}$. The commutativity of this diagram and its comparison with Fig. \ref{omegadeuxpu} shows that this map is indeed equal to the map $f$.
	\begin{figure}[!htb]\centering
		\usetikzlibrary{cd}
		\begin{tikzcd}
			N \arrow[blue]{r}\arrow{rd}& N \sqcup H \arrow{rd}\arrow[blue,shift left]{r}&N\sqcup \tilde{H} \sqcup H\arrow{rd}\\
			& M\sqcup N \arrow[blue]{r}&M\sqcup N \sqcup H \arrow[blue,right]{r}&M\sqcup N\sqcup\tilde{H}\sqcup H \\
			M \arrow[blue]{r}\arrow{ru}& M\sqcup H \arrow{ru} \arrow[blue,shift right]{r}& M\sqcup\tilde{H}\sqcup H \arrow{ru}
		\end{tikzcd}
		\caption{\label{omegaomega1}\small This diagram, in which blue arrows correspond to the co-action of $H$, shows the map $(\Omega_{M\sqcup N}\sqcup \id_{H}) \circ \Omega_{M\sqcup N}$.}
	\end{figure}

	For the second map, the diagram that we draw (Fig. \ref{omegaomega2}) has four squares and the commutativity of the rightmost two needs to be checked. This is a simple verification that we leave to the reader. The equality of $(\id_{M\free N}\free \Delta) \circ \Omega_{M\sqcup N}$ with the map $f$, and the fact that $\Omega_{M\free N}$ satisfies the first equality of \eqref{relationcomod}, follows immediately.
	\begin{figure}[!htb]\centering
		\usetikzlibrary{cd}
		\begin{tikzcd}
			N \arrow[blue]{r}\arrow{rd}& N \sqcup H \arrow{rd}\arrow[orange]{r}&N\sqcup \tilde{H} \sqcup H\arrow{rd}\\
			& M\sqcup N \arrow[blue]{r}&M\sqcup N \sqcup H \arrow[orange]{r}&M\sqcup N\sqcup\tilde{H}\sqcup H \\
			M \arrow[blue]{r}\arrow{ru}& M\sqcup H \arrow{ru} \arrow[orange]{r}& M\sqcup\tilde{H}\sqcup H \arrow{ru}
		\end{tikzcd}
		\caption{\label{omegaomega2}\small In this diagram, orange arrows correspond to the coproduct of $H$. We see the map $(\id_{M\free N}\free \Delta) \circ \Omega_{M\sqcup N}$ in the middle line. }
	\end{figure}

	The proof of the second equality of \eqref{relationcomod} is similar and simpler than the proof of the first, and we leave it to the reader.

	We are not only stating that the coproduct of two comodules is a comodule, but also that the category of all comodules over the $H$-algebra $H$ is algebraic. Three points remain to be proved. The first is the equivariance of the coproduct of two equivariant morphisms. The second is the fact that $k$, the initial object of $\mathcal C$, can be endowed with a co-action of $H$. The third is that, with respect to this co-action, the unique morphism from $k$ to any comodule-algebra $M$ is equivariant.

	Let us discuss the first point. Let $(M,\Omega_{M}), (N,\Omega_{N}), (C,\Omega_{C})$ be three right comodules algebras. Let $f: M\to C$ and $g:N\to C$ be two morphisms of the category rCoMod$\mathcal{C}(H)$. We claim that $f\freem g$ is equivariant with respect to co-actions $\Omega_{M\free N}$ and $\Omega_{C}$, which means that the equality $\Omega_{C}\circ (f\freem g) =(\id_{H} \sqcup (f\freem g))\circ \Omega_{M\free N}$ holds. Fig. \ref{equivariancepu} shows a diagram in which, as before, blue arrows indicate the co-action of $H$. The equivariance of $f \freem g$ is equivalent to the commutativity, in this diagram, of the face delimited by the grey bended arrow and the horizontal symmetry axis of the diagram. This commutativity property will be implied by the commutativity of the two outer faces bounded by the violet and grey arrows. This commutativity, in turn, is implied by the associativity of the free product, drawn in Fig. \ref{associativityprod}.
	\begin{figure}[!htb]\centering
		\usetikzlibrary{cd}
		\begin{tikzcd}[every matrix/.append style={name=m},
				execute at end picture={
						\draw [<-,>=latex] ([yshift=7mm,xshift = 6mm]m-3-1.east) arc[start angle=-180,delta angle=270,radius=0.15cm];
						\draw [<-] ([yshift=7mm,xshift = 5mm]m-3-2.east) arc[start angle=-90,delta angle=270,radius=0.15cm];
						\draw [<-] ([yshift=2mm,xshift = 11mm]m-4-2.east) arc[start angle=-90,delta angle=270,radius=0.15cm];
						\draw [<-] ([yshift=-6mm,xshift = 17.4mm]m-3-1.south) arc[start angle=-90,delta angle=270,radius=0.15cm];
					}]
			&\arrow[violet]{ldd}\arrow["f \sqcup\mathrm{id}_{H}",violet]{rrdd}M\sqcup H&&&\\
			&M\arrow{u}\arrow{d}\arrow["f" ]{rd}&&&\\
			M\sqcup N\sqcup H\arrow["f\sqcup g \sqcup \mathrm{id}_{H}",swap,pos=0.3, bend right = 130,gray]{rrr}&\arrow[blue]{l} M \sqcup N \arrow{r}&C\arrow[blue]{r}&C \sqcup H
			\\
			&N\arrow[swap,"g"]{ru}\arrow{d} \arrow{u}\\
			&N\sqcup H\arrow[swap,"g \sqcup\mathrm{id}_{H}",violet]{rruu}\arrow[violet]{luu}
		\end{tikzcd}
		\caption{\label{equivariancepu}\small Equivariance of the map $f \freem g$. The morphisms drawn in violet and grey are equal to the morphisms drawn with the same colour in Figure \ref{associativityprod}.}
	\end{figure}
	\begin{figure}[!htb]\centering
		\usetikzlibrary{cd}
		\begin{tikzcd}[every matrix/.append style={name=m},
				execute at end picture={}]
			&C\sqcup H \\
			\\
			M\sqcup H\arrow[violet]{rd} \arrow[violet,shorten >= 3mm]{ruu}         &H\arrow{d}\arrow{l}\arrow{r} \arrow[]{uu}                &   N\sqcup H\arrow[violet]{ld}\arrow[violet,shorten >= 3mm]{luu}\\
			M \arrow{r} \arrow{u}\arrow[crossing over,shorten >=3mm]{uuur} & M \sqcup N \sqcup H\arrow[gray,xshift=-2.5mm]{uuu}\arrow[gray,xshift=2.5mm]{uuu} & N\arrow{l}\arrow{u}     \arrow[crossing over,shorten >=3.5mm]{uuul}            \\
			& M\sqcup N\arrow{u}\arrow{ur}\arrow{lu} & \\
		\end{tikzcd}
		\caption{\label{associativityprod}\small Associativity of the free product.}
	\end{figure}

	The second and third points concern the initial object $k$. There exists an unique morphism $\eta_{k\free H}:k \to k\free H$ and we prove that $(k,\eta_{k\free H})$  is an object of $\rCoMod\mathcal{C}(H)$. The two morphisms $(\id_{H}\free \eta_{k\free H})\circ\eta_{k\free H}$ and $(\Delta \free \id_{H}) \circ \eta_{k\free H}$ are equal because there is a unique morphism from $k$ to $k\free H\free H$. For the same reason, $(\varepsilon \free \id_{k}) \circ \eta_{k\free H}=\id_{k}$. The third point, that the unique map from $k$ to any comodule-algebra $M$ is equivariant, also follows from the same argument. The proof is complete.

\bibliographystyle{elsarticle-num}

\bibliography{biblio}

\end{document}